\newtheorem{theorem}{Theorem}
\newtheorem{lemma}{Lemma}
\newtheorem{algorithm}{Algorithm}
\numberwithin{equation}{section}
\numberwithin{proposition}{section} \numberwithin{lemma}{section}
\numberwithin{theorem}{section}
\newcommand{\bm}[1]{\mbox{\boldmath$ #1 $\unboldmath}}
\newcolumntype{x}[1]{>{\raggedright}p{#1}}
\newcolumntype{V}{>{\centering\arraybackslash} m{1cm}}
\newcolumntype{C}{>{\centering\arraybackslash} m{6cm}}
\def\tr{{\rm tr}}
\DeclareMathOperator*{\argmin}{arg\,min}
\newcommand{\nx}[1]{{\color{black}{#1}}}
\begin{document}
\baselineskip=22pt
\vskip 10pt

\begin{center}
\nx{\Large \bf Robust Estimation of Sparse Precision Matrix using Adaptive Weighted Graphical Lasso Approach}\\
\vskip 5pt
Peng Tang$^{\dag}$, Huijing Jiang$^{\ddag}$, Heeyoung Kim$^{\S}$,  and Xinwei Deng$^{\star}$\footnote{Address for correspondence: Xinwei Deng, Associate Professor, Department of Statistics,
Virginia Tech, Blacksburg, VA 24061 (E-mail: xdeng@vt.edu).} \\
\vskip 5pt
$^{\dag}$AI, Cruise LLC\\
$^{\ddag}$Global Biometrics and Data Sciences, Bristol Myers Squibb\\
$^{\S}$Department of Industrial \& Systems Engineering, KAIST\\
$^{\star}$Department of Statistics, Virginia Tech
\end{center}


\begin{abstract}
Estimation of a precision matrix (i.e., inverse covariance matrix) is widely used to exploit conditional independence among continuous variables.
The influence of abnormal observations is exacerbated in a high dimensional setting as the dimensionality increases.
In this work, we propose robust estimation of the inverse covariance matrix based on an $l_1$ regularized objective function with a weighted sample covariance matrix.
The robustness of the proposed objective function can be justified by a nonparametric technique of the integrated squared error criterion.
To address the non-convexity of the objective function, we develop an efficient algorithm in a similar spirit of majorization-minimization.
Asymptotic consistency of the proposed estimator is also established.
The performance of the proposed method is compared with several existing approaches via numerical simulations.
We further demonstrate the merits of the proposed method with application in genetic network inference.
\end{abstract}

\textbf{Keywords:} integrated squared error, inverse covariance matrix, robustness, undirected graph, weighted graphical lasso.

\section{Introduction}
Estimation of the inverse covariance matrix has attracted wide attention in various applications such as genetic network inference.
Under the assumption of Gaussian distribution, it is well known that the estimation of undirected Gaussian graphs is equivalent to the estimation of inverse covariance matrices (Whittaker, 1990; Lauritzen, 1996).
Let ${\bf x}=(X^{(1)}, \ldots, X^{(p)})^T \sim N_p(\bm \mu, \bm \Sigma)$ be a $p$-dimensional Gaussian random vector with mean $\bm \mu$ and covariance matrix $\bm \Sigma$.
The inverse covariance matrix ${\bm\Omega}= {\bm \Sigma}^{-1}\equiv(c_{ij})_{1\leq i,j\leq p}$ contains information related to the conditional dependency between variables, which are referred to as nodes in a graph representation.
Each off-diagonal entry of $\bm \Omega$ represents an edge of two nodes in the corresponding graph.
If $c_{ij}=c_{ji}=0$, then $X^{(i)}$ and $X^{(j)}$ are conditionally independent given the other variables $X^{(k)},$ $k\neq i,j$; otherwise, $X^{(i)}$ and $X^{(j)}$ are conditionally dependent.

Various estimation methods for a sparse inverse covariance matrix have been proposed in the literature.
One direction is to exploit the connections between entries of the inverse covariance matrix and  coefficients of the corresponding multivariate linear regressions (Pourahmadi, 2000; Huang et al., 2006).
For example, Meinshausen and B{\"u}hlmann (2006) applied a neighborhood selection scheme by regressing each variable on the remaining variables.
Recently, regularized likelihood-based approaches have been proposed by different researchers, see Banerjee et al.\,(2006), Yuan and Lin\,(2007), and Friedman et al.\,(2008), Lam and Fan (2009), Deng and Yuan\,(2009), Kang and Deng\,(2020), among many others.
These methods identify zeros in the inverse covariance matrix through the penalized log-likelihood functions.
Rather than directly using the likelihood function, Cai et al.\,(2011) investigated a constrained $l_1$ minimization method to estimate the sparse inverse covariance matrix as well as its convergence properties.

The aforementioned approaches often can work effectively when there are no outliers in the observations.
However, when abnormal observations contaminate the data, the estimation of inverse covariance matrix can be sensitive to the presence of outliers.
Especially in the high-dimensional setting, where dimension $p$ is relatively large compared to the sample size $n$, it is non-trivial to check whether the observations follow the underlying model assumptions.
Several methods were proposed to address this concern.
Pe{\'n}a and Prieto\,(2001) presented a robust estimator for the covariance matrix based on the analysis of the projected data.
Miyamura and Kano\,(2006) introduced a robustified likelihood function to estimate the inverse covariance matrix, and they also proposed corresponding test statistics associated with the robustified estimators. The above two methods did not consider the sparsity in the estimates. Later,
Finegold and Drton\,(2011) proposed a so-called tlasso method by adopting multivariate $t$-distribution for robust inferences of graphical models.
Sun and Li (2012) developed an $l_1$ regularization procedure based on density power divergence and used coordinate descent algorithms for robust estimation of the inverse covariance matrix.
Other robust approaches (\"{O}llerer and Croux, 2015,  Tarr, M\"{u}ller, and Weber, 2016, and Loh and Tan, 2018) consider to use a robust estimate of the covariance matrix to replace the sample covariance matrix in solving the glasso (Friedman et al., 2008).

In this work, we propose robust estimation of a sparse precision matrix via maximizing an $l_1$ regularized objective function with weighted sample covariance matrix. The robustness of our proposed method can be justified from a nonparametric perspective.
Specifically, the proposed loss function can be viewed as approximation of the integrated squared error (ISE) criterion, which aims at minimizing the $L_2$ distance between the probability densities using the estimated and true inverse covariance matrices.
This property naturally enables the proposed estimate more resistent to outliers compared to likelihood-based approaches.
Moreover, we encourage the sparsity by adding an $l_1$ penalty in the objective function, in a similar spirit as the Lasso (Tibshirani, 1996) for sparse regression.
Note that the proposed objective function is not convex and thus can be computationally inefficient.
To address this challenge, we develop an efficient algorithm from the majorization-minimization aspects to find an upper bound for the original objective function.
Consequently, the estimation problem can be transformed into an iterative estimation procedure in the content of graphical lasso (Friedman et al., 2008) with an appropriate weighting scheme on observations.
The sequence of weights in each iteration is adaptively updated to automatically account for potential abnormal observations.
We call the proposed method ``\textit{weighted Graphical Lasso}" \nx{(WGLasso)}.

The rest of the work is organized as follows. In Section \ref{sec:method}, we introduce the robust inverse covariance matrix estimate and develop a computationally efficient algorithm for it.
Section \ref{sec:simulation} provides a series of simulations to evaluate the performance of the proposed method.
We continue in Section \ref{sec:case} with a real application using the proposed robust estimate.
A discussion and concluding remarks are given in Section \ref{sec:concl}. The proof of the theorem is deferred to Appendix.

\section{Robust Estimation for Sparse Precision Matrix}\label{sec:method}
In this section, we describe in details the proposed robust estimator of the sparse inverse covariance matrix.
Suppose that ${\bf x}_{1},\ldots, {\bf x}_{n}$ are a random sample of $n$ observations on ${\bf x}=(X^{(1)}, \ldots, X^{(p)})^T$, each of which follows a multivariate normal distribution $N_p(\bm \mu, \bm \Sigma)$ with mean $\bm \mu$ and nonsingular covariance matrix $\bm \Sigma$. Without loss of generality, we assume that $\bm \mu = \bf 0$.

\subsection{Proposed Weighted Graphical Lasso}
To estimate the inverse covariance matrix ${\bf \Omega} = {\bf \Sigma}^{-1} \equiv (c_{ij})_{p \times p}$, we consider a minimization problem as
\begin{align}\label{eq:obj-wgl-3rd}
\min_{\bf \Omega} \left[-\log |{\bf \Omega}| + \tr[{\bf \Omega S}^{*}] + \rho\|{\bf \Omega}\|_{1} \right],
\end{align}
where $\rho \ge 0$ is a tuning parameter, which is to balance the goodness of fit and the sparsity of the estimate. Note that ${\bf S}^{*} = {\bf S}^{*}({\bf \Omega}^{(0)})\equiv \frac{1}{n}  \sum_{i = 1}^{n} w_{i}{\bf x}_{i}{\bf x}_{i}^{T}$ can be viewed as a weighted sample covariance matrix given an initial estimate ${\bf \Omega}^{(0)}$. Here the weight $w_i$ can be written as
\begin{align}\label{eq: weight}
w_{i} = w_{i}({\bf \Omega}^{(0)}) \equiv \frac{ \exp(-\frac{1}{2} {\bf x}_{i}^{T} {\bf \Omega}^{(0)}
	{\bf x}_{i})}{\frac{1}{n} \sum_{j =1}^{n} \exp(-\frac{1}{2} {\bf
		x}_{j}^{T} {\bf \Omega}^{(0)} {\bf x}_{j})}.
\end{align}
The form of the objective function in \eqref{eq:obj-wgl-3rd} is similar to the negative log-likelihood function but with a weighted sample covariance matrix.
The optimization of \eqref{eq:obj-wgl-3rd} can be efficiently solved using the $glasso$ algorithm developed by Friedman et al. (2008).
Therefore, we refer to the solution of (\ref{eq:obj-wgl-3rd}) as the {\it weighted Graphical Lasso} (WGLasso).
It is worth pointing out that the derived objective function in (\ref{eq:obj-wgl-3rd}) bears some similarity to the objective function in the M-step of the EM algorithm of the tlasso method (Fingegold and Drton, 2011).
The tlasso considered the weighted sample covariance with the weights derived based on the EM algorithm, whereas the proposed method provides a weighted Graphical Lasso algorithm with the weights derived based on the ISE criterion (Rudemo 1982) as discussed in Section \ref{sec: robust-motivation}.

The weight in \eqref{eq: weight} also has a meaningful interpretation.
One can see that the weights $w_{i}$ can be rewritten as $w_{i} = f({\bm x}_{i}) / [\frac{1}{n} \sum_{j=1}^{n} f({\bm x}_{j})]$, where $f({\bm x}_{i})$ is the probability density function (pdf) of $N_p(\bm  0,  (\bm\Omega^{(0)})^{-1})$ when ${\bm x}={\bm x}_i$.
The form of ${\bf S}^{*}$ assigns different weights to the observations ${\bm x}_{i}$ based on the ratio of its pdf value relative to the average density.
If an observation is an outlier from the underlying distribution, its pdf value tends to be smaller compared to that of the average; thus its assigned weight would be smaller.
This helps us to reduce the influence from outliers on the proposed estimate.

Note that the expression in \eqref{eq:obj-wgl-3rd} depends on the initial estimate ${\bm \Omega}^{(0)}$ in order to calculate the weights and corresponding ${\bm S}^*$. We iteratively minimize the objective function (\ref{eq:obj-wgl-3rd}) to obtain the estimate of ${\bm \Omega}$. The detailed description of the iterative algorithm is as follows:

\begin{algorithm}\label{wgl}
\noindent

{\bf Step 1}: Obtain an initial estimate $\bm \Omega^{(0)}$, a positive definite matrix, and set the stopping threshold $\delta=10^{-6}$.

{\bf Step 2}: Compute $w_{i}$ according to \eqref{eq: weight} and obtain ${\bm S}^{*} = \frac{1}{n}  \sum_{i = 1}^{n} w_{i}{\bf x}_{i}{\bf x}_{i}^{T}$.

{\bf Step 3}: Estimate $\bm \Omega$ by minimizing \eqref{eq:obj-wgl-3rd}, i.e.,
\begin{align*}
\hat{\bf \Omega} = \arg \min_{\bf \Omega} \left [-\log |{\bf \Omega}| + \tr[{\bf \Omega S}^{*}]+ \rho\|{\bf \Omega}\|_{1} \right ].
\end{align*}

{\bf Step 4}: Stop if the Frobenius norm $\|\hat{\bf \Omega} - {\bf \Omega}^{(0)}\|_{F}^{2} \le \delta$; otherwise, set ${\bf \Omega}^{(0)} = \hat{\bf \Omega}$ and go back to  {\bf Step 2}.
\end{algorithm}

We generally take the inverse of the sample covariance matrix as an initial estimate ${\bf \Omega}^{(0)}$.
If $p>n$, the sample covariance matrix is singular, and we add a small perturbation to its diagonal elements.

\subsection{Justification of Robustness via Integrated Squared Error}\label{sec: robust-motivation}
In this section, we provide some justification of robustness for the proposed estimate from the scope of using the ISE as a loss function.
The ISE criterion (Rudemo, 1982) is to measure the $L_2$ distance between the estimated probability density function and the true density.
Scott (2011) investigated various parametric statistical models by using this criterion as a theoretical and practical estimation tool. The key idea in Scott (2011) is to apply nonparametric techniques to parametric models for the robustness of the estimation.
Based on the ISE as a loss function, one can consider the following minimization for the estimation of the inverse covariance matrix.
\begin{align}\label{eq:L2E-der}
&\int (f({\bf x}|{\bf \Omega}) - f({\bf x}|{\bf \Omega}_{*}))^2 \, d{\bf x} \nonumber \\
=& \int f^2({\bf x}|{\bf \Omega}) \, d{\bf x} - 2 \int f({\bf x}|{\bf \Omega})f({\bf x}|{\bf \Omega}_{*})\,d{\bf x} + \int f^2({\bf x}|{\bf \Omega}_{*}) \, d{\bf x},
\end{align}
where $f({\bf x}|{\bf \Omega})$ is the pdf of the normal distribution $N_p(\bm 0, \bf \Omega^{-1})$ and $\bf \Omega_{{*}}$ is the true but unknown inverse covariance matrix.
Note that the third term on the right hand side of \eqref{eq:L2E-der} does not depend on $\bf \Omega$.
Additionally, the second integral in $\eqref{eq:L2E-der}$ can be viewed as $\mathbb{E}(f(\bf x|\bf \Omega))$ with respect to ${\bf x}$.
With the observed data available, one can approximate $\mathbb{E}(f(\bf x|\bf \Omega))$ by the empirical mean $\frac{1}{n}\sum_{i=1}^n f({\bf x}_{i}|\bf \Omega)$ (Scott (2001), Chi and Scott, (2014)).
Now we reformulate the ISE as
\begin{eqnarray}\label{eq:L2E-loss}
L_{n}(\bf \Omega) &\equiv& \int f^2({\bf x}|{\bf \Omega}) \, d{\bf x} - 2\int f({\bf x}|{\bf \Omega})f({\bf x}|{\bf \Omega}_{*})\, d{\bf x}  \nonumber \\
                  & \approx & \int f^2({\bf x} |{ \bf \Omega}) \, d{\bf x} -  \frac{2}{n}\sum_{i = 1}^{n}f({\bf x}_{i}|{\bf \Omega}) \nonumber \\
                  &=& \frac{1}{2^{p/2}}|{\bf \Omega}|^{1/2} - \frac{2}{n}|{\bf \Omega} |^{1/2}\sum_{i = 1}^{n}  \exp\bigg(-\frac{1}{2} {\bf x}^{T}_i {\bf \Omega} {\bf x}_i \bigg ),
\end{eqnarray}
up to constant $(2\pi)^{-p/2}$.
To encourage the sparsity in the estimate, one can consider the $l_1$ regularization for estimating $\bf \Omega$ as
\begin{align}\label{eq:obj-fun}
L_{n,\rho}({\bf \Omega}) &\equiv L_{n}({\bf \Omega})+ \rho \|{\bf \Omega}\|_{1} \nonumber \\
&\propto -\frac{2}{n} |{\bf \Omega}|^{1/2}\sum_{i =1}^{n}\exp\bigg (-\frac{1}{2} {\bf x}^{T}_{i} {\bf \Omega} {\bf x}_{i}\bigg ) + \tau |{\bf \Omega}|^{1/2} +  \rho \|{\bf \Omega} \|_{1},
\end{align}
where $\|{\bf \Omega}\|_{1} = \sum_{i,j} |v_{ij}|$ is the $l_{1}$ penalty for ${\bf \Omega}=(v_{ij})_{p\times p}$. Here $\tau = 2^{-p/2}$.

It is also feasible to convert the problem into a constrained optimization by the use of Lagrange multiplier.
That is, we can reformulate the above optimization in \eqref{eq:obj-fun} as
\begin{align}\label{eq:obj-ub-con}
\underset{{\bf \Omega}}{\min} & -\frac{2}{n} |{\bf \Omega}|^{1/2}\sum_{i =1}^{n}\exp\left(-\frac{1}{2} {\bf x}^{T}_{i} {\bf \Omega  x}_{i}\right ), \\
s.t. \quad  &  |{\bf \Omega}|^{1/2} \le M_{1}, \nonumber \\
&  \|{\bf \Omega} \|_{1} \le M_{2}. \nonumber
\end{align}
The following lemma provides an inequality between the matrix determinant and its $l_1$ norm.
\begin{lemma}\label{them:bound}
For a positive definite matrix $\bf \Omega$ with dimension $p$, the relationship between its determinant value and $l_1$ norm can be described in the following inequality
\begin{align*}
|{\bf \Omega}|^{1/2} \le  \left ( \frac{ \|{\bf \Omega} \|_{1} }{p} \right )^{p/2}.
\end{align*}
\end{lemma}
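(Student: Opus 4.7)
The plan is to reduce the determinant bound to a trace bound via AM–GM on the eigenvalues, and then bound the trace by the elementwise $\ell_1$ norm by keeping only the diagonal entries.

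First, I would use spectral information: since ${\bf \Omega}$ is positive definite, all its eigenvalues $\lambda_1,\dots,\lambda_p$ are strictly positive, and $|{\bf \Omega}| = \prod_{i=1}^p \lambda_i$ while $\tr({\bf \Omega}) = \sum_{i=1}^p \lambda_i$. The AM–GM inequality on the positive numbers $\lambda_i$ then gives
\[
|{\bf \Omega}|^{1/p} = \left(\prod_{i=1}^p \lambda_i\right)^{1/p} \le \frac{1}{p}\sum_{i=1}^p \lambda_i = \frac{\tr({\bf \Omega})}{p},
\]
and raising both sides to the power $p/2$ yields $|{\bf \Omega}|^{1/2} \le (\tr({\bf \Omega})/p)^{p/2}$.

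Next, I would relate the trace to the $\ell_1$ norm. Because ${\bf \Omega}=(v_{ij})$ is positive definite, every diagonal entry satisfies $v_{ii}>0$, so $v_{ii}=|v_{ii}|$ and $\tr({\bf \Omega}) = \sum_{i=1}^p |v_{ii}| \le \sum_{i,j=1}^p |v_{ij}| = \|{\bf \Omega}\|_1$. Chaining this with the previous display gives the stated inequality
\[
|{\bf \Omega}|^{1/2} \le \left(\frac{\tr({\bf \Omega})}{p}\right)^{p/2} \le \left(\frac{\|{\bf \Omega}\|_1}{p}\right)^{p/2}.
\]

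There is really no ``main obstacle'' here—the lemma is a two-line consequence of AM–GM plus the positivity of the diagonal of a positive definite matrix. The only small thing worth flagging is that the inequality is tight on scalar multiples of the identity (where AM–GM is tight and the off-diagonal entries contribute nothing to $\|{\bf \Omega}\|_1$), so no sharper bound of this form can be hoped for without extra structural assumptions.
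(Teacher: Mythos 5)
Your proof is correct, but it takes a genuinely different route from the paper. The paper proves Lemma \ref{them:bound} via Gershgorin's circle theorem (Horn and Johnson, 1996): each eigenvalue of $\bf \Omega$ is trapped in a disc centered at a diagonal entry $v_{ii}$ with radius $\sum_{j \ne i}|v_{ij}|$, so every eigenvalue is bounded by an absolute row sum, and the determinant is then bounded by the product of row sums, which AM--GM converts into $\left(\|{\bf \Omega}\|_1/p\right)^{p}$. You instead apply AM--GM directly to the eigenvalues to get $|{\bf \Omega}|^{1/p} \le \tr({\bf \Omega})/p$, and then use positivity of the diagonal to bound $\tr({\bf \Omega}) \le \|{\bf \Omega}\|_1$. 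Your argument is more elementary (no eigenvalue localization needed) and in fact delivers the strictly sharper intermediate inequality $|{\bf \Omega}|^{1/2} \le \left(\tr({\bf \Omega})/p\right)^{p/2}$, of which the stated bound is a weakening; the Gershgorin route, by contrast, naturally exposes how the off-diagonal mass enters through the row sums, which is closer in spirit to how the $l_1$ norm is used elsewhere in the paper, but it yields the same final bound. Your remark on tightness at scalar multiples of the identity is also correct, since there both AM--GM and the trace-to-$l_1$ step hold with equality.
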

The proof of Lemma \ref{them:bound} is based on Gershgorin's circle theorem (Horn and Johnson, 1996).
Using Lemma 2.1, we can transform the nonlinear constraint on matrix determinant to a linear constraint.
Specifically, the minimization under consideration becomes
\begin{align}\label{eq:obj-ub-con}
\underset{{\bf \Omega}}{\min} & -\frac{2}{n} |{\bf \Omega}|^{1/2}\sum_{i =1}^{n}\exp\left(-\frac{1}{2} {\bf x}^{T}_{i} {\bf \Omega  x}_{i}\right ), \\
s.t. \quad  &  \left ( \frac{ \|{\bf \Omega} \|_{1} }{p} \right )^{p/2} \le M_{1}, \nonumber \\
&  \|{\bf \Omega} \|_{1} \le M_{2}, \nonumber
\end{align}
where $M_{1} \ge 0$ and $M_{2} \ge 0$. This is a commonly used technique in optimization for formulating the constraints (Boyd and Vandenberghe, 2004).
The optimization in \eqref{eq:obj-ub-con} can be written as
\begin{align}\label{eq:obj-ub-con2}
\underset{{\bf \Omega}}{\min} & -\frac{2}{n} |{\bf \Omega}|^{1/2}\sum_{i =1}^{n}\exp\left(-\frac{1}{2} {\bf x}^{T}_{i} {\bf \Omega  x}_{i}\right ), \\
s.t. &  \|{\bf \Omega} \|_{1} \le M_{*}, \nonumber
\end{align}
where $M_{*} = \min\{ M_{1}^{-p/2}, M_{2}\}$ is a tuning parameter, which will be determined by proper cross-validation as described in Section 2.4.
We remark that the problems (2.5) and (2.8) are not equivalent since the matrix determinant constraint in (2.6) is approximated by the linear constraint via the use of Lemma 1.
The underlying reasoning for approximating (2.5) by (2.8) is to mitigate the nonlinear constraint of the matrix determinant term.

From the fact that the logarithm is a monotonically increasing function, solving \eqref{eq:obj-ub-con2} is equivalent to minimizing the objective function in \eqref{eq:obj-wgl-1st} as
\begin{align}\label{eq:obj-wgl-1st}
& \min_{{\bf \Omega}}   - \log \left [  |{\bf \Omega}|^{1/2} \times \frac{1}{n} \sum_{i =1}^{n}\exp\left(-\frac{1}{2} {\bf x}^{T}_{i} {\bf \Omega x}_{i}\right) \right ] \\
& s.t. \quad  \|{\bf \Omega} \|_{1} \le  M_{*}. \nonumber
\end{align}
Note that the objective function \eqref{eq:obj-wgl-1st} is to take logarithm of the objective function in \eqref{eq:obj-ub-con2} up to some constant.
Here the $l_1$ constraint is to encourage the sparsity for inverse covariance matrix estimate (Tibshirani, 1996; Yuan and Lin, 2007).

To shed some lights on the robustness of the proposed method, we note that since the pdf of a normal distribution is proportional to $|{\bf \Omega}|^{1/2}\exp\left(-\frac{1}{2} {\bf x}^{T} {\bf \Omega x} \right)$ up to some constant, the proposed method by solving \eqref{eq:obj-wgl-1st} essentially attempts to maximize the summation of the probabilities. In contrast, the traditional likelihood-based approach attempts to maximize the product of the probabilities. As an objective function, the summation of probabilities can be  more resistant than the product to outliers.

Converting the constraint in \eqref{eq:obj-wgl-1st} to a penalty term in the minimization, we have
\begin{align}\label{eq:obj-wgl-2nd}
\min_{\bf \Omega} \left [- \frac{1}{2}\log |{\bf \Omega}| - \log \left[ \frac{1}{n} \sum_{i  =1}^{n} \exp\left(-\frac{1}{2} {\bf x}_{i}^{T} {\bf \Omega  x}_{i}\right)\right] + \rho \|{\bf \Omega}\|_{1}\right ].
\end{align}
To further facilitate the computational advantage of the proposed method, we adopt a first-order approximation (Zou and Li, 2008)
to the second term in \eqref{eq:obj-wgl-2nd}, denoted as $g({\bf \Omega})$, with respect to the initial estimate $\bm \Omega^{(0)}$. Taking the first-order Taylor expansion on $g({\bf \Omega})$ leads to $g({\bf \Omega}) = g({\bf \Omega}^{(0)}) +  \frac{1}{2n} \sum_{i = 1}^{n} w_{i} {\bf x}_{i}^{T}  {\bf \Omega} {\bf x}_{i} -\frac{1}{2n} \sum_{i = 1}^{n} w_{i} {\bf x}_{i}^{T}  {\bf \Omega}^{(0)} {\bf x}_{i}$. 
Such an approximation strategy leads us to the WGLasso optimization problem \eqref{eq:obj-wgl-3rd}.

\subsection{Asymptotic Property}\label{sec:asymptotic}
We derive an asymptotic property of the proposed estimator, which is analogous to the property for the likelihood-based estimator with $l_1$ penalty in Yuan and Lin (2007).
For simplicity as in Yuan and Lin (2007), we assume that $p$ is fixed as the sample size $n \rightarrow \infty$.
\begin{theorem}\label{thm:consistency}
	Let $\hat{\boldsymbol{\Omega}}$ denote the final output of Algorithm \ref{wgl} with an initial estimator ${\bf \Omega}^{(0)}$ which is a consistent estimator of $\boldsymbol{\Omega}$.
	If $\sqrt{n}\rho \rightarrow \rho_0 \geq 0 $ as $n \rightarrow \infty$,
\begin{equation*}
	\sqrt{n}(\hat{\boldsymbol{\Omega}}-\boldsymbol{\Omega}) \rightarrow_d \mbox{argmin}_{\textbf{U}} ~ V(U),
\end{equation*}
	where $V(\textbf{U})=\mbox{tr}(\textbf{U}\boldsymbol{\Sigma} \textbf{U} \boldsymbol{\Sigma})+\mbox{tr}(\textbf{U}\textbf{Q})+\rho_0\sum_{i,j}\{ u_{ij}\mbox{sign}(c_{ij})I(c_{ij}\neq 0)+|u_{ij}|I(c_{ij}=0) \}$, where $\textbf{Q}$ is a random symmetric $p \times p$ matrix such that $vec(\textbf{Q})\sim N(0, \boldsymbol{\Lambda})$, and $\boldsymbol{\Lambda}$ is such that $cov(q_{ij}, q_{i'j'})=\left(\frac{2}{\sqrt{3}}\right)^p cov(X^{(i)}X^{(j)}, X^{(i')}X^{(j')})$.
\end{theorem}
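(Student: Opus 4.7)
The plan is to follow the convex epi-convergence strategy of Knight and Fu (2000) and Geyer (1994), as adapted to graphical-lasso problems by Yuan and Lin (2007). Let $Q_n(\mathbf{A}) = -\log|\mathbf{A}| + \tr(\mathbf{A}\mathbf{S}^*) + \rho\|\mathbf{A}\|_1$ denote the objective minimized at the terminal iteration of Algorithm~\ref{wgl}, with $\mathbf{S}^* = \mathbf{S}^*(\hat{\boldsymbol{\Omega}})$ evaluated at the converged weights. Reparametrize via $\mathbf{U} = \sqrt{n}(\mathbf{A}-\boldsymbol{\Omega})$ and set $V_n(\mathbf{U}) = n\{Q_n(\boldsymbol{\Omega}+\mathbf{U}/\sqrt{n})-Q_n(\boldsymbol{\Omega})\}$, extended to $+\infty$ off the positive-definite cone, so that $\sqrt{n}(\hat{\boldsymbol{\Omega}}-\boldsymbol{\Omega}) = \argmin_\mathbf{U} V_n(\mathbf{U})$. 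Since $-\log|\cdot|$ is convex on positive definite matrices, $\tr(\cdot\,\mathbf{S}^*)$ is linear, and $\|\cdot\|_1$ is convex, $V_n$ is convex. If $V_n$ converges pointwise to a convex random function $V$ with a unique minimizer (guaranteed by strict convexity of the quadratic part $\tr(\mathbf{U}\boldsymbol{\Sigma}\mathbf{U}\boldsymbol{\Sigma})$ on symmetric matrices), then $\argmin V_n \to_d \argmin V$ by the standard convex-argmin convergence theorem; it therefore suffices to identify the pointwise limit.

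Expanding each piece: a second-order Taylor expansion of the log-determinant around $\boldsymbol{\Omega}$ gives $n\{-\log|\boldsymbol{\Omega}+\mathbf{U}/\sqrt{n}|+\log|\boldsymbol{\Omega}|\} = -\sqrt{n}\,\tr(\boldsymbol{\Sigma}\mathbf{U}) + \tfrac{1}{2}\tr(\boldsymbol{\Sigma}\mathbf{U}\boldsymbol{\Sigma}\mathbf{U}) + o(1)$, while the trace term contributes $\sqrt{n}\,\tr(\mathbf{U}\mathbf{S}^*)$. The $\sqrt{n}$-order pieces collapse into $\sqrt{n}\,\tr(\mathbf{U}(\mathbf{S}^*-\boldsymbol{\Sigma}))$, and the quadratic in $\mathbf{U}$ matches $\tr(\mathbf{U}\boldsymbol{\Sigma}\mathbf{U}\boldsymbol{\Sigma})$ up to an overall scalar that is immaterial for the argmin. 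For the penalty, a Knight--Fu expansion together with $\sqrt{n}\rho \to \rho_0$ yields $n\rho\{\|\boldsymbol{\Omega}+\mathbf{U}/\sqrt{n}\|_1 - \|\boldsymbol{\Omega}\|_1\} \to \rho_0 \sum_{i,j}\{u_{ij}\,\mbox{sign}(c_{ij})\,I(c_{ij}\neq 0) + |u_{ij}|\,I(c_{ij}=0)\}$ pointwise.

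The main technical obstacle is the remaining piece: showing $\sqrt{n}(\mathbf{S}^*-\boldsymbol{\Sigma}) \to_d \mathbf{Q}$ with the asserted covariance. Because $\mathbf{S}^*$ depends on $\boldsymbol{\Omega}^{(0)}$ nonlinearly through the exponential weights, I would first show that replacing $\boldsymbol{\Omega}^{(0)}$ by its probability limit $\boldsymbol{\Omega}$ produces only a $o_p(1/\sqrt{n})$ perturbation, using the assumed consistency of $\boldsymbol{\Omega}^{(0)}$, smoothness of $\mathbf{S}^*$ in $\boldsymbol{\Omega}^{(0)}$, and uniform integrability of the Gaussian moment generating functions involved. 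Writing $\mathbf{S}^*(\boldsymbol{\Omega}) = (n^{-1}\sum_i a_i \mathbf{x}_i\mathbf{x}_i^T)/(n^{-1}\sum_j a_j)$ with $a_i = \exp(-\tfrac{1}{2}\mathbf{x}_i^T\boldsymbol{\Omega}\mathbf{x}_i)$, I would apply the joint multivariate CLT to the numerator and denominator and then the delta method to the ratio, obtaining a centered Gaussian limit whose covariance is governed by moments of the form $E[w_\infty^2\,X^{(i)}X^{(j)}X^{(i')}X^{(j')}]$ with $w_\infty = f(\mathbf{x})/E[f(\mathbf{x})]$. The key Gaussian integral $E[w_\infty^2] = \int f^3\,d\mathbf{x}\big/(\int f^2\,d\mathbf{x})^2$ evaluates, for the $N_p(\mathbf{0},\boldsymbol{\Sigma})$ density, to $2^p / 3^{p/2} = (2/\sqrt{3})^p$, which is precisely the multiplicative constant in the covariance of $\mathbf{Q}$. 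Assembling the three limits gives $V_n(\mathbf{U}) \to_d V(\mathbf{U})$ pointwise, and the convex-argmin convergence argument completes the proof.
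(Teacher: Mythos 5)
Your overall route is the same as the paper's: reparametrize with $\mathbf{U}=\sqrt{n}(\mathbf{A}-\boldsymbol{\Omega})$, exploit convexity and the Geyer/Knight--Fu argmin-convergence device exactly as in Yuan and Lin (2007), expand the log-determinant, the trace term and the $l_1$ penalty, and reduce everything to a CLT for $\sqrt{n}(\mathbf{S}^*-\boldsymbol{\Sigma})$ whose covariance carries the constant $\left(2/\sqrt{3}\right)^p=\int f^3\,d\mathbf{x}\big/\left(\int f^2\,d\mathbf{x}\right)^2$; that Gaussian integral is precisely the paper's Lemma \ref{lem:w}. The differences are organizational rather than conceptual: the paper first proves the asymptotics for a single WGLasso problem \eqref{eq:obj-wgl-3rd} with a consistent initial estimator (Lemma \ref{thm:consistency2}), and then passes to the output of Algorithm \ref{wgl} by noting that each Step-3 iterate is itself consistent, whereas you analyze the terminal objective with the ``converged'' weights $\mathbf{S}^*(\hat{\boldsymbol{\Omega}})$, which is self-referential (the weights depend on the estimator being characterized) and obscures exactly the point the paper isolates, namely that only consistency of the estimator generating the weights is needed. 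The paper also handles the randomness of the weights differently: it writes $\mathrm{cov}(q_{n,ij},q_{n,i'j'})$ with the empirical factor $n^{-1}\sum_k w_k^2$ and invokes Lemma \ref{lem:w}, rather than your joint-CLT-plus-delta-method treatment of the numerator/denominator ratio.

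Two steps of your plan need more than you assume. First, the claim that replacing $\boldsymbol{\Omega}^{(0)}$ by its limit perturbs $\mathbf{S}^*$ by only $o_p(1/\sqrt{n})$ does not follow from mere consistency: the map $\boldsymbol{\Omega}^{(0)}\mapsto\mathbf{S}^*$ has an $O_p(1)$ derivative, so the perturbation is of order $\|\boldsymbol{\Omega}^{(0)}-\boldsymbol{\Omega}\|$ and you would need a $\sqrt{n}$-rate that is not assumed; the paper's proof never performs this substitution and uses consistency only through Lemma \ref{lem:w}. Second, your covariance identification tacitly factorizes $E\left[w_\infty^2\,X^{(i)}X^{(j)}X^{(i')}X^{(j')}\right]$ into $E\left[w_\infty^2\right]\mathrm{cov}\left(X^{(i)}X^{(j)},X^{(i')}X^{(j')}\right)$, which is not valid since $w_\infty$ is a function of $\mathbf{x}$; carried out honestly, the delta method would also force you to confront the centering of the ratio (the population limit of $\mathbf{S}^*$ with weights evaluated at the true $\boldsymbol{\Omega}$ is not $\boldsymbol{\Sigma}$) and the cross-covariance with the denominator. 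To be fair, the paper's own argument makes the analogous simplifications (it asserts $E(\mathbf{S}^*)=\boldsymbol{\Sigma}$ and treats $n^{-1}\sum_k w_k^2$ as a deterministic multiplier), so on these points you have reproduced its level of rigor rather than exceeded it, but as written your delta-method step asserts the stated covariance rather than deriving it. Finally, the ``immaterial overall scalar'' remark is incorrect as stated: a factor $1/2$ on the quadratic term alone rescales the argmin, so to land on the stated $V(\mathbf{U})$ you need the expansion in the form of the paper's \eqref{eq1}.
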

\noindent Theorem \ref{thm:consistency} implies the $\sqrt{n}$-consistency of $\hat{\boldsymbol{\Omega}}$.
The proof of Theorem \ref{thm:consistency} is provided in Appendix \ref{sec:app1}.
We remark that there can be other ways to prove the above result.
For instance, the above asymptotic property can be achieved by using a similar technique as that of Lam and Fan (2009).

It is also worth to pointing out that, due to the non-convexity of the objective function in (2.1), there would be multiple local optima. Different settings of initial estimate or threshold value $\delta$ could
potentially lead to the algorithm numerically converging to different estimates or local optima.
The asymptotic result here is established for a certain local optimum.

\subsection{Selection of Tuning Parameter}\label{sec:selection}
Note that $\rho$ is a tuning parameter in the proposed method.
Common methods for tuning parameter selection include cross-validation (Finegold and Drton, 2011), a validation-set approach (Levina et al., 2008)  and information-based criteria such as the Bayesian information criterion (Yuan and Lin, 2007).
Since the proposed method is based on a nonparametric criterion, the cross-validation or validation-set approaches are more appropriate in our situation.
However, the conventional cross-validation assumes data in every fold follow the same distribution. This assumption may no longer be valid with outliers presented in the data.
When randomly partitioning the data into $k$ folds for cross-validation, outlier observations may not be uniformly allocated into each fold.

To address this issue, we adopt a revised cross-validation so that the observations in every fold are more likely to be uniformly distributed with respect to the likelihood.
Specifically, we sort the observations based on the value of their likelihood functions with respect to ${\bf \Omega}$.
Suppose that there are $n$ observations for $k$-fold cross-validation.
The ordered observations are grouped into $\lceil n/k \rceil$ blocks, each of which contains $k$ observations.
Then every fold for cross validations is formed by randomly drawing one observation from each block without replacement.
Consequently, the data points in each fold tend to be uniformly distributed in terms of the likelihood values. For the cross-validation score, the values of ISE in \eqref{eq:L2E-loss} are used for the proposed method. The optimal tuning parameter is the one associated with the smallest ISE value.

\section{Simulation} \label{sec:simulation}
To assess the performance of the proposed method,
we conduct a set of simulation studies to compare the proposed method with several existing methods for estimating the inverse covariance matrix.
In particular, we compare the proposed estimates with the LW estimator (Ledoit and Wolf, 2004), glasso (Friedman et al., 2008), and tlasso (Finegold and Drton, 2011), spearman (\"{O}llerer and Croux 2015) and cellweise (Tarr et al., 2016).

Ledoit and Wolf (2004) proposed to estimate $\mathbf \Sigma$ by a linear combination of the sample covariance  matrix $\bf S$ and the identity matrix $\bf I$, i.e.,
\begin{align*}
\hat{\bf \Sigma}_{LW} = \nu \tau {\bf S} + (1-\nu){\bf I},
\end{align*}
where the optimal values of $\tau$ and $\nu$ are obtained by minimizing the expected quadratic loss $E(\|\hat{\bf \Sigma}_{LW} - {\bf \Sigma} \|_2)$.
The corresponding estimator of $\mathbf \Omega$ is $\hat{\bf \Omega}_{LW} = \hat{\bf \Sigma}_{LW}^{-1}$.
Here we include the LW estimator in comparison because it is not a likelihood-based approach; therefore, it can be robust to the model assumption.
The glasso estimates ${\bf \Omega}$ by solving
\begin{align*}
\underset{{\bf \Omega}}{\min} \quad -\log |{\bf \Omega}| + \tr[{\bf \Omega}{\bf S}] + \lambda \sum_{i \ne j} |v_{ij}|,
\end{align*}
where ${\bf \Omega} = (v_{ij})_{p\times p}$ and $\lambda$ is a tuning parameter.
The tlasso estimates ${\bf \Omega}$ by modeling the data with a multivariate $t$-distribution and solving
\begin{align*}
\underset{{\bf \Omega}}{\min} \quad \frac{n}{2}\log |{\bf \Omega}|-\frac{1}{2}\tr \left({\bf \Omega}\sum_{i=1}^n\tau_i {\bf y}_i {\bf y}_i^T \right )+{\bm \mu}^T{\bf \Omega}\sum_{i=1}^n\tau_i {\bf y}_i-\frac{1}{2}{\bm \mu}^T{\bf \Omega} \bm \mu \sum_{i=1}^n\tau_i,
\end{align*}
where ${\bf y}_i$'s are observed data following a multivariate $t$-distribution $t_{p,\nu}(\bm \mu, \bm \Omega^{-1})$ and $\tau_{i}$'s are an associated sequence of hidden gamma random variables assigned to the observations. An EM algorithm is applied to minimize the objective function to obtain the estimator. The tuning parameters in glasso and tlasso are selected using cross-validation.
The R codes of tlasso are provided by the authors of tlasso (Finegold and Drton, 2011).

Another widely studied class of robust and sparse precision matrix estimator (\"{O}llerer  and Croux, 2015, Liu et al., 2012; Tarr et al., 2016) is to take a robust correlation/covaraince matrix  as an initial, such as spearman correlation matrix, to convert into robust and sparse precision matrix via regularization routines, such as the graphical lasso, QUIC and CLIME.

We evaluate and compare the performance of the proposed method and existing methods via (i) {\it selection accuracy} measured by  $F_1$ score ($F_{1}$),
and (ii) {\it estimation accuracy} measured by Frobenius norm ($Fnorm$) and Kullback-Leibler loss ($KL$).
Specifically, the $F_{1}$ score, a measure of selection accuracy of $\bm \Omega$, is defined as
\begin{eqnarray*}
F_1=\frac{2PR}{P+R},
\end{eqnarray*}
where $P={tp}/{(tp+fp)}$ is the precision and $R={tp}/{(tp+fn)}$ is the recall. The larger $F_1$ score implies more selection accuracy for the estimate.
Here $tp$ is the true positive, $fp$ is the false positive, and $fn$ is the false negative (Davis and Goadrich, 2006).
The $Fnorm$ is a metric to measure the distance between the estimated and true inverse covariance matrices, and is defined as
\begin{eqnarray*}
Fnorm=\|\hat{\bm \Omega}-{\bm \Omega}\|_F=\sqrt{\sum_{i,j}(\hat{c}_{ij}-c_{ij})^{2}},
\end{eqnarray*}
where ${\bf \Omega} = (c_{ij})_{p\times p}$.
The $KL$ loss is a likelihood-based measurement for the accuracy of the estimate, and is defined as
\begin{eqnarray*}
KL=-\log(|\hat{\bm \Omega}|)+\tr(\hat{\bm \Omega}{\bm \Omega}^{-1})-(-\log(|\hat{\bm \Omega}|)+p).
\end{eqnarray*}
Clearly, smaller $Fnorm$ and $KL$ loss indicate better estimation accuracy for the estimate.

\subsection{Estimation Accuracy Evaluation}
In this simulation, we consider that observations with sample size $n_{1}$ are generated from a multivariate normal $N_{p}({\bf 0},{\bf \Omega}^{-1})$
but they are contaminated by $n_2$ outlier observations generated from $N_{p}(\bm \mu, \bm I)$.
That is, the total sample size of the data set is $n = n_{1}+n_{2}$. Here we fix $n_1 = 50$ and vary the outlier-to-signal ratios $\gamma \equiv \frac{n_{2}}{n_1}= 0\%$, $6\%$, and $10\%$.
Two different cases of the dimensionality $p$ are considered: $p=55$ and $p=100$.
Moreover, we investigate the performance of the proposed method under two different scenarios of mean shifts: $\bm \mu = {\bf 2} $ and ${\bf 5}$.
Three inverse covariance matrices for ${\bf \Omega}$ are considered as follows.

{\bf Model 1.}
Identity matrix, i.e., ${\bf \Omega} = {\bm I}$.
Denote by M1-I and M1-II the models under the mean shifts of ${\bm \mu} = {\bm 2}$ and ${\bm \mu} = {\bm 5}$, respectively.

{\bf Model 2.}
Autoregressive covariance $AR(1)$, i.e., ${\bf \Omega} = (c_{ij})_{p\times p}$ with $c_{ii}=1$, $c_{i+1,i}=c_{i,i+1}=0.2$, $c_{ij}=0,$ where
$|i-j|>1$.
Denote by M2-I and M2-II the models under the mean shifts of ${\bm \mu} = {\bm 2}$ and ${\bm \mu} ={\bm 5}$, respectively.

{\bf Model 3.}
Randomly permuted $AR(1)$, i.e., ${\bf \Omega} = {\bm Q \bm M \bm Q}^{T}$, where $\bm M$ is the inverse covariance matrix defined in Model 2, and $\bm Q$ is a $p \times p$ matrix
obtained by randomly permuting the rows of a $p \times p$ identity matrix.
Denote by M3-I and M3-II the models under the mean shifts of $\bm \mu = \bm 2$ and $\bm \mu = \bm 5$, respectively.

The numerical results based on $100$ simulation runs are reported in Tables \ref{tab:non-out}--\ref{tab:10out}, corresponding to $0\%$, $6\%$, and $10\%$ outlier-to-signal ratios, respectively.
The standard errors are in parentheses.
Overall, the simulation results show that our proposed method outperforms the other five methods in most cases when outliers are present.

Table \ref{tab:non-out} reports the situation of no-outlier observations. From the results in Table \ref{tab:non-out},
it can be seen that the performance of our proposed method is comparable to the glasso method in terms of the selection accuracy and estimation accuracy. Compared with the LW method, the proposed method gives better selection accuracy, while the LW method provides best estimation accuracy in $Fnorm$ and $KL$ loss among all methods. The tlasso method performs well in selection accuracy but worst in estimation accuracy among all methods. The two robust initialization methods (spearman and cellwise) have highest selection accuracy in most cases.

\begin{table}
	\begin{center}
		{ \caption{Simulation results under outlier-to-signal ratio = $0\%$}
			\label{tab:non-out}
			\footnotesize   \begin{tabular}{ cc | ccc| ccc |ccc  } 
				\hline \hline & & \multicolumn{3}{c|}{Model 1} & \multicolumn{3}{c|}{Model 2} & \multicolumn{3}{c}{Model 3} \\
				[0.5ex] $p$ & Method & $F_{1}$ & $Fnorm$ & $KL$ & $F_{1}$ & $Fnorm$ & $KL$
				& $F_{1}$ & $Fnorm$ & $KL$ \\
				\hline
				&WGLasso  & 0.31 & 8.52& 6.11 & 0.31 & 8.59 & 6.13 & 0.38 & 4.03 & 3.81 \\
				&          &(0.04) &(0.62) & (0.35)  &(0.03) &(0.66) & (0.39) & (0.07)&(0.64)&(0.40)   \\
				&glasso    & 0.32 & 7.16&  3.94 & 0.33 & 7.48 & 4.04 & 0.27 & 3.24 & 2.04    \\
				&          &(0.07) & (0.92)& (0.56)  &(0.07) &(1.41) & (0.80) &(0.11) &(0.84) & (0.45)  \\
				55      &tlasso      & 0.53 & 40.10 & 59.54 & 0.44 & 45.10 & 60.89 & 0.45 & 45.13 & 60.93 \\
				&          &(0.07) & (0.36)& (1.09)  &(0.04) &(0.43) & (1.30) &(0.03) &(0.32) & (1.07)  \\
				&LW       & 0.10 &  {\bf 4.30} &  {\bf 2.34} & 0.10 &  {\bf 4.31} &  {\bf 2.35} & 0.04 &  {\bf 0.27} &  {\bf 0.13}  \\
				&          &(0.00) &(0.19) &(0.11)  &(0.00) &(0.16) &(0.11)  &(0.00) &(0.23) &(0.11)   \\
				& spearman & {\bf 0.56} & 4.40 & 3.00 &  {\bf 0.45} & 7.03 & 3.56 &  {\bf 0.47} & 7.24 & 3.63 \\
				&          & (0.13) & (0.80) & (0.59) & (0.04) & (0.69) & (0.32) & (0.04) & (0.67) & (0.31) \\
				& cellwise & 0.52 & 4.92 &3.34  &  {\bf 0.45} & 11.21 & 5.88 & 0.45 & 11.10 & 5.81 \\
				&           & (0.14) & (0.86) & (0.62) & (0.05) & (1.15) & (0.68) & (0.05) & (1.09) & (0.65) \\ \hline
				&Proposed  & 0.24 & 15.96 & 10.66& 0.25 & 16.04 & 10.70& 0.30 & 7.06 & 6.62 \\
				&    & (0.04)         &(1.42)  &(0.79) & (0.04)  &(1.12) &(0.64)& (0.06)  & (0.92) & (0.81)  \\
				&glasso   & 0.29 & 14.03 & 7.46 & 0.29 & 13.95 & 7.48 & 0.29 & 5.76 & 3.80  \\
				&          &(0.06) &(1.23) & (0.56)  &(0.07) &(1.12) & (0.49) & (0.10)&(0.75) &(0.52)  \\
				100      &tlasso   & 0.51 & 84.95 & 164.60 &  {\bf 0.50} & 95.44 & 178.20&  {\bf 0.50} & 95.45 &  178.23 \\
				&          &(0.24) &(1.67) &(10.81)   & (0.02)& (0.13)& (1.07) & (0.02)&(0.13) & (1.12) \\
				&LW       & 0.06 &  {\bf 8.16} &  {\bf 4.44} & 0.06 &  {\bf 8.15} &  {\bf 4.43} & 0.06 &  {\bf 0.45} &  {\bf 0.22}  \\
				&          &(0.00) &(0.23) & (0.16)  &(0.00) &(0.24) &(0.16)  &(0.19) &(0.33) & (0.16)\\
				& spearman & 0.57 & 8.74 & 5.96 & 0.44 & 14.07 & 6.98 & 0.44 & 14.19 & 7.04\\
				&          & (0.10) & (1.21) & (0.92) & (0.03) & (1.06) & (0.52) & (0.03) & (1.03) & (0.51) \\
				& cellwise &  {\bf 0.66} & 12.77 & 8.90 &  0.45 & 25.40 & 13.26 &  0.46 & 24.58 & 13.37 \\
				&          & (0.10) & (1.25) & (0.97) & (0.03) & (1.59) & (1.14) & (0.03) & (1.40) & (0.99) \\
				\hline
				\hline
			\end{tabular}
		}
	\end{center}
\end{table}

\begin{table}
	\begin{center}
		{\caption{Simulation results under outlier-to-signal ratio = $6\%$}
			\label{tab:6out}
			\tiny   \begin{tabular}{ cc | ccc| ccc |ccc  } 
				\hline \hline & & \multicolumn{3}{c|}{M1-I} & \multicolumn{3}{c|}{M2-I} & \multicolumn{3}{c}{M3-I} \\
				[0.5ex] $p$ & Method & $F_{1}$ & $Fnorm$ & $KL$ & $F_{1}$ & $Fnorm$ & $KL$
				& $F_{1}$ & $Fnorm$ & $KL$ \\
				\hline
				&WGLasso  & {\bf 0.31} & 8.78 & 6.26 & {\bf 0.31} & {\bf 8.72} & 6.24 & {\bf 0.39} & {\bf 4.10} & {\bf 3.92}  \\
				&          &(0.03) &(1.10) & (0.58)  &(0.04) &(0.85) & (0.49) & (0.06)&(0.76)&(0.48)   \\
				&glasso    & 0.16   & 10.21 & 6.62 & 0.16  & 9.93 & 6.41 & 0.12  & 5.61 & 4.43 \\
				&          &(0.01) & (1.37)& (0.80)  &(0.01)  &(0.77) & (0.52) &(0.01) &(1.21) & (0.51)  \\
				55      &tlasso     & 0.28 & 40.11 & 61.37     & 0.22  & 45.02 & 63.06   & 0.21  & 44.68  &62.24  \\
				&          &(0.02) & (0.10)& (0.28)  &(0.01) &(0.11) & (0.31)  &(0.01) &(0.13)  &(0.31)  \\
				&LW       & 0.10  & 8.81 & 5.76 & 0.10  & 8.80 & {\bf 5.74}  &   0.04 & 8.19 & 4.89 \\
				&          &(0.00) &(0.76) &(0.51) &(0.00) &(0.77) & (0.50) &(0.00) & (1.63)&(0.68)   \\
				& spearman & 0.12 & 9.16 & 4.29 & 0.16 & 20.80 & 9.92 & 0.16 & 21.06 & 10.06 \\
				&          & (0.01) & (4.56) & (1.47) & (0.01) & (6.01) & (2.22) & (0.01) & (5.63) & (2.08) \\
				& cellwise & 0.11 & {\bf 6.42} & {\bf 5.04} & 0.16 & 11.83 & 7.77  & 0.16 & 11.49 & 7.55 \\
				&          & (0.01) & (1.13) & (0.61) & (0.01) & (2.68) & (1.49) & (0.01) & (2.49) & (1.42) \\
				\hline
				&WGLasso  & 0.25  & 16.10 & 10.80 & 0.25 & {\bf 16.20} & {\bf 10.80} & 0.30 & {\bf 7.10} & {\bf 6.50}\\
				&    & (0.04)         &(1.27)  &(0.68) & (0.04)  &(1.31) &(0.73)& (0.06)  & (1.36) & (0.77)  \\
				&glasso  & 0.12 & 17.90 & 11.41& 0.12 & 17.83 & 11.38& 0.10 & 9.21 & 7.52 \\
				&          &(0.01) &(0.86) & (0.36)  &(0.01) &(0.88) & (0.35) & (0.01)&(1.25)  &(0.56)  \\
				100      &tlasso     &{\bf 0.61}   &87.25  &180.61   &{\bf 0.41}  &95.76 &180.87  &{\bf 0.41}  &95.75 &180.94    \\
				&          &(0.02) &(0.04) &(0.19)   &(0.01)&(0.02)&(0.11)  &(0.01)&(0.02)&(0.12)  \\
				&LW     &   0.06 & 20.95 & 14.06 &  0.06 & 20.75 & 13.92 &   0.02 & 23.14 & 12.55 \\
				&          &(0.00) &(1.91) & (1.03) &(0.00) &(2.10) &(1.20)  &(0.00) &(3.33)& (1.18)\\
				& spearman & 0.09 & 16.80 &{\bf 8.25} & 0.12 & 26.13 & 14.44 & 0.12 & 26.21 & 14.47 \\
				&          & (0.00) & (1.47) & (0.54) & (0.00) & (1.06) & (0.44) & (0.00) & (1.12) & (0.45) \\
				& cellwise &  0.10  & {\bf 12.21} & 9.98 & 0.13 & 21.67 & 13.69 & 0.13 & 21.66 & 13.68 \\
				&          & (0.00) & (0.71) & (0.60) & (0.00) & (0.85) & (0.57) & (0.00) & (0.86) & (0.56) \\
				\hline
				& & \multicolumn{3}{c|}{M1-II} & \multicolumn{3}{c|}{M2-II} & \multicolumn{3}{c}{M3-II} \\
				[0.5ex]   &          & $F_{1}$ & $Fnorm$ & $KL$ & $F_{1}$ & $Fnorm$ & $KL$
				& $F_{1}$ & $Fnorm$ &$KL$  \\
				\hline
				&WGLasso  & {\bf 0.30} & 9.67 & 6.86 & {\bf 0.30} & {\bf 9.69} & {\bf 6.87} & {\bf 0.37} & {\bf 4.49} & {\bf 4.50}\\
				&          &(0.03) &(1.15) & (0.76)  &(0.04) &(1.09) &(0.71)   &(0.07) &(0.78) &(0.74)   \\
				&glasso    & 0.13 & 10.95 & 8.98 & 0.13  & 10.74  & 8.82  & 0.10& 6.44 & 6.76  \\
				&          &(0.02) &(1.27) &(0.66)   & (0.02)& (0.75)&(0.42)  &(0.01) & (1.35)&(0.50)   \\
				55      &tlasso      &0.12   &38.72  &61.27   &0.16   &43.49  &62.36     &0.15   &43.48  &62.30\\
				&          &(0.02) &(0.20) &(0.51)  &(0.00) &(0.22) &(0.50)    &(0.01) &(0.21) &(0.50)\\
				&LW       &   0.10 &    11.28 & 9.83 &   0.10  & 11.25 & 9.79 & 0.04 & 8.64& 8.56 \\
				&          &(0.00) & (0.25) & (0.18)  &(0.00) &(0.30) &(0.22)  & (0.00)&(0.20) &(0.19)  \\
				& spearman & 0.10 & 20.52 & 7. 63 & 0.15 & 31.20 & 13.71 & 0.15 & 31.98 & 13.98 \\
				&         & (0.01) & (6.50) & (1.96) & (0.01) & (3.16) & (1.04) & (0.01) & (2.26) & (0.68) \\
				& cellwise & 0.11 & {\bf 6.92} & {\bf 5.81} & 0.16 & 11.81 & 8.02 &0.16 & 11.94 & 8.03 \\
				&          & (0.01) & (0.48) & (0.43) & (0.01) & (0.77) & (0.56) & (0.01) & (0.99) &(0.66) \\
				\hline
				&WGLasso & {\bf 0.25}& 17.60 & 11.60 & {\bf 0.25}& {\bf 17.94} & {\bf 11.90}& {\bf 0.29} & {\bf 8.02} & {\bf 7.49}  \\
				&          &(0.03) &(1.98) & (52.38)  &(0.04) & (2.10)& (1.30) & (0.06)& (1.40)& (1.21)  \\
				&glasso   & 0.10 & 18.73 & 14.14& 0.10 & 18.91 & 14.19& 0.09  & 10.15 & 10.18   \\
				&          &(0.01) &(1.00) & (0.58)  & (0.01)&(1.00) & (0.50) &(0.01) &(1.30) & (0.59) \\
				100      &tlasso     &0.10   &85.69  &178.39  &0.12   &94.55  &182.42    &0.12   &94.49  &182.01  \\
				&          &(0.01) &(0.17) &(0.90)  &(0.00) &(0.15) &(0.77)    &(0.01) &(0.16) &(0.79) \\
				&LW        &   0.06 & 23.60 & 19.90 &   0.06  & 23.55 & 19.90 &   0.02 & 20.23 & 17.14 \\
				&          &(0.00) &(0.57) &(0.22)   & (0.00)&(0.56) & (0.22) &(0.00) &(0.73) & (0.22)\\
				& spearman & 0.09 & 22.47 & {\bf 9.60} & 0.11 & 37.55 & 18.81 & 0.11 & 36.56 & 18.54 \\
				&          & (0.00) & (1.25) & (0.36) & (0.00) & (12.86) & (4.08) & (0.00) & (10.32) & (3.26)\\
				& cellwsie & 0.10 & {\bf 14.45} & 12.16 &0.13 & 25.20 & 16.39 & 0.12 & 25.03 & 16.29 \\
				&          & (0.00) & (0.80) & (0.73) & (0.00) & (1.56) & (0.91) & (0.00) & (1.55) & (0.88) \\
				\hline
				\hline
			\end{tabular}
		}
	\end{center}
\end{table}

\begin{table}
	\begin{center}
		{ \caption{Simulation results under outlier-to-signal ratio = $10\%$}
			\label{tab:10out}
			\tiny  \begin{tabular}{ cc | ccc| ccc |ccc  } 
				\hline \hline & & \multicolumn{3}{c|}{M1-I} & \multicolumn{3}{c|}{M2-I} & \multicolumn{3}{c}{M3-I} \\
				[0.5ex] $p$ & Method & $F_{1}$ & $Fnorm$ & $KL$ & $F_{1}$ & $Fnorm$ & $KL$
				& $F_{1}$ & $Fnorm$ & $KL$ \\
				\hline
				&wGLasso & {\bf 0.30} & 9.01 & 6.42 & {\bf 0.31} & {\bf 9.03} & {\bf 6.46} & {\bf 0.35} & {\bf 4.44} & {\bf 4.27}   \\
				&          &(0.03) &(0.84) & (0.53) &(0.03) &(0.88)& (0.55) & (0.06)&(1.28)&(0.86)   \\
				&glasso   & 0.14 & 10.44 & 7.12 & 0.14 &  10.27 & 7.05 & 0.11  & 5.99 & 5.06 \\
				&          &(0.01) & (1.18)& (0.64)  &(0.01)  &(0.64) & (0.36) &(0.02) &(1.73)& (0.65)  \\
				55      &tlasso     & 0.10 & 39.26 & 60.82 & 0.15 & 44.14 &61.91  & 0.15  & 43.99 &  61.61 \\
				&          &(0.01) & (1.50)& (3.79)  &(0.01) &(1.47) & (3.52) &(0.01) &(1.55) & (3.66)\\
				&LW      &   0.10 &   21.49 & 11.61  &   0.10  & 21.34 & 11.57 &  0.04 & 27.88 & 11.65\\
				&          &(0.00) &(2.76) &(0.87) &(0.00) &(3.16)  & (1.06) &(0.00) & (4.56)&(1.10)   \\
				& spearman & 0.10 & 19.91 & 7.50 & 0.15 & 31.26 & 13.79 & 0.15 & 31.05 & 13.73\\
				&          & (0.01) & (6.53) & (1.96) & (0.01)& (2.16) & (0.64) & (0.01) & (1.99) & (0.63)\\
				& cellwise & 0.11 & {\bf 6.57} & {\bf 5.67} & 0.15 & 11.74 & 8.10  & 0.15 & 11.79 & 8.10 \\
				&          & (0.00) & (0.49) &(0.35) & (0.01) & (1.74) & (1.09) & (0.01) & (1.83) & (1.07) \\
				\hline
				&WGLasso  & {\bf 0.27} & 11.27 & {\bf 8.12}& {\bf 0.26} &  11.71 & {\bf 8.45}& {\bf 0.25} & 6.46 & {\bf 6.47}\\
				&    & (0.03)        &(1.84)  &(1.43) & (0.03)  &(2.21) &(1.98)& (0.05)  & (2.22) & (2.52) \\
				&glasso  & 0.13 & {\bf 10.75} & 9.26 & 0.13  & {\bf 10.67} & 9.24& 0.10 & {\bf 6.39} & 7.19  \\
				&          &(0.02) &(0.90) & (0.46) &(0.01) &(0.60) & (0.33) & (0.01)&(1.52) &(0.54) \\
				100      &tlasso   & 0.09 & 86.88 & 182.64 & 0.12  & 95.25 &183.38 &0.12  & 95.32 &  183.78   \\
				&          &(0.01) &(0.60) &(3.33)  & (0.01)&(0.59)& (3.16) & (0.01)&(0.36) & (1.96)   \\
				&LW    &   0.10 & 18.23 & 13.51 &   0.10  & 18.24 & 13.52 &   0.04 & 17.73 & 12.26 \\
				&          &(0.00) &(0.82) &(0.28) &(0.00) &(0.79) &(0.28) &(0.00) &(0.86)& (0.22)\\
				& spearman &  0.09 & 22.14 & 9.48 & 0.11 & 35.74 & 18.24 & 0.11 & 35.52 & 18.14 \\
				&          & (0.00) & (1.41) & (0.42) & (0.00) & (10.56) & (3.39) & (0.00) & (10.48) & (3.32) \\
				&  cellwise & 0.09 & 13.52 & 11.69 & 0.11 & 23.74 & 15.65 & 0.11 & 23.99 & 15.75 \\
				&           & (0.00) & (0.58) & (0.55) & (0.00) & (1.50) & (0.86) & (0.00) & (1.47) & (0.81)\\
				\hline
				& & \multicolumn{3}{c|}{M1-II} & \multicolumn{3}{c|}{M2-II} & \multicolumn{3}{c}{M3-II} \\
				[0.5ex]   &         & $F_{1}$ & $Fnorm$ & $KL$ & $F_{1}$ & $Fnorm$ & $KL$
				& $F_{1}$ & $Fnorm$ &$KL$  \\
				\hline
				&WGLasso  & {\bf 0.23}  & 20.99 & 13.95& {\bf 0.24}  & 16.69 & 11.11 & {\bf 0.29}  & {\bf 7.47} & {\bf 6.95} \\
				&          &(0.03)&(3.10) & (2.09)  &(0.04) &(1.72) &(0.98)   &(0.03)  &(1.05)  &(0.95)    \\
				&glasso     & 0.10 & 19.21 & 14.76& 0.11 & 18.33 & 12.21    & 0.09  & 9.58 & 8.37 \\
				&          &(0.01) &(3.17) &(1.55)  & (0.01) & (1.08)&(0.56) &(0.01) & (0.87)&(0.56)    \\
				55      &tlasso    & 0.09  & 38.89 &62.09 & 0.14 &44.13  &64.11 &0.15  & 43.83 &63.42 \\
				&          &(0.01) &(1.86) & (4.88) &(0.01) &(1.62) & (4.00) &(0.01)& (1.77)& (4.35) \\
				&LW      &   0.06 & 54.61 & 32.67&   0.06 &    68.18 & 32.22&   0.02  & 93.64 & 33.23 \\
				&          &(0.00) &(2.18)  & (0.52)  &(0.00) &(8.17)&(2.40)  & (0.00)&(11.11) &(2.40)  \\
				& spearman    & 0.11 & 40.12 & 12.74 & 0.12 & 52.21 & 20.41 & 0.12 & 52.24 & 20.39 \\
				&             & (0.00) & (3.02) & (0.69) & (0.00) & (3.51) & (0.96) & (0.00) & (3.54) & (0.96)\\
				& cellwise &  0.11 & {\bf 8.66} & {\bf 7.78} & 0.15 & {\bf 13.89} & {\bf 9.78} & 0.15 & 13.88 & 9.87 \\
				&          & (0.00) & (0.62) & (0.51) & (0.00) & (1.00) & (0.62) & (0.01) & (1.04) & (0.61) \\
				\hline
				&WGLasso & {\bf 0.24} & 21.23 & {\bf 14.06}& {\bf 0.23}  & 21.10 & {\bf 14.02} & {\bf 0.26}  & 10.91 & {\bf 10.06}  \\
				&          &(0.03) &(2.69) & (1.86)  &(0.03) & (3.13)& (2.09)& (0.03) &(2.03)& (1.68)  \\
				&glasso   & 0.10  & 18.87 & 14.59 & 0.10& {\bf 19.16} & 14.73 & 0.09 & {\bf 9.99} & 10.55  \\
				&          &(0.01) &(1.14) & (0.64)  & (0.01)&(3.20)& (1.55) &(0.01)&(0.90) & (0.55) \\
				100      &tlasso    & 0.09  & 86.84 & 185.19 &0.11 &95.32  &186.51  & 0.11  & 95.50 &  187.43 \\
				&          &(0.01) &(1.07)& (6.01)  & (0.01)& (0.82)& (4.32)&(0.01) &(0.37) & (2.02) \\
				&LW        & 0.06 & 54.46 & 32.63&   0.06  & 54.62 & 32.67&   0.02  & 56.82 & 29.25  \\
				&          &(0.00) &(2.09) &(0.49)  & (0.00)&(2.18) & (0.52) &(0.00) &(2.40) & (0.44)\\
				& spearman  & 0.08 & 87.42 & 26.51 & 0.10 & 109.95 & 41.27 & 0.11 &110.50 & 41.37 \\
				&           & (0.00) & (6.71) & (1.63) & (0.00) & (4.86) & (1.32) & (0.00) &(4.89) & (1.31) \\
				& cellwise & 0.09 & {\bf 18.61} & 16.37 & 0.11 &29.27 & 20.22 &0.11 & 29.35 &20.29 \\
				&          & (0.00) & (1.25) & (1.02) & (0.00) & (1.29) & (0.99)  & (0.00) & (1.43) & (1.09) \\
				\hline
				\hline
			\end{tabular}
		}
	\end{center}
\end{table}

Table \ref{tab:6out} shows the results for the $6\%$ outlier-to-signal ratio case.
When the dimension is $p = 55$ and the mean shift is ${\bm \mu} = {\bf 2}$, the proposed method has comparable performance to the LW method in terms of $KL$ loss for Models $1$ and $2$.
However, the $F_{1}$ score of the proposed method is much larger than that of the LW method, indicating its superiority in terms of selection accuracy.
When the dimension $p$ increases, as in the case of $p=100$,
the tlasso method has larger $F_{1}$ scores than the other five methods, but worst prediction accuracy indicted by the $Fnorm$ and $KL$ loss values. Our proposed method has highest prediction accuracy for M2 and M3 while the two robust initialization methods, spearman and cellwise, perform best in prediction accuracy for M1. For the case of large mean shift, i.e., ${\bm \mu} = {\bf 5}$, the proposed method in general outperforms all other methods in both selection and prediction accuracy except for M1 where two robust initialization methods perform best in prediction accuracy.

As an example of which abnormal observations become more significant, the results of the $10\%$ outlier-to-signal ratio case are reported in Table \ref{tab:10out}.
In this case, the proposed method always yields the largest $F_{1}$ score among the six methods under comparison.
For M1 model, the values of $KL$ loss and $Fnorm$ of our method are close to the cellwise method for $p=100$ but the cellwise method performs better when $p=55$. For M2 and M3 models, the overall prediction accuracy of our proposed method is superior to other competing methods.

To further validate our method, we also checked whether the final weights of outliers are much smaller than the weights of the normal observations.
Under the Model 2 setting with $p=55$ and 6\% outliers, the average of final weights of the normal observations and outliers were 1.06 and $7.10\times 10^{-3}$, respectively, over 100 iterations.
This indicates that the normal observations dominate the fitting procedure.

\subsection{Numerical Study of Convergence}
The proposed WGLasso objective function in \eqref{eq:obj-wgl-3rd} is nonconvex and nonsmooth, and the weights are a complicated function of the observations and the inverse covariance matrix.
To confirm the asymptotic properties derived in Section \ref{sec:asymptotic}, we use Monte Carlo simulations to investigate the convergence performance of the proposed estimator.
In the meantime, we also compare the proposed method with the glasso method, of which the objective function is convex and known to converge.
We utilize our previous three numerical simulation settings in which the true inverse covariance matrices are the identity matrix, banded-structured matrix, and random banded-structured matrix. We consider a lower dimension $p=55$ and a higher dimension $p=100$ under three outlier-to-signal ratios $0\%, 6\%$ and $10\%$, where the mean of the outliers is $\bm \mu=\bm 5$.

To study the performance of the two methods as the total sample size $n$ increases, we increase the sample size $n$ while fixing the outlier-to-signal ratio.
As expected in the asymptotic situations, the sample size $n$ will become much larger than the dimension $p$ eventually.
Here we set the largest sample size $n$ to be roughly at the scale of $p^2$.
In particular, for the lower dimension $p=55$, the value of $n$ ranges from 20 to $\mbox{2,500}$; for the higher dimension $p=100$, the value of $n$ ranges from $30$ to $\mbox{10,000}$.
To measure the distance between the estimate and the true inverse covariance matrix, we use the matrix Frobenius norm (Fnorm) as the metric.
All the results are based on $100$ independent replications.

Figures \ref{fig:converge-Identy}--\ref{fig:converge-RandAR1} depict the convergence performance results of the proposed method and the glasso methods under three numerical models, respectively.
It is seen that as the total sample size $n$ increases, the Fnorm distances of the two estimates from the true inverse covariance matrix decrease gradually and become stable in the later stages. Under the no-outlier situation, the glasso method yields a smaller distance to the true inverse covariance matrix than the WGLasso does. In the cases of positive outlier-to-signal ratios, the average distance between the WGLasso estimate and the true inverse covariance matrix is smaller than that of the glasso estimate. This suggests that the WGLasso estimate converges to a better position in the neighborhood of the true inverse covariance matrix when the data are noisy. The glasso estimate has a sharper decrease in the Fnorm value than the WGLasso estimate, especially at the beginning stages where the total sample size $n$ is much smaller than the dimension $p$. This illustrates the robustness of the WGLasso in a high-dimensional situation.

\begin{figure}
	\centering
	\begin{tabular}{p{1.8cm}|CC} \hline \hline
		outlier-to-signal ratio&$p=55$ & $p=100$  \\ \hline
		$0\%$&\subfloat{\includegraphics[width=6.7cm,height=6cm]{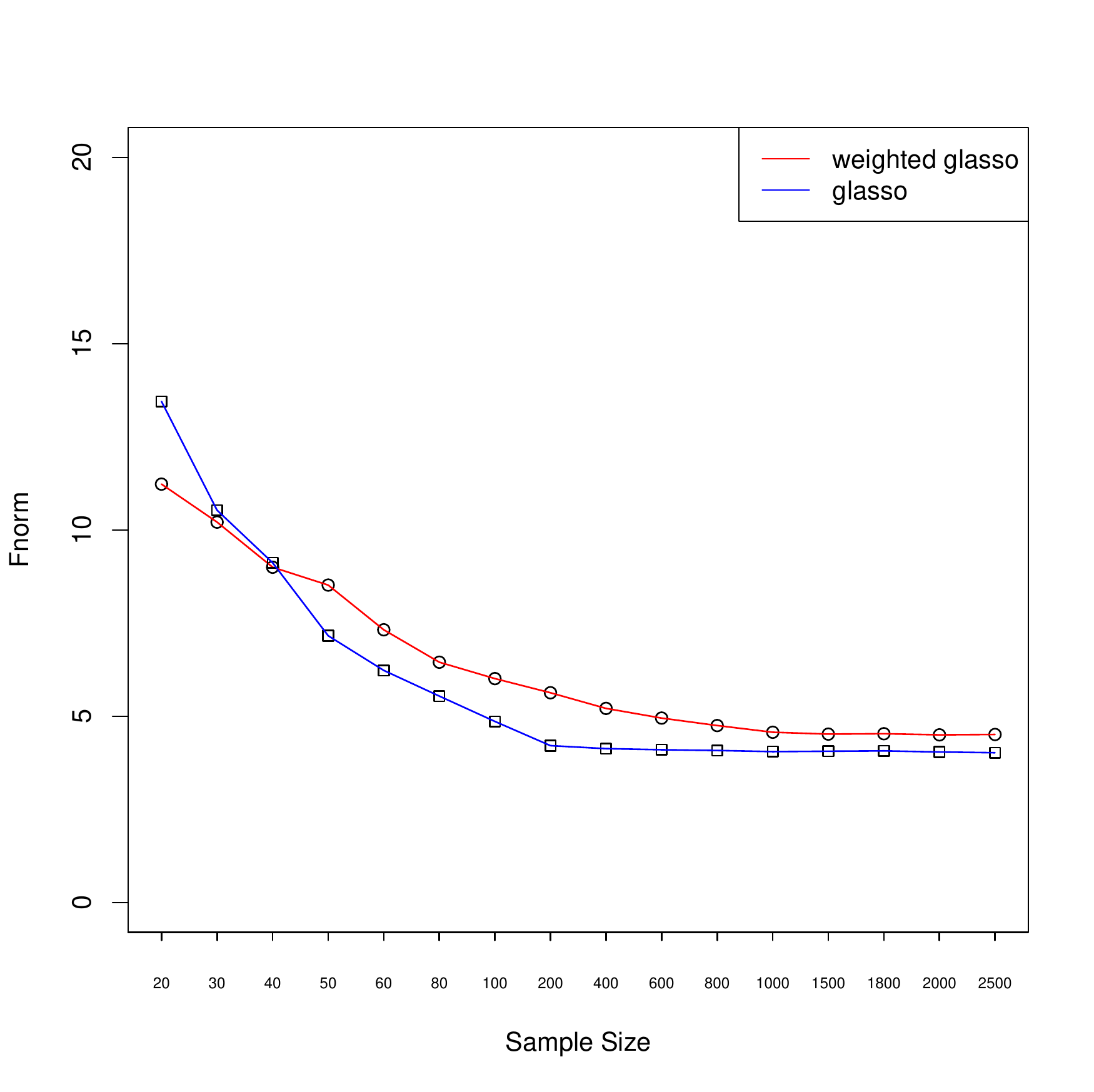}} &\subfloat{\includegraphics[width=6.7cm,height=6cm]{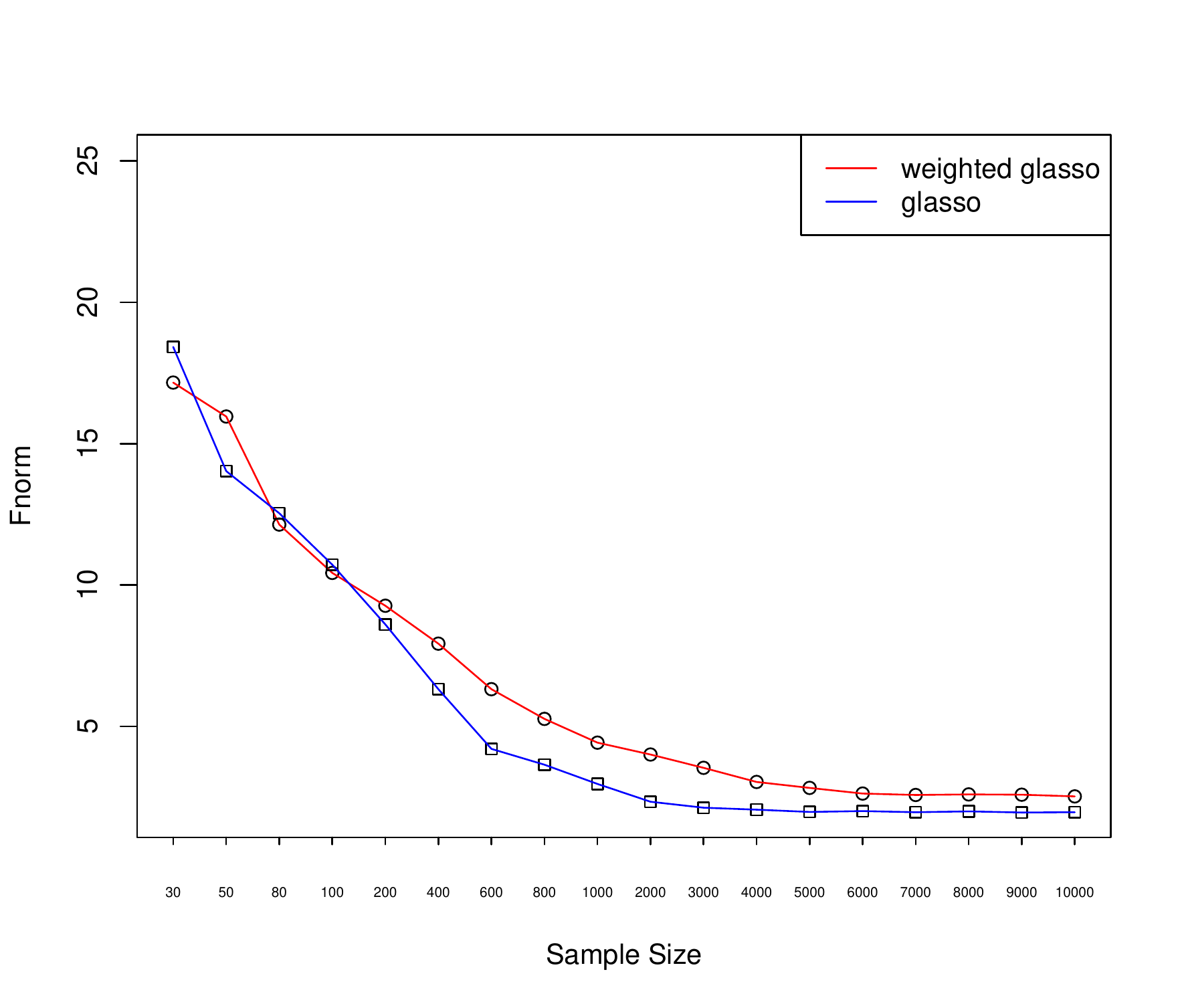}} \\ \hline
		$6\%$ & \subfloat{\includegraphics[width=6.7cm,height=6cm]{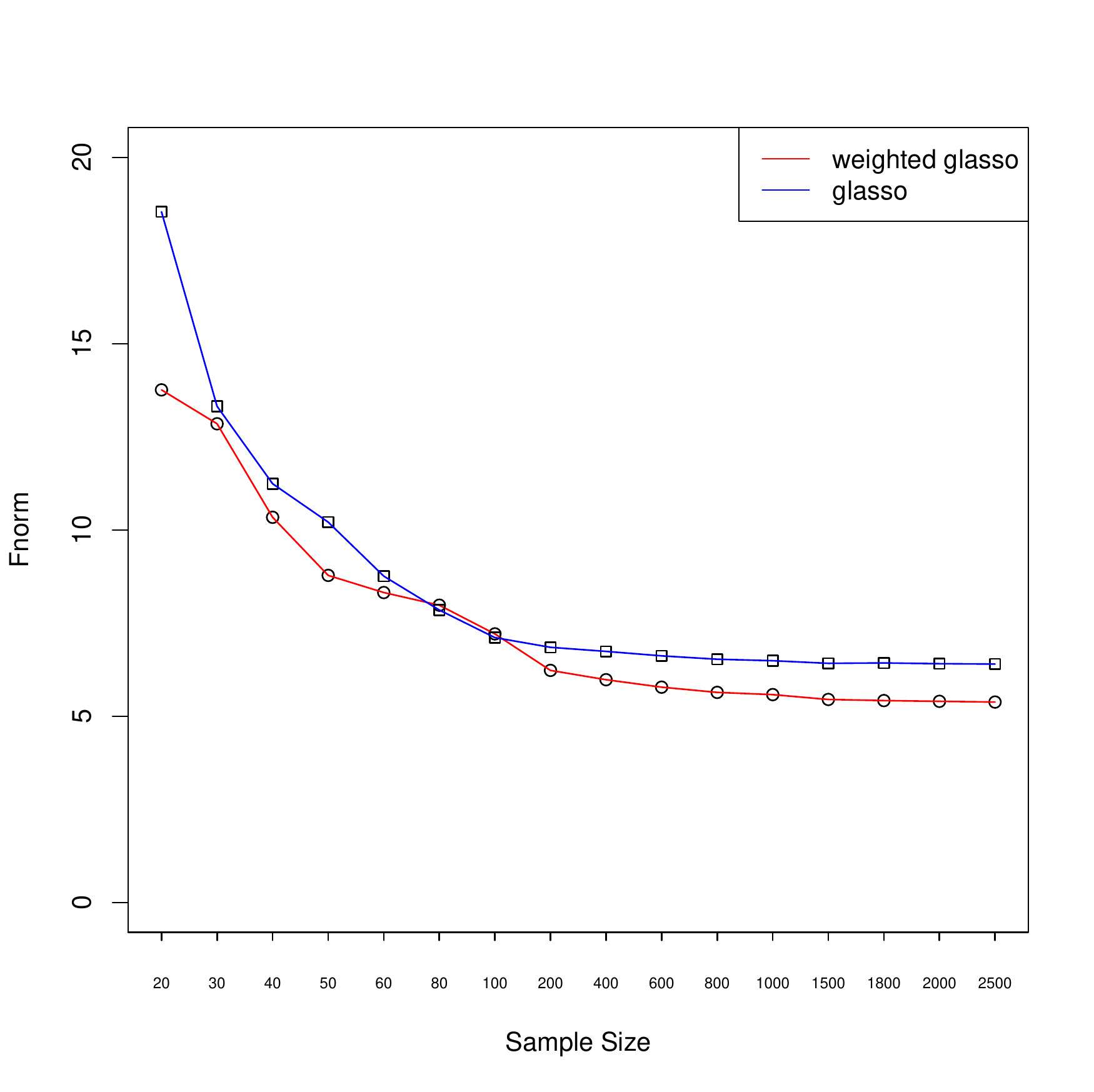}}&\subfloat{\includegraphics[width=6.7cm,height=6cm]{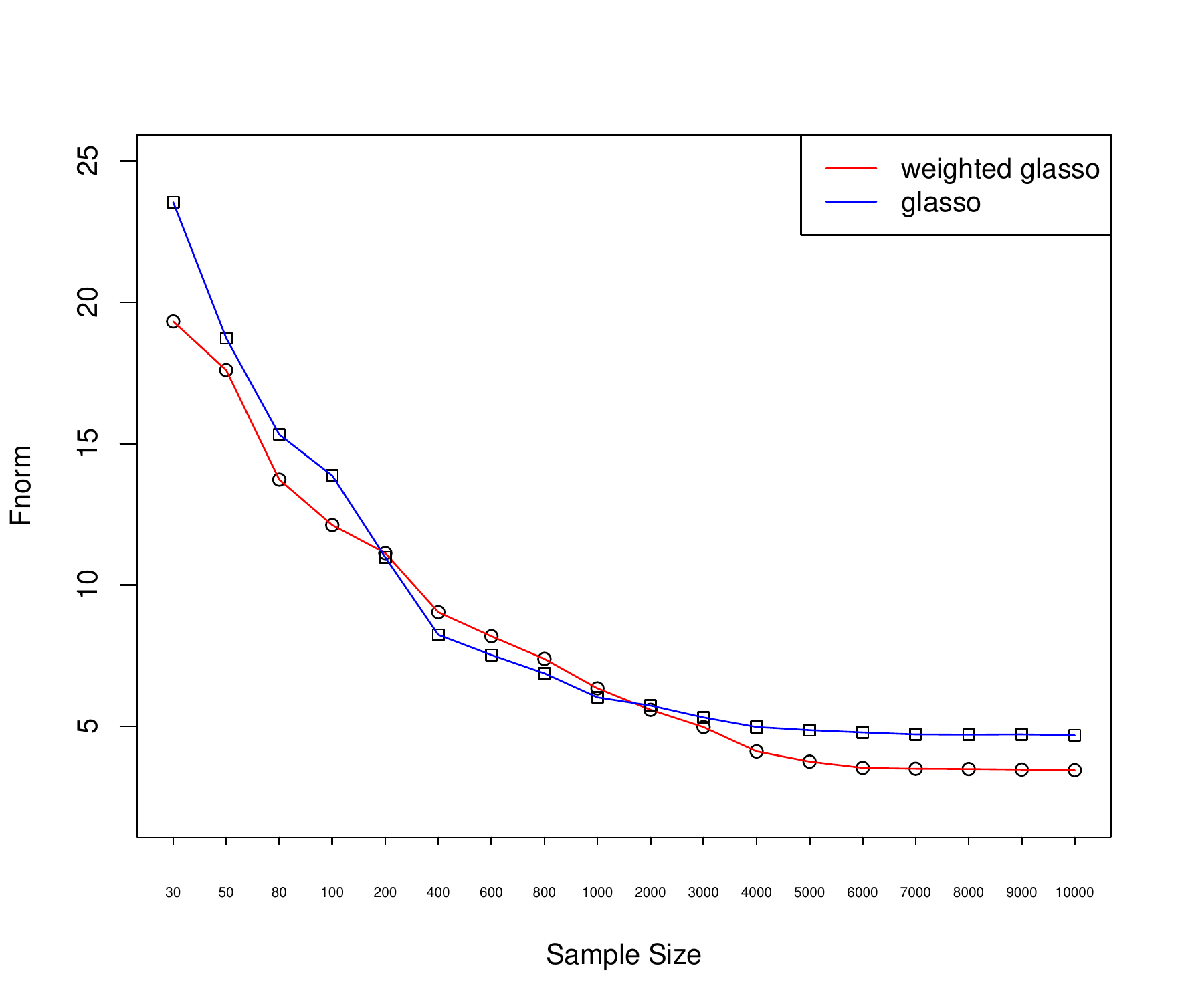}}\\ \hline
		$10\%$ & \includegraphics[width=6.7cm,height=6cm]{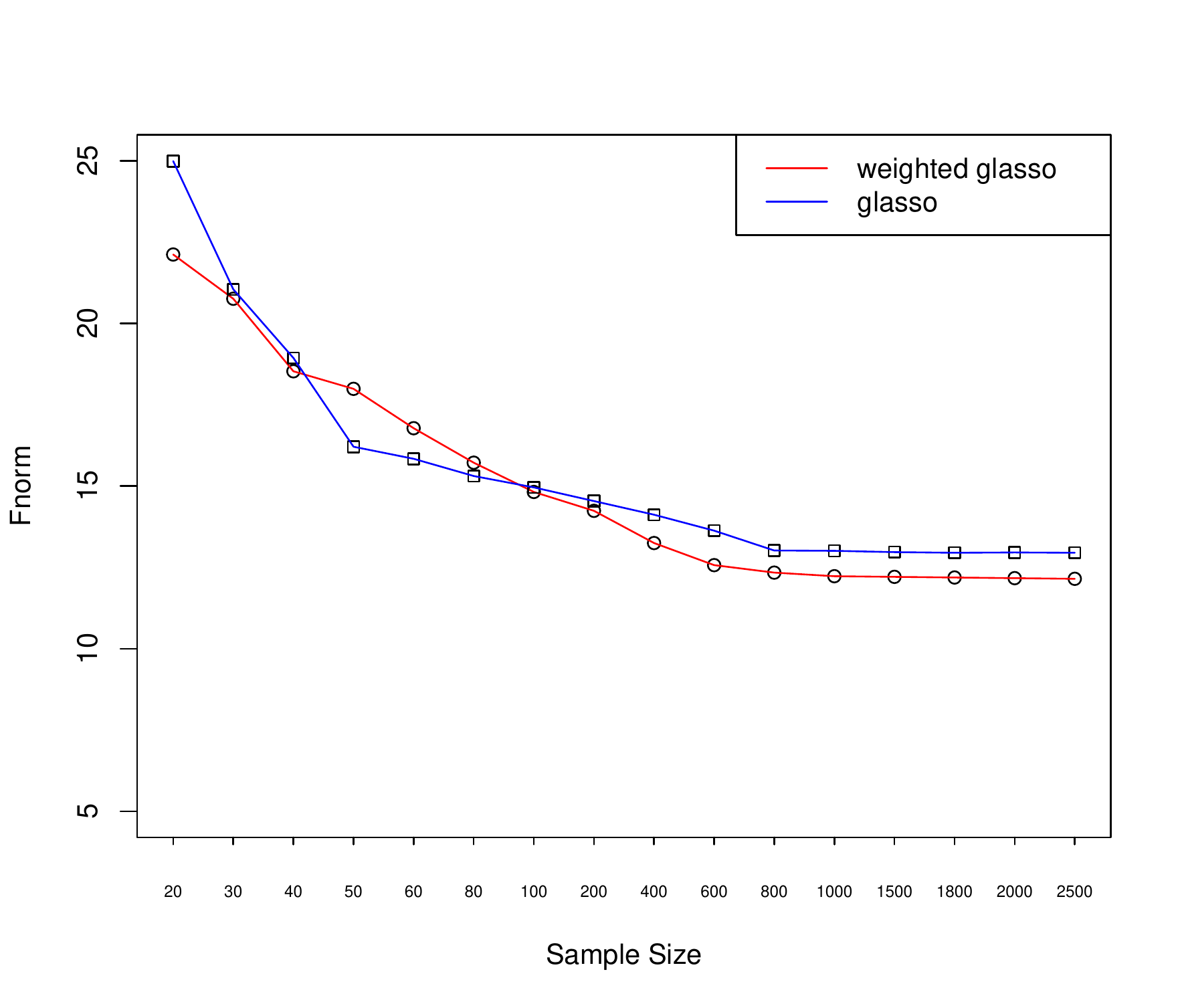}&\subfloat{\includegraphics[width=6.7cm,height=6cm]{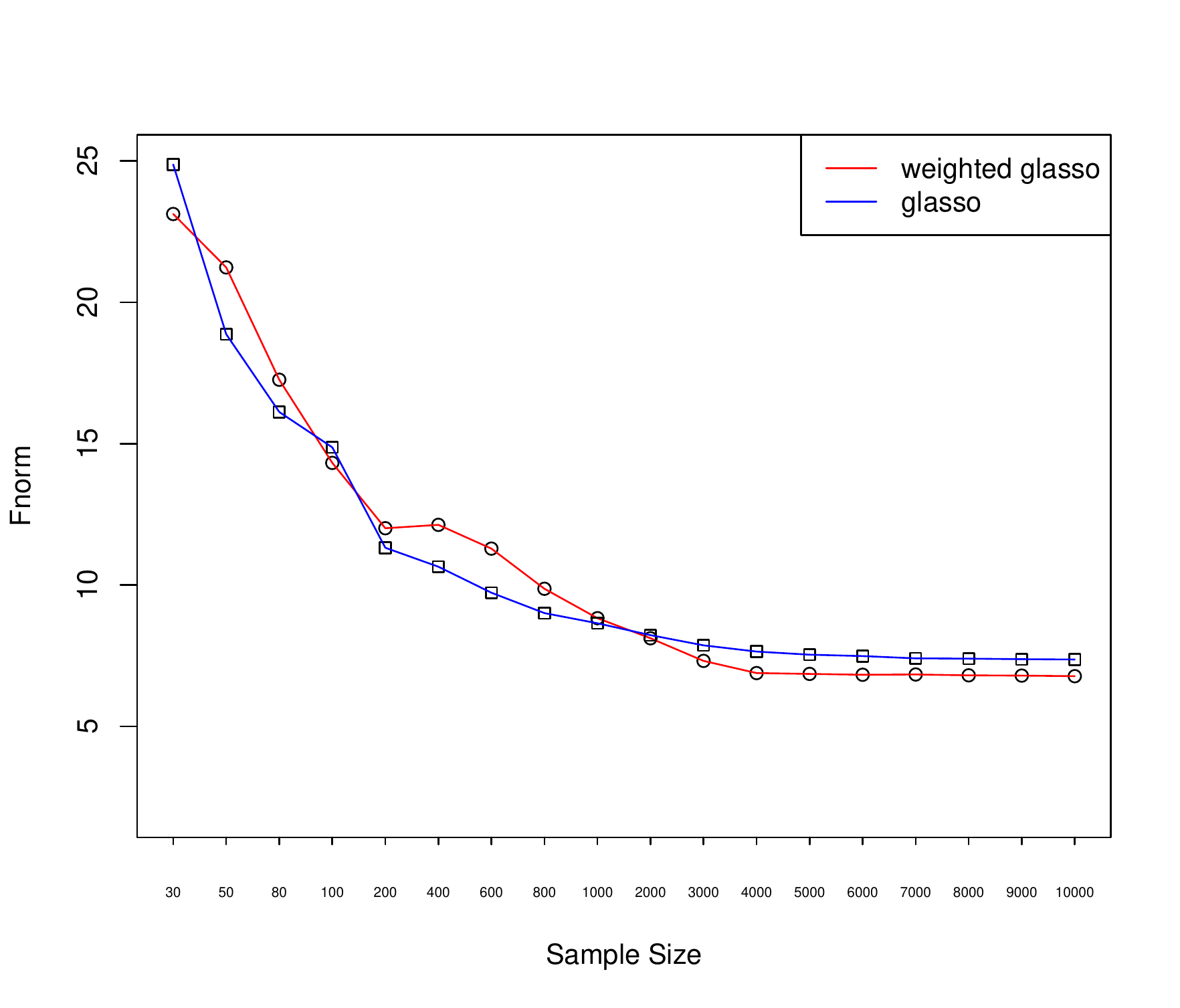}}\\ \hline \hline
	\end{tabular}
	\caption{Convergence of the WGLasso and glasso under the Model 1 setting} \label{fig:converge-Identy}
\end{figure}

\begin{figure}
	\centering
	\begin{tabular}{p{1.8cm}|CC} \hline \hline
		outlier-to-signal ratio&$p=55$ & $p=100$  \\ \hline
		$0\%$&\subfloat{\includegraphics[width=6.7cm,height=6cm]{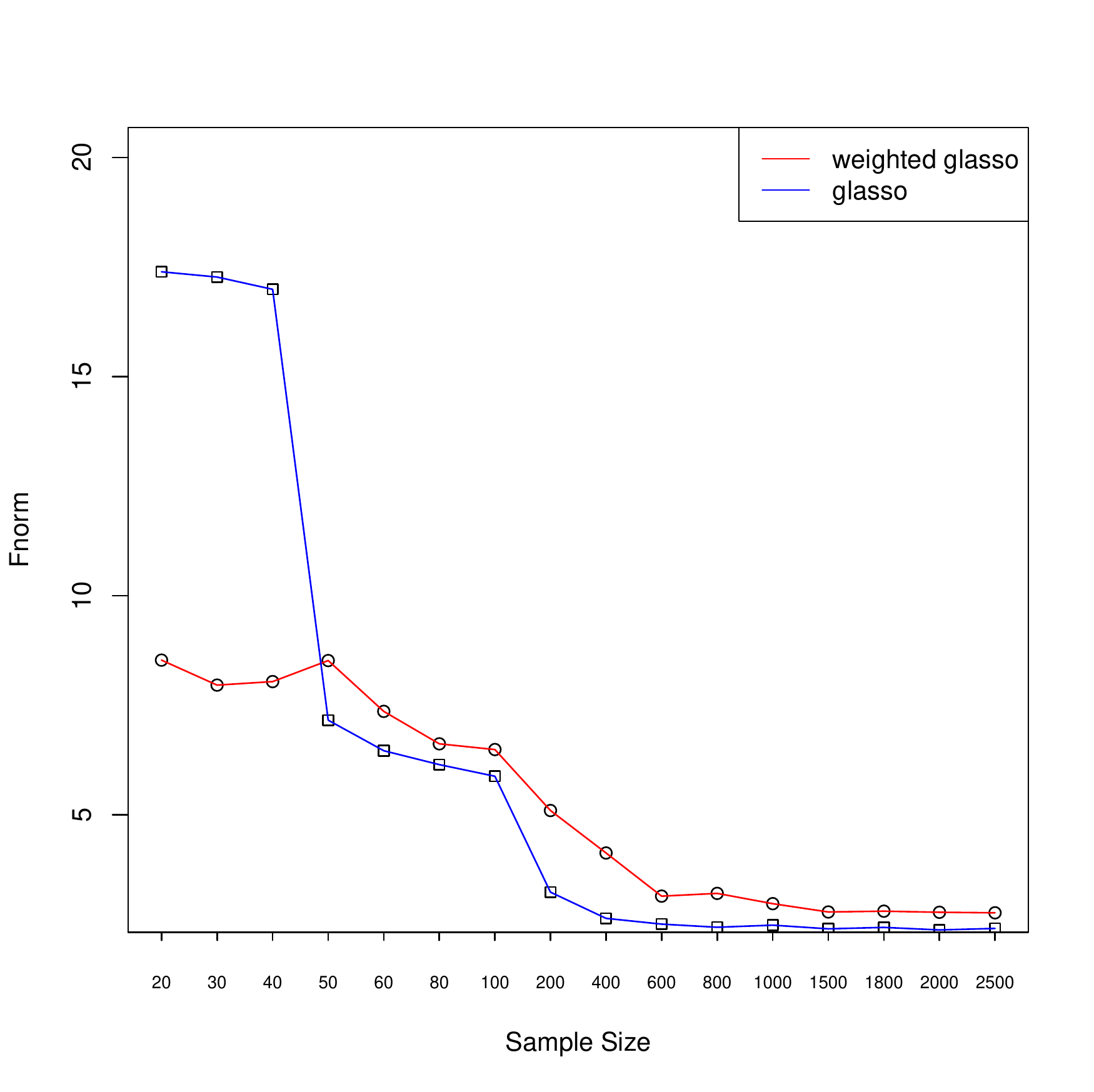}} &\subfloat{\includegraphics[width=6.7cm,height=6cm]{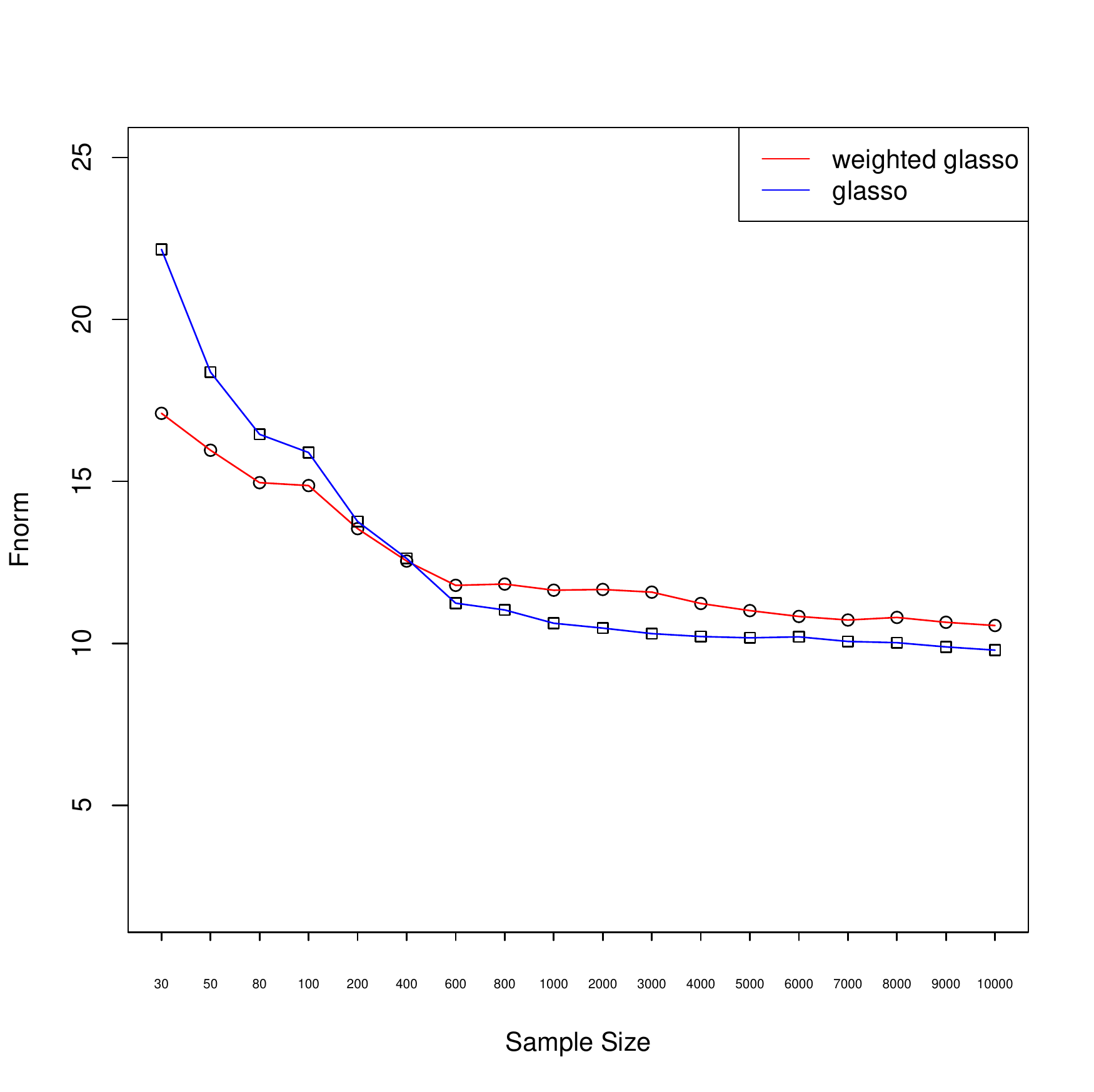}} \\ \hline
		$6\%$ & \subfloat{\includegraphics[width=6.7cm,height=6cm]{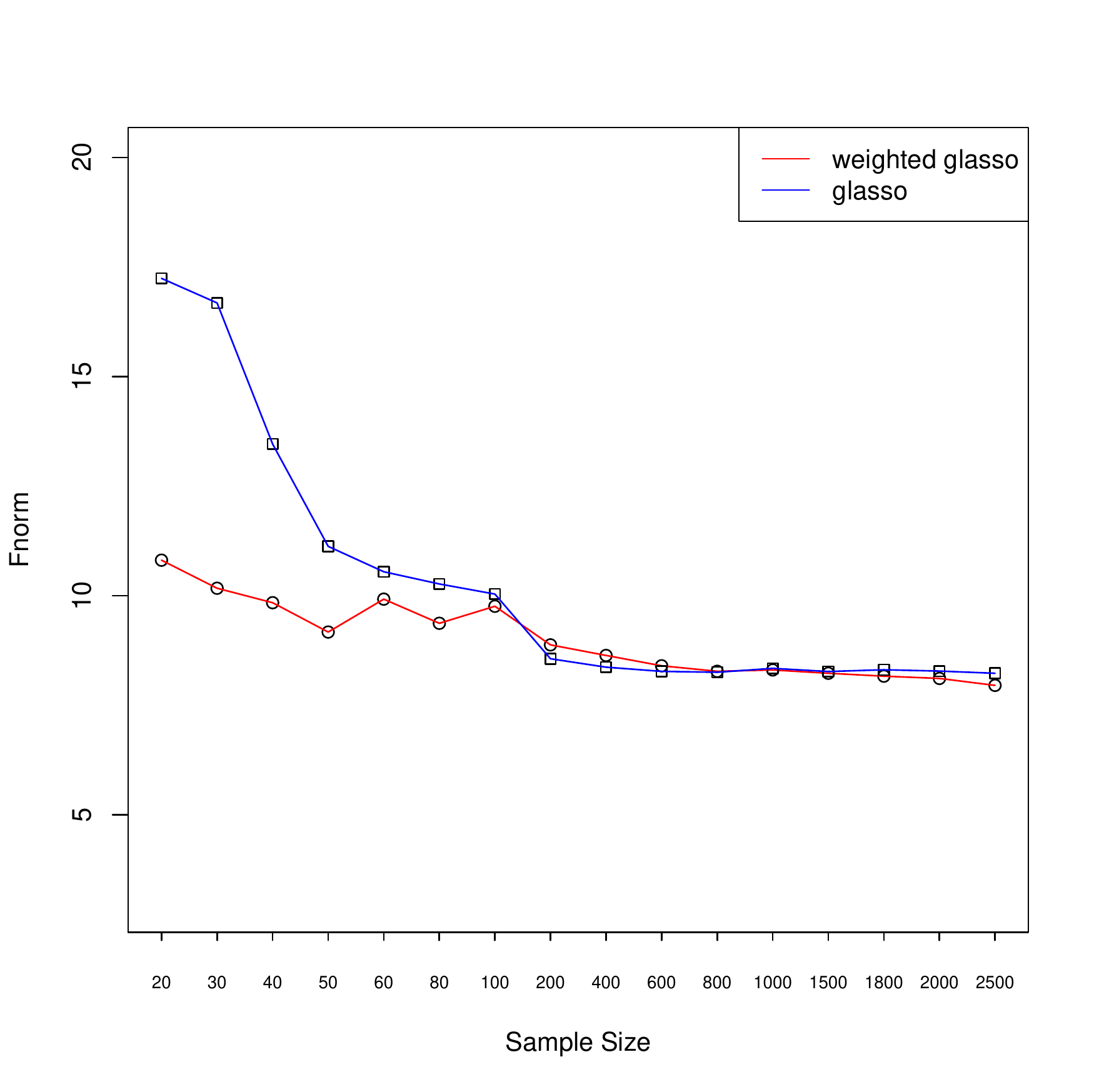}}&\subfloat{\includegraphics[width=6.7cm,height=6cm]{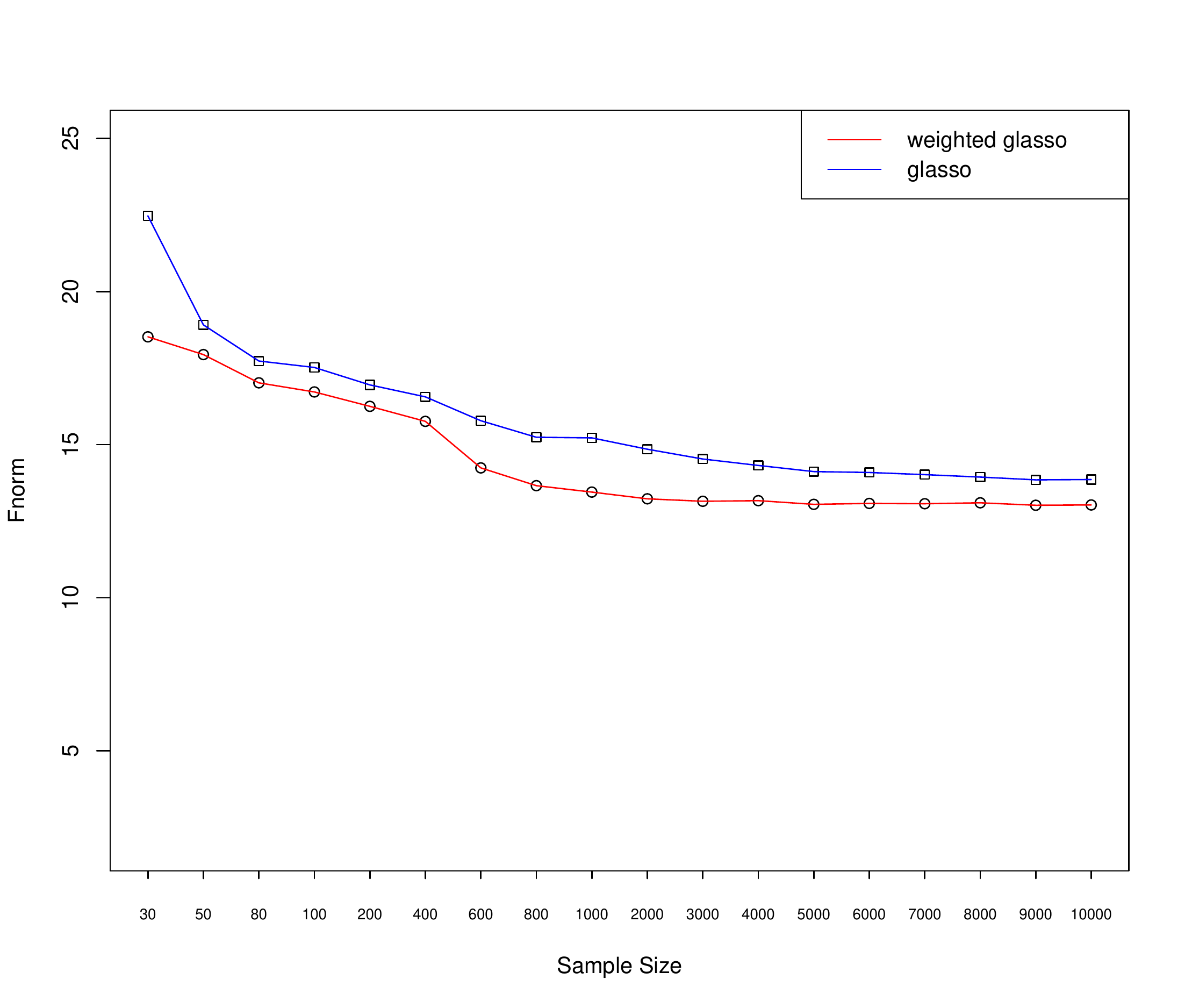}}\\ \hline
		$10\%$ & \includegraphics[width=6.7cm,height=6cm]{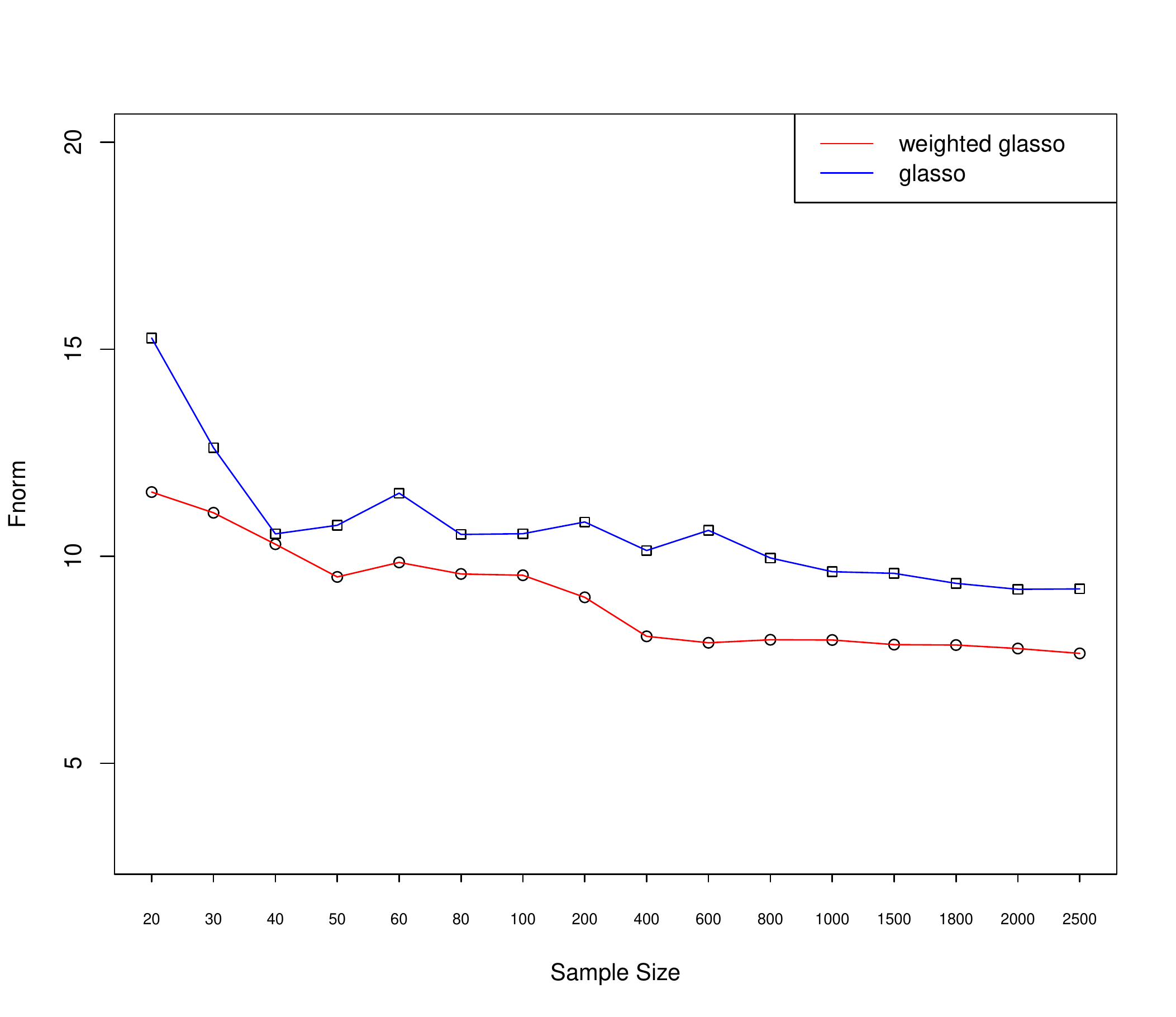}&\subfloat{\includegraphics[width=6.7cm,height=6cm]{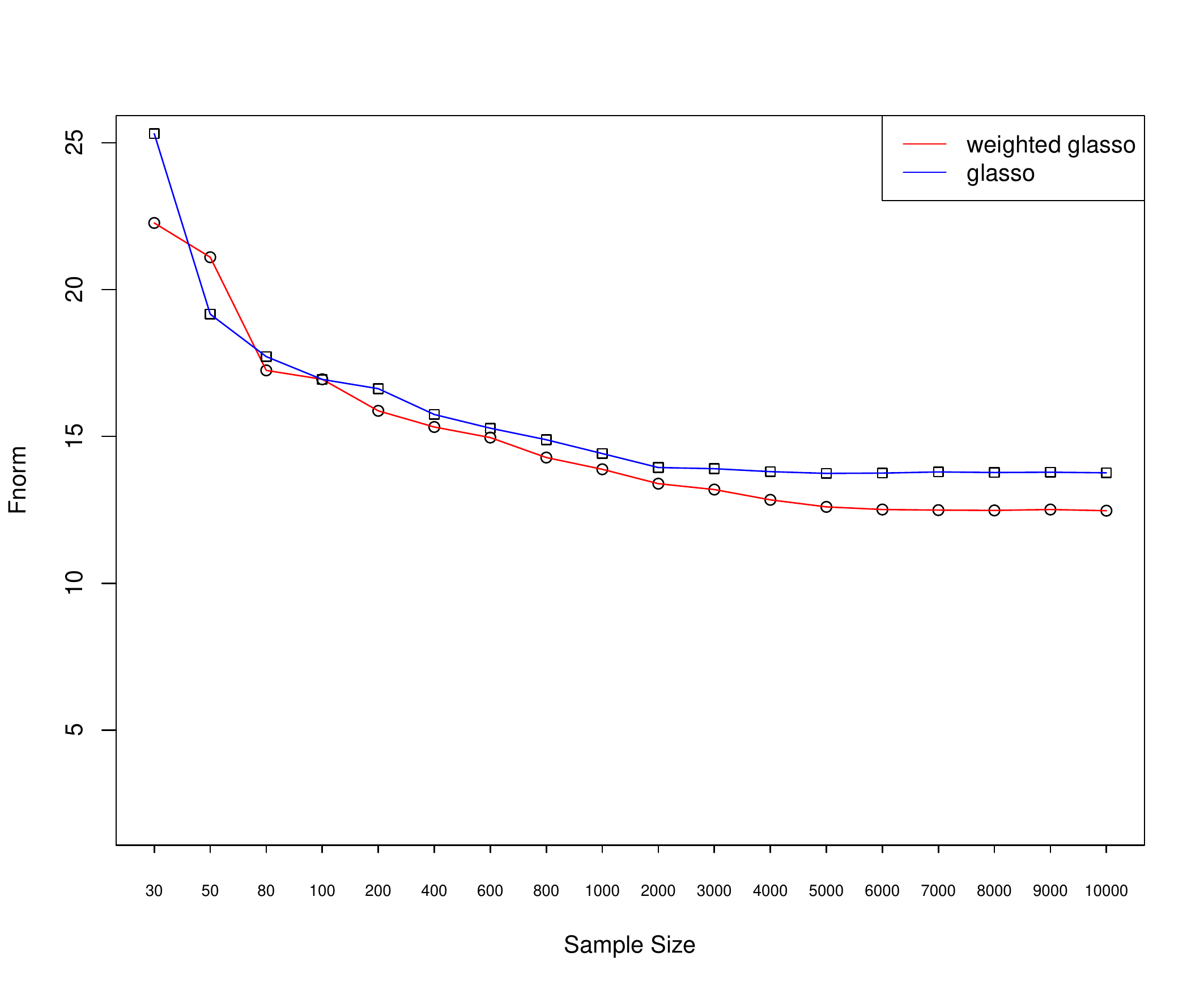}}\\ \hline \hline
	\end{tabular}
	\caption{Convergence of the WGLasso and glasso under the Model 2 setting} \label{fig:converge-AR1}
\end{figure}

\begin{figure}
	\centering
	\begin{tabular}{p{1.8cm}|CC} \hline \hline
		outlier-to-signal ratio&$p=55$ & $p=100$  \\ \hline
		$0\%$&\subfloat{\includegraphics[width=6.7cm,height=6cm]{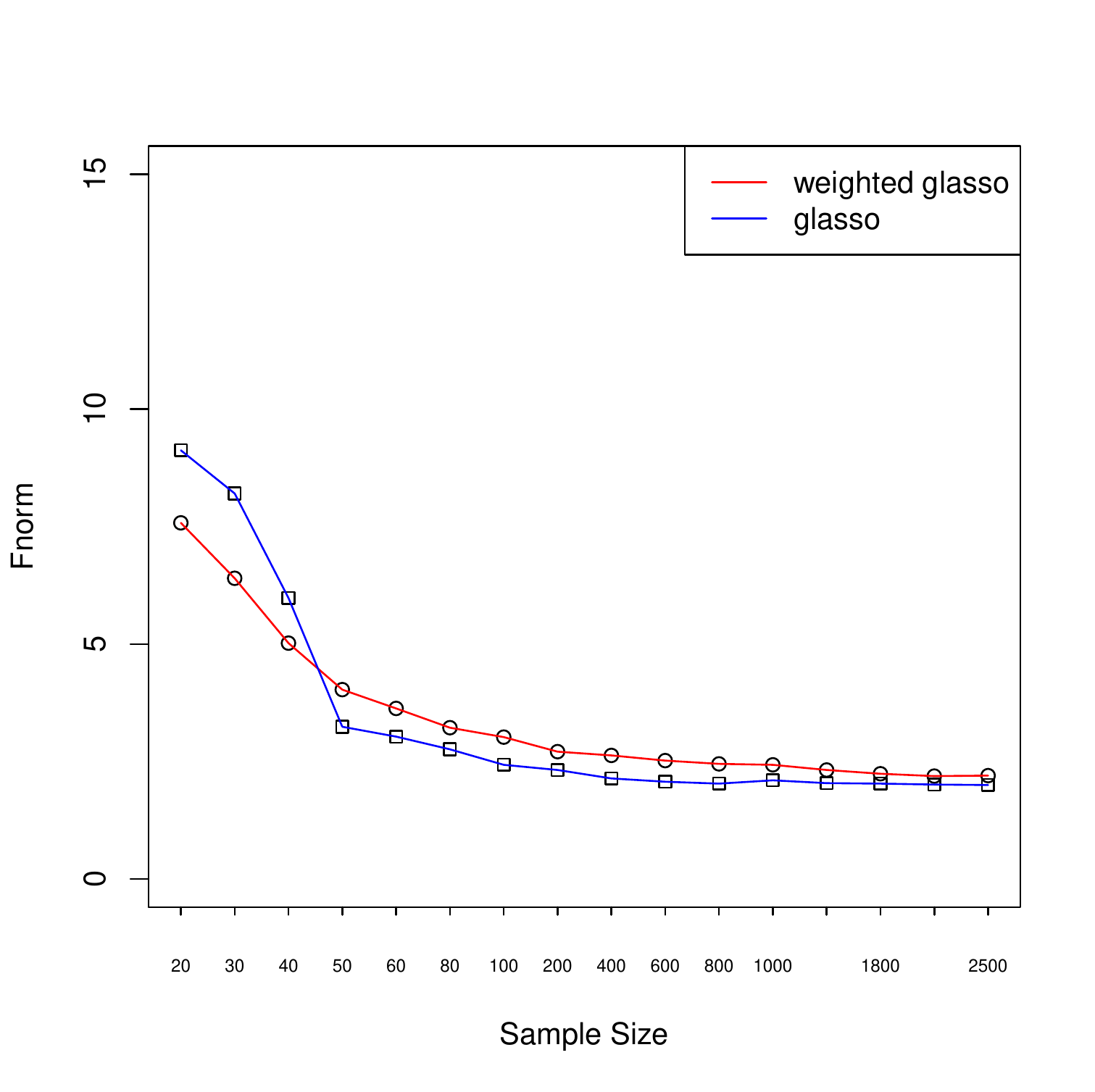}} &\subfloat{\includegraphics[width=6.7cm,height=6cm]{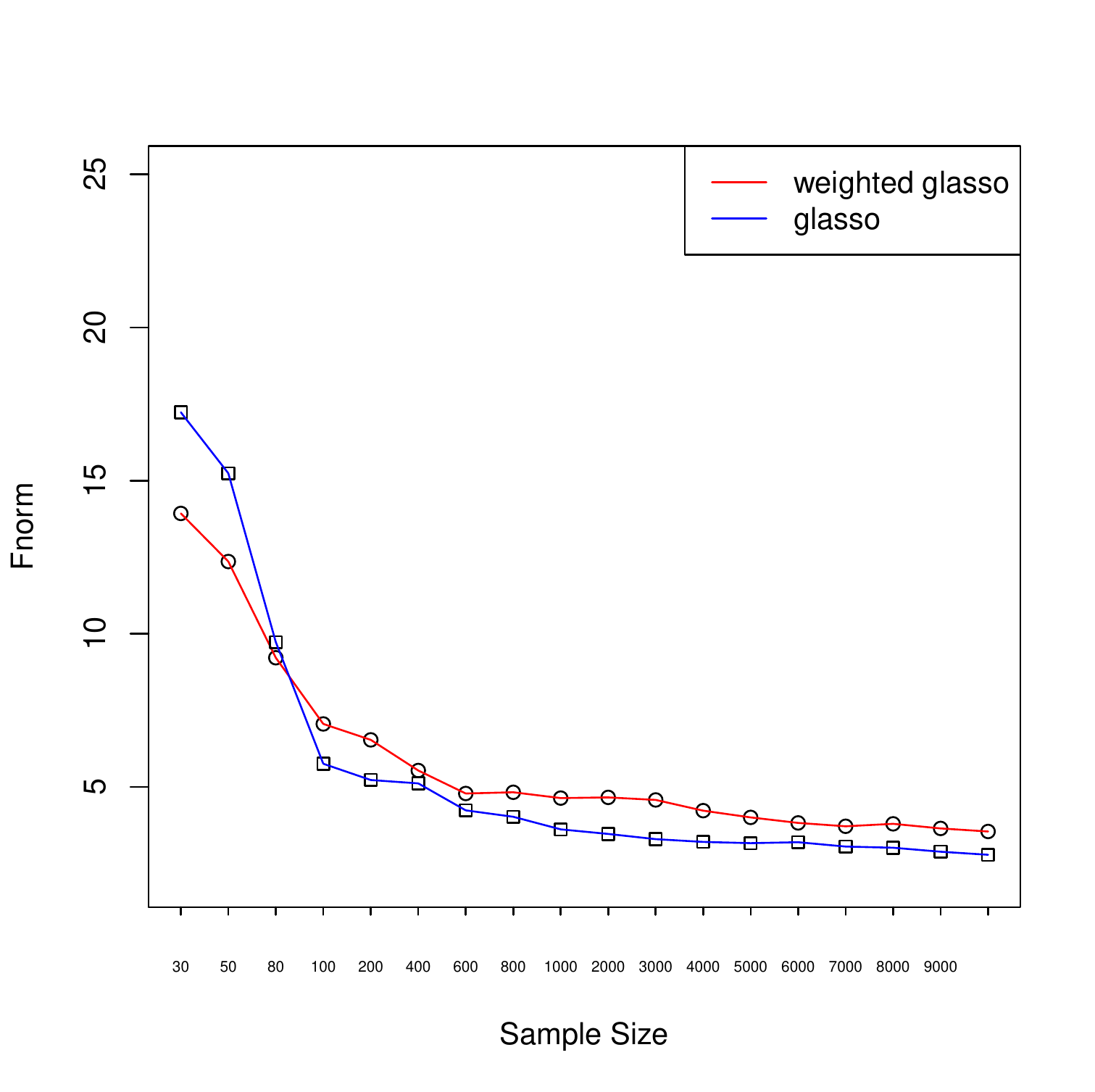}} \\ \hline
		$6\%$ & \subfloat{\includegraphics[width=6.7cm,height=6cm]{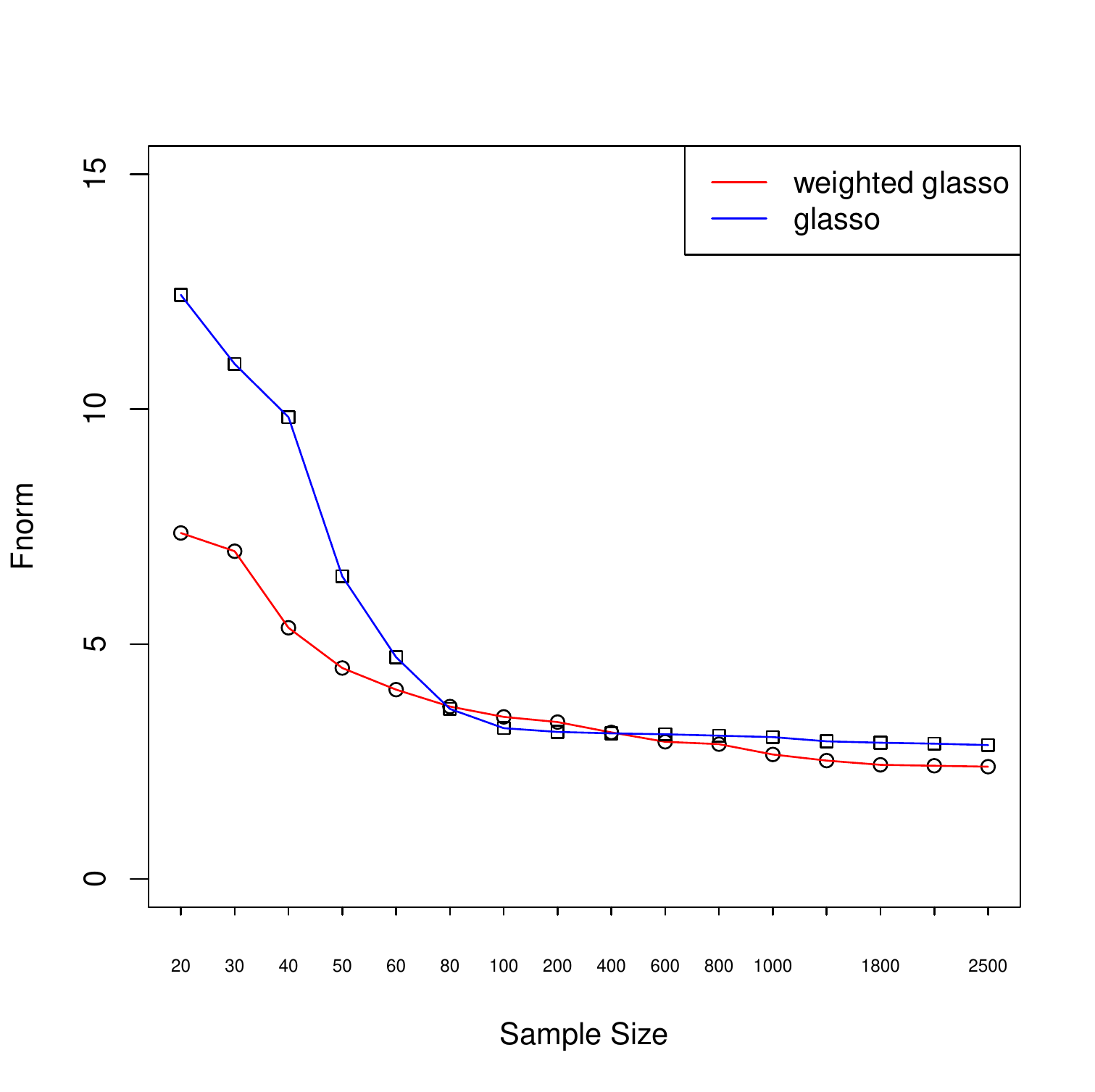}}&\subfloat{\includegraphics[width=6.7cm,height=6cm]{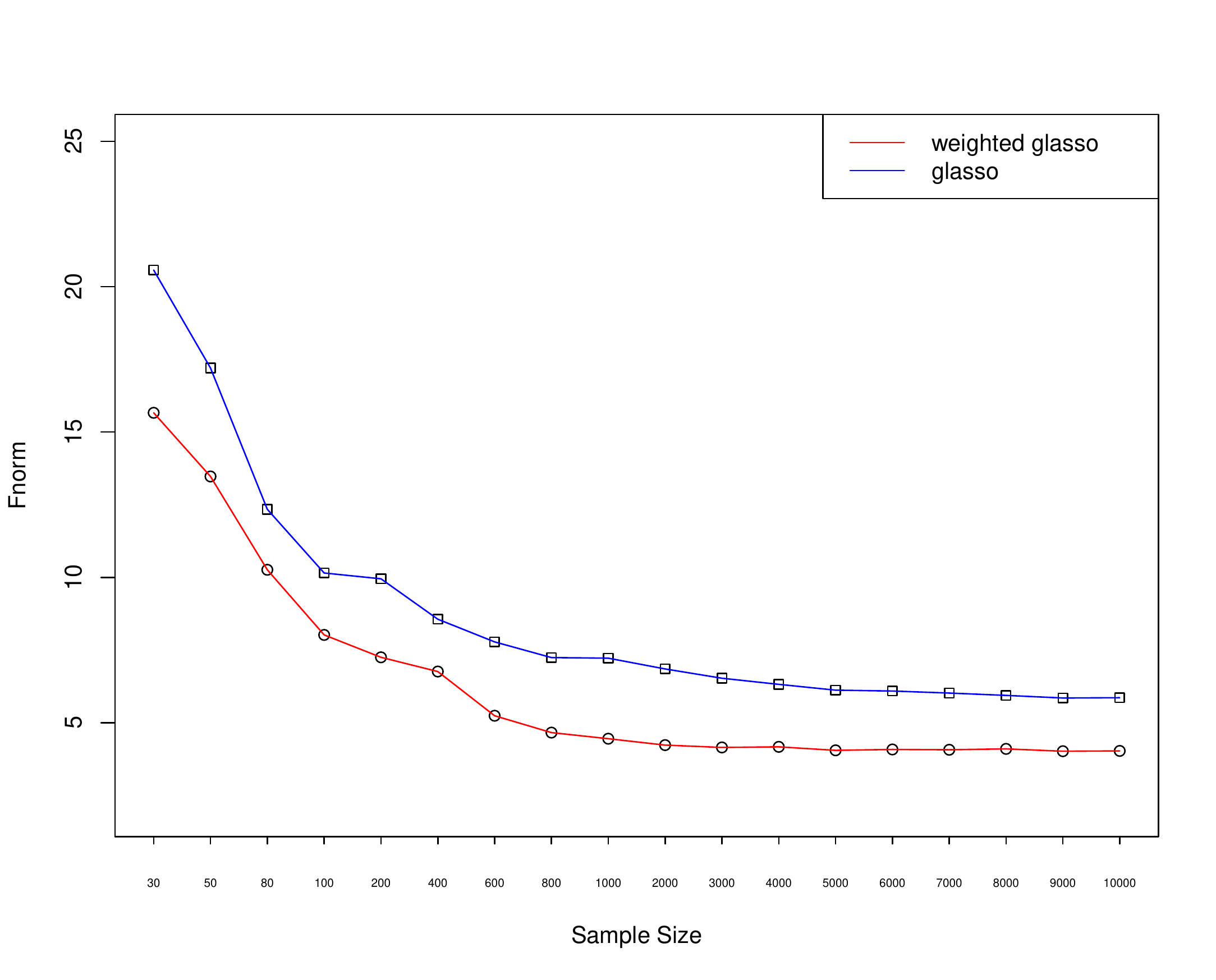}}\\ \hline
		$10\%$ & \includegraphics[width=6.7cm,height=6cm]{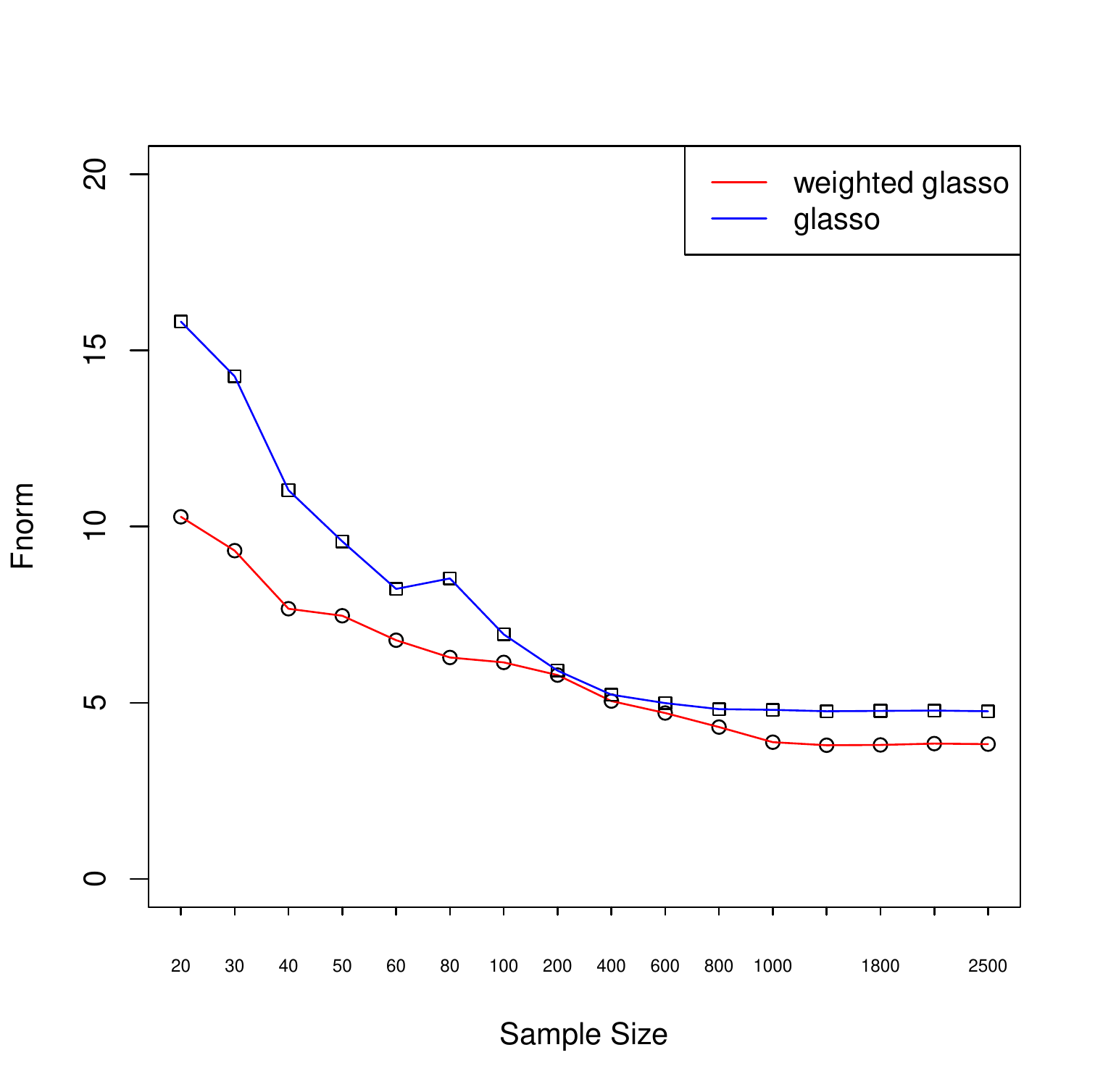}&\subfloat{\includegraphics[width=6.7cm,height=6cm]{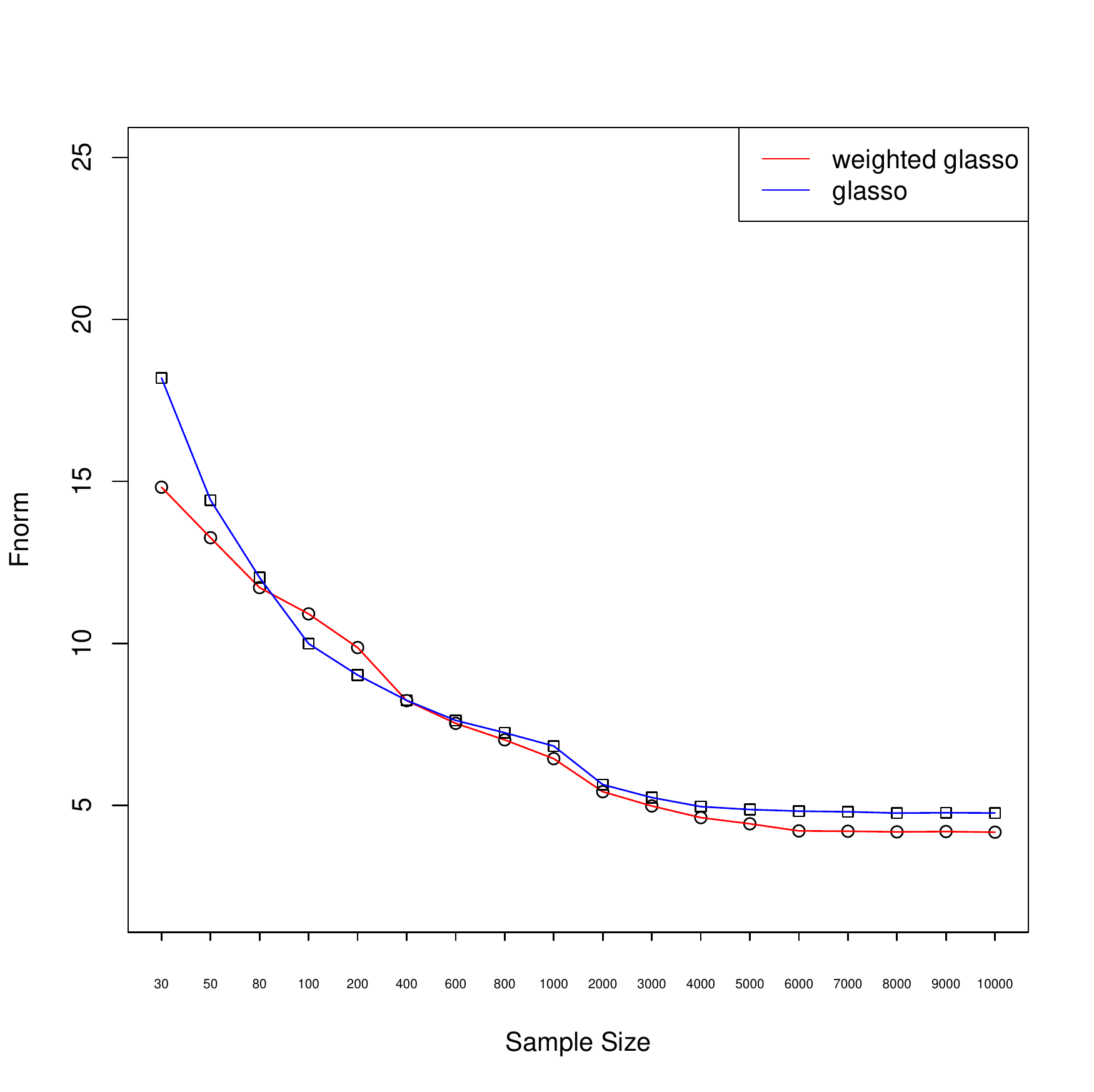}}\\ \hline \hline
	\end{tabular}
	\caption{Convergence of the WGLasso and glasso under the Model 3 setting} \label{fig:converge-RandAR1}
\end{figure}

From the above Monte Carlo simulations and  in comparison with the glasso estimate,
it appears that the estimate from the proposed method converges a bit more slowly to the true inverse covariance matrix, and has a smaller Fnorm distance to the true one when outliers are present.

\subsection{Numerical Study on Initial Estimators}
For the proposed method and other robust methods using adaptive weighting,
an initial estimator can play an important role in the final solution for both computation convergence and estimation and selection performances.
Note that for the proposed method, we take the inverse of the sample covariance matrix as an initial estimator.
When $p>n$, the sample covariance matrix is singular, and we add a small perturbation to its diagonal elements.

\begin{table}[hb]
\begin{center}
{\caption{Simulation results for the proposed method under different initial estimators under outlier-to-signal ratio = $8\%$}
\label{tab:robinit}
\footnotesize   \begin{tabular}{ cc | ccc| ccc |ccc  } 
\hline \hline & & \multicolumn{3}{c|}{M1-I} & \multicolumn{3}{c|}{M2-I} & \multicolumn{3}{c}{M3-I} \\
[0.5ex] $p$ & Method & $F_{1}$ & $Fnorm$ & $KL$ & $F_{1}$ & $Fnorm$ & $KL$
& $F_{1}$ & $Fnorm$ & $KL$ \\
\hline
&Proposed  & 0.41 & 6.65 & 6.31 & 0.33 & 14.03 & 9.39 & 0.33 & 13.44 & 9.10  \\
&          &(0.05) &(0.79) & (0.74)  &(0.03) &(1.03) & (0.57) & (0.03)&(1.18)&(0.72)   \\
80        &robust initial  & 0.35  & 6.96 & 6.76 & 0.29 & 15.31 & 10.38 & 0.30 & 14.81 & 9.96\\
&          & (0.05) & (1.29) & (1.14) &  (0.03) & (2.22) & (1.62) & (0.03)& (1.83) & (1.17) \\
               &one step further  & 0.45  & 6.69 & 6.41 & 0.31 & 14.72 & 9.85 & 0.32 & 14.43 & 9.67\\
&          & (0.07) & (0.83) & (0.62) &  (0.03) & (1.51) & (0.91) & (0.03)& (1.10) & (0.63) \\
\hline
& & \multicolumn{3}{c|}{M1-II} & \multicolumn{3}{c|}{M2-II} & \multicolumn{3}{c}{M3-II} \\
[0.5ex]   &          & $F_{1}$ & $Fnorm$ & $KL$ & $F_{1}$ & $Fnorm$ & $KL$
& $F_{1}$ & $Fnorm$ &$KL$  \\
\hline
&Proposed  & 0.41 & 7.87 & 7.52 & 0.34 & 16.03 & 10.71 & 0.33 & 15.94 & 10.68\\
&          &(0.05) &(1.44) & (1.16)  &(0.03) &(1.77) &(1.11)   &(0.03) &(1.13) &(0.87)   \\
80        &robust initial  & 0.35  & 8.44 & 8.16 & 0.30 & 18.19 & 12.31 & 0.30 & 18.23 & 12.42\\
&          & (0.07) & (1.48) & (1.26) &  (0.03) & (2.07) & (1.53) & (0.03)& (2.57) & (1.82) \\
             &one step further & 0.41 & 8.65 & 8.07 & 0.33 & 16.90 & 11.35 & 0.32 & 17.55 & 11.80\\
&          & (0.03) & (2.05) & (1.65) &  (0.03) & (1.61) & (0.99) & (0.02)& (2.10) & (1.42) \\
\hline
\end{tabular}
}
\end{center}
\end{table}

To evaluate the role of the initial estimator for the proposed method, we consider two alternatives of initial estimators.
The first one, denoted as ``robust initial," uses the robust covariance matrix in \"{O}llerer  and Croux (2015) as an initial estimator for our Algorithm 1.
The second one, denoted as ``one-step further," considers using the estimator from currently proposed method as an initial estimator for Algorithm 1 (i.e., re-conducting the iteratively weighted graphical lasso method).
Table \ref{tab:robinit} reports the performance results under the setting of $8\%$ outliers and $p=80$ based on 100 simulations. The standard errors are in parentheses.

From the results (i.e., the mean value and standard error of performance measures) in the table,
it is seen that the proposed method under robust initial estimators generally have a comparable performance with the original proposed method using the sample covariance matrix as an initial estimator.

\section{Case Study} \label{sec:case}
In this section, we illustrate the merits of the proposed method through a real-world application on gene network inference for breast cancer data.

The estimation of inverse covariance matrix is widely used in gene network applications to identify important relationships among genes.
Here we adopt the gene expression data provided by Hess et al.\,(2006) to perform inference on the genetic networks.
This data set contains $133$ patients with stage I--III breast cancer, who were treated with chemotherapy. Two clusters are defined based on the patient responses
to the treatment: pathologic complete response (pCR) and residual
disease (not-pCR). Hess et al.\,(2006) and Natowicz et al.\,(2008)
identified $26$ key genes important for the treatment.
A detailed description of these genes is listed in Appendix \ref{sec:app2}.
As suggested by Ambroise et al.\,(2009), cases from the two classes of pCR (34 patients) and not-pCR (99 patients) do not have the same distribution since they were treated under different experimental conditions. Thus we apply the proposed inverse covariance matrix estimation procedure on each cluster separately.
The proposed method of estimating ${\bf \Omega}$ is compared with the glasso and the tlasso methods.
Here the LW method is not included for comparison since it would not produce a sparse graph for the gene network. The hub genes, those with most connected edeges are potentially most significant for breast cancer and would be emphasized in a further genetic study.

\begin{figure}
\caption{Results of three methods in breast cancer gene
expressions data} \label{res-app}
 \centering
\begin{tabular}{V|c|c|c} \hline \hline
   &WGLasso & glasso & tlasso \\ \hline
 pCR class & \subfloat{\includegraphics[width=0.3\textwidth]{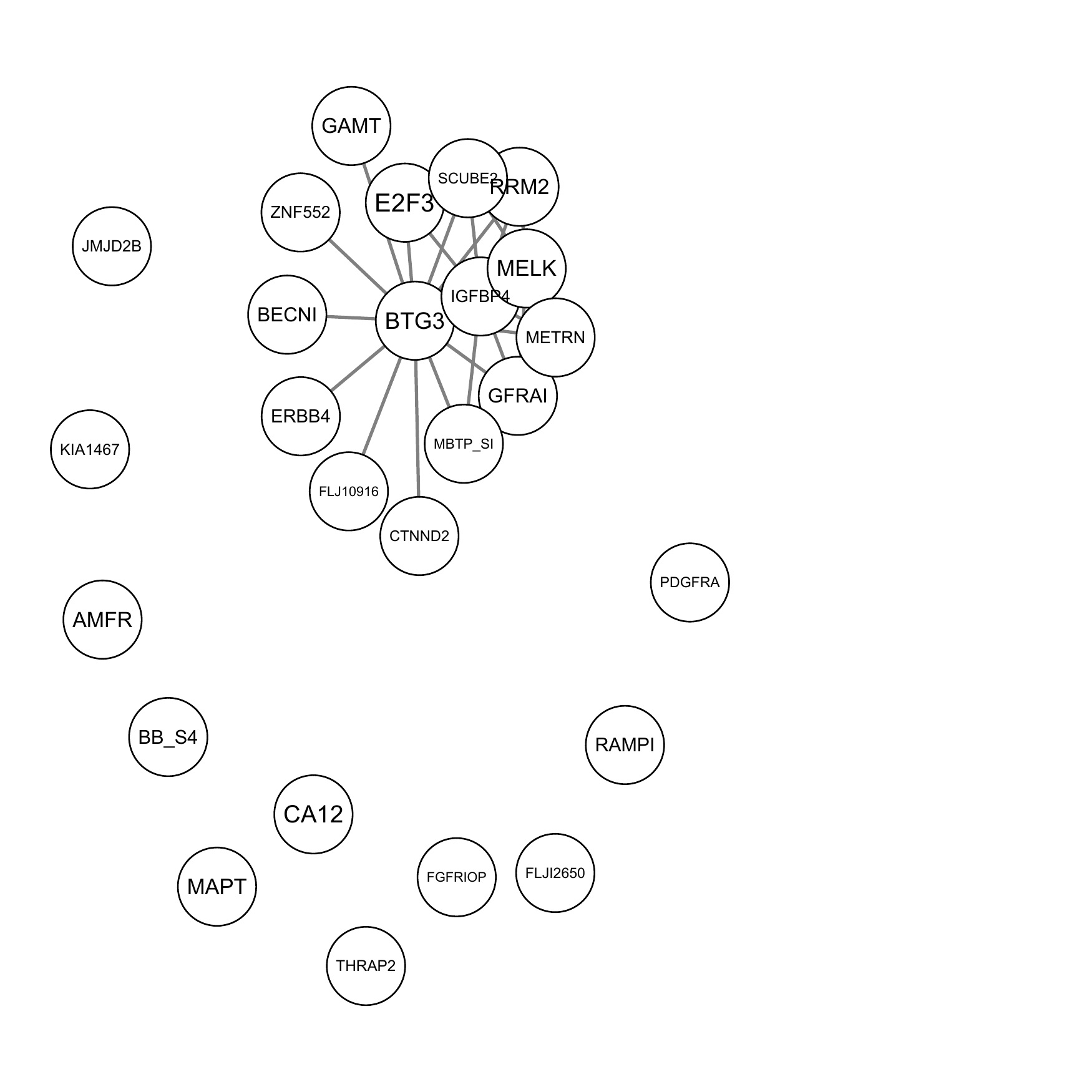}}
          &\subfloat{\includegraphics[width=0.3\textwidth]{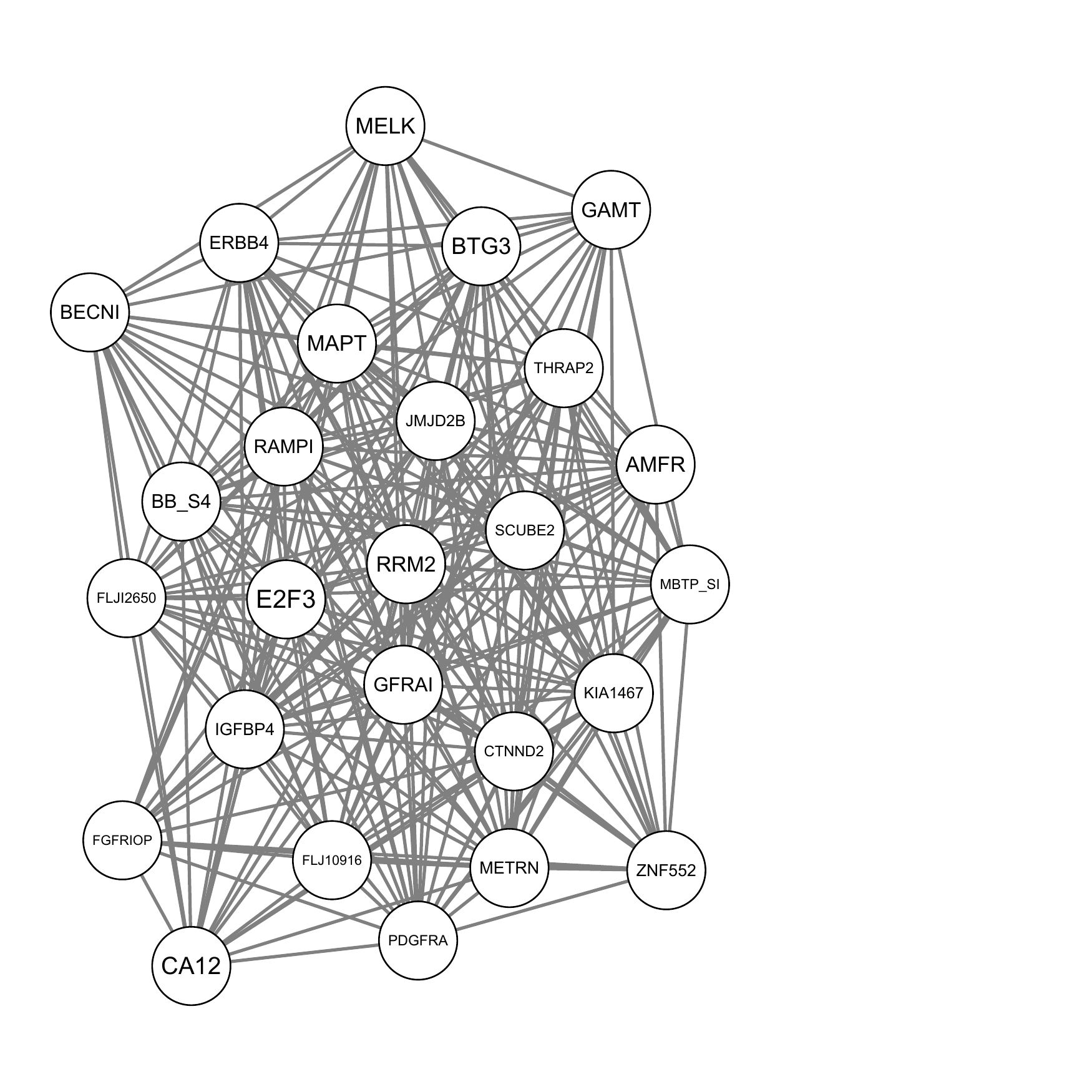}}
          & \subfloat{\includegraphics[width=0.3\textwidth]{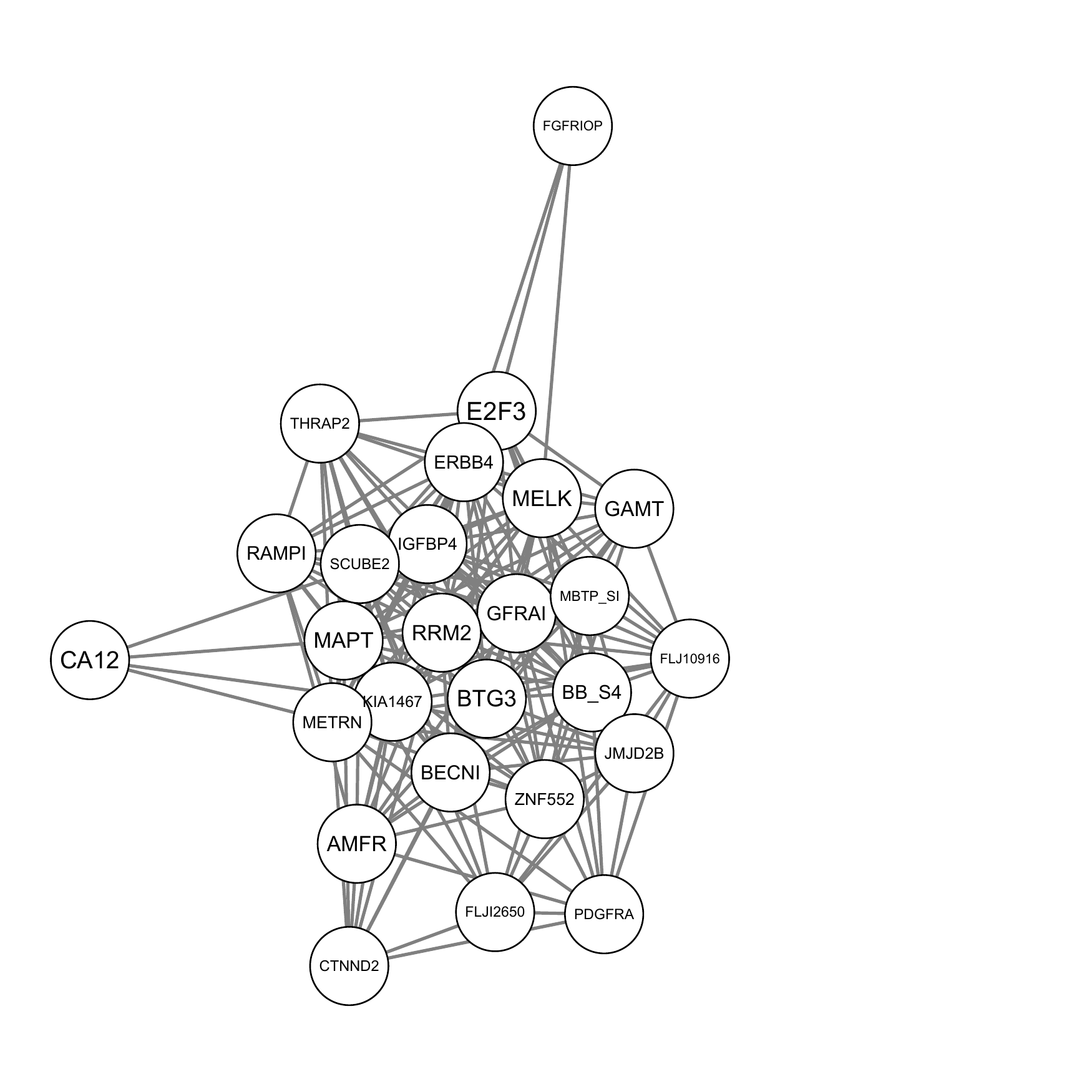}}
  \\ \hline
  not-pCR class  &
  \subfloat{\includegraphics[width=0.3\textwidth]{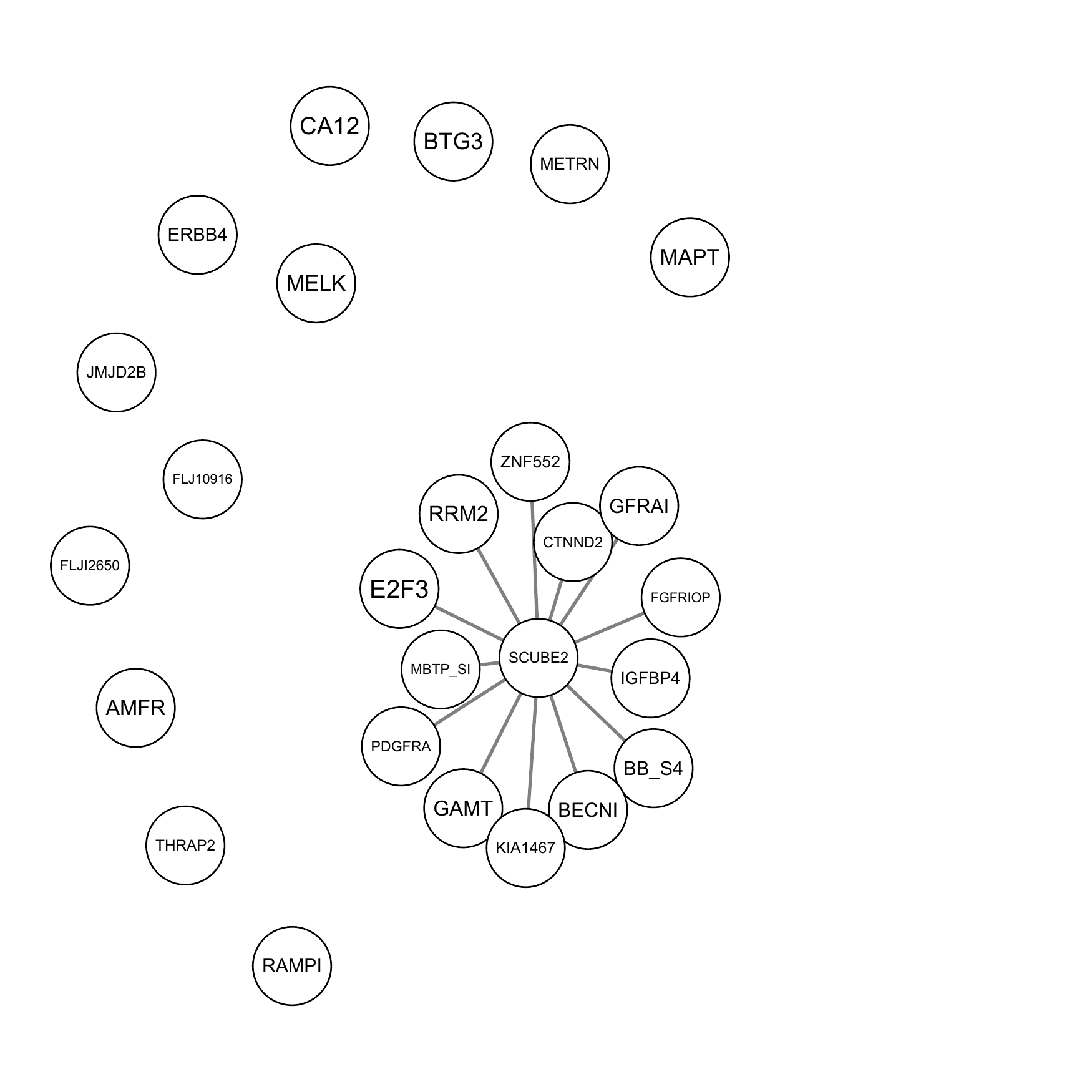}}
    &
     \subfloat{\includegraphics[width=0.3\textwidth]{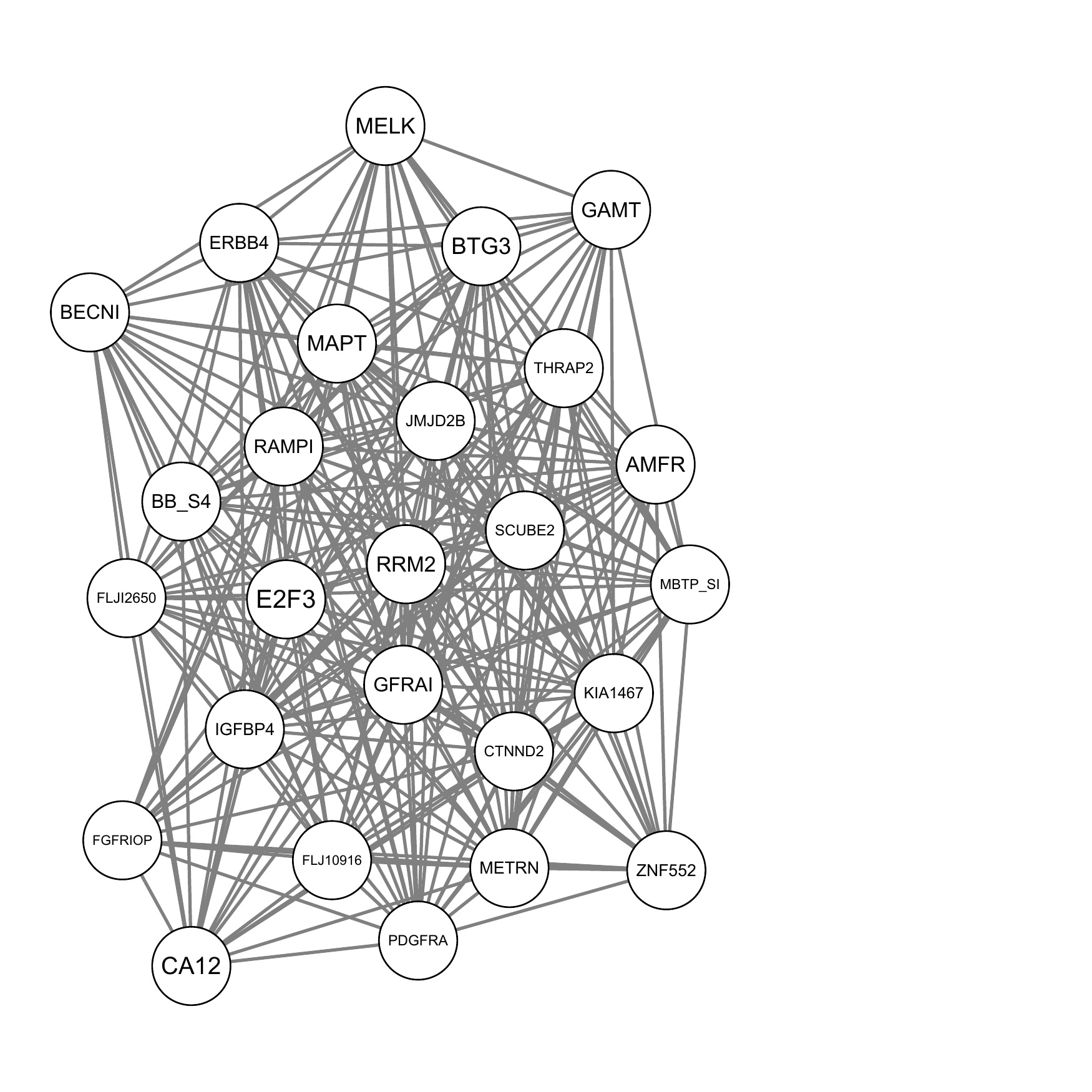}}
    &
      \subfloat{\includegraphics[width=0.3\textwidth]{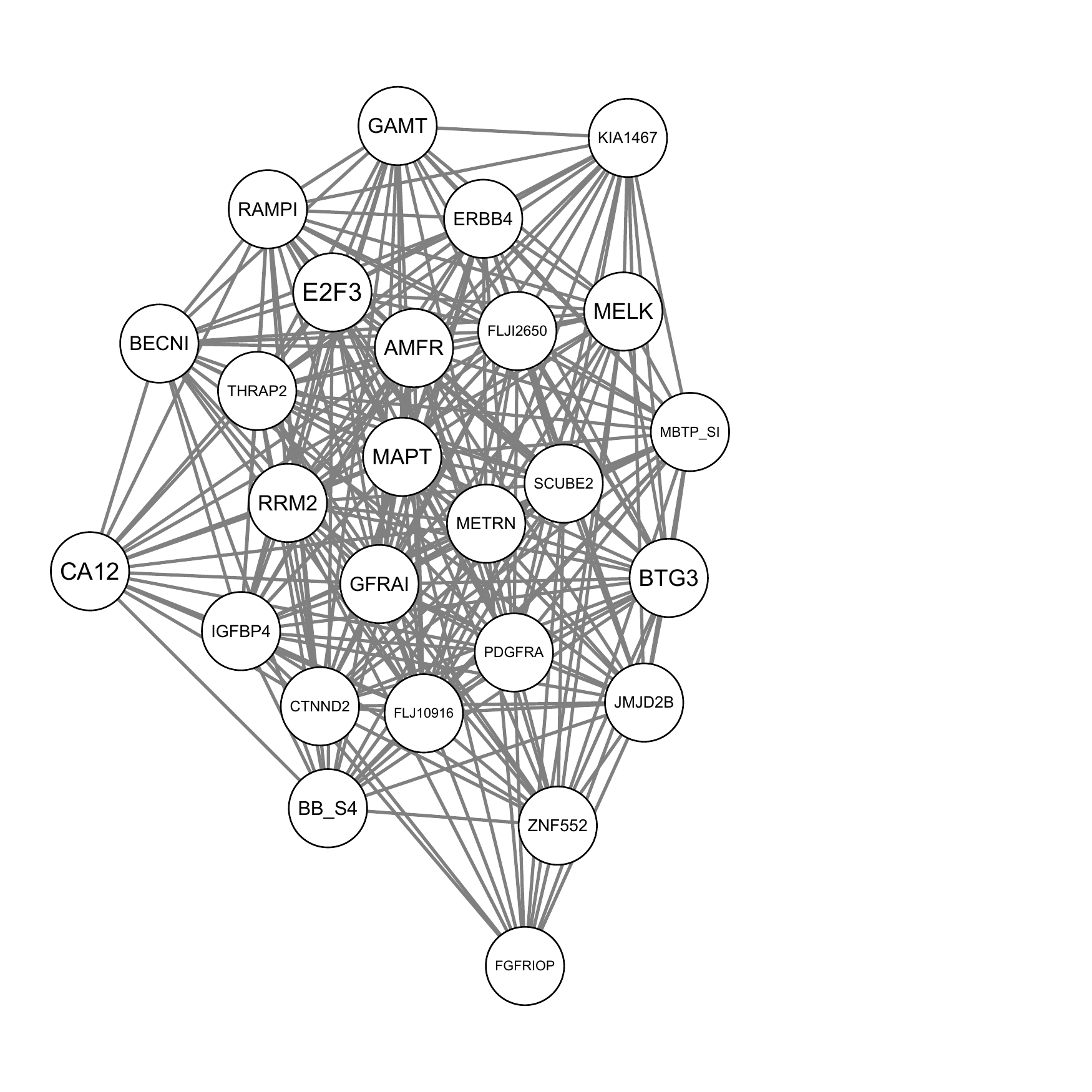}}
  \\  \hline
   both classes  &
   \subfloat{\includegraphics[width=0.3\textwidth]{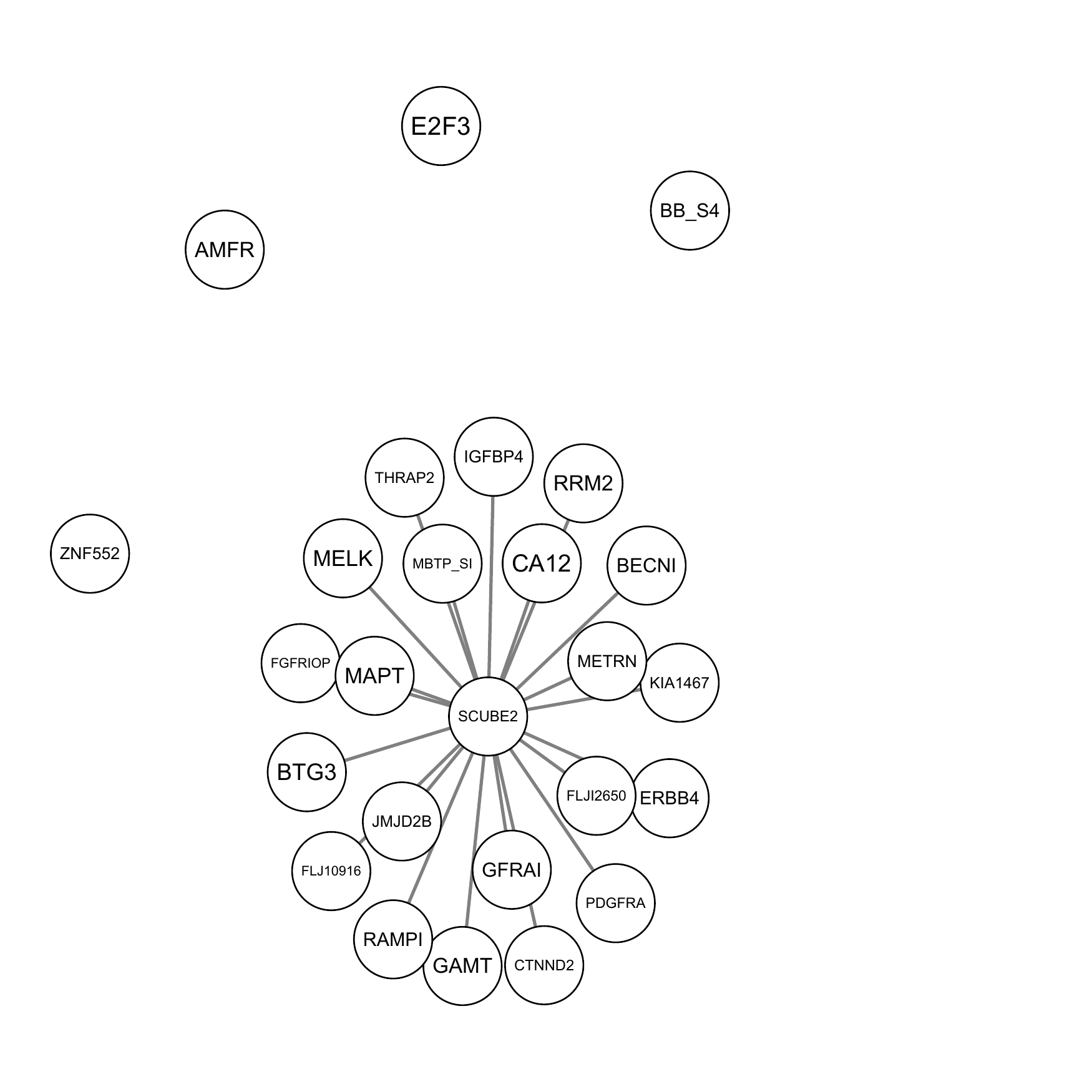}}
   &
   \subfloat{\includegraphics[width=0.3\textwidth]{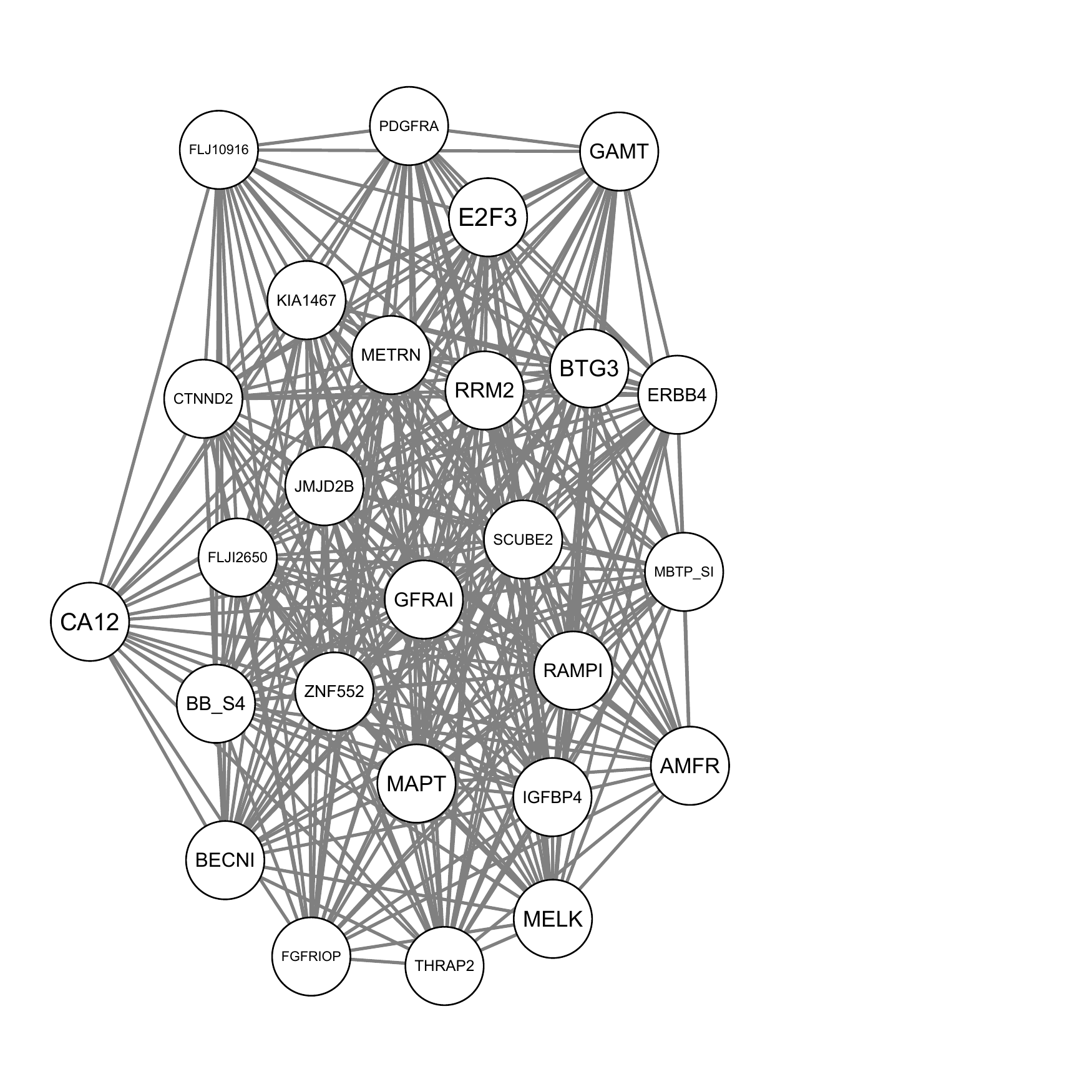}}
   &
   \subfloat{\includegraphics[width=0.3\textwidth]{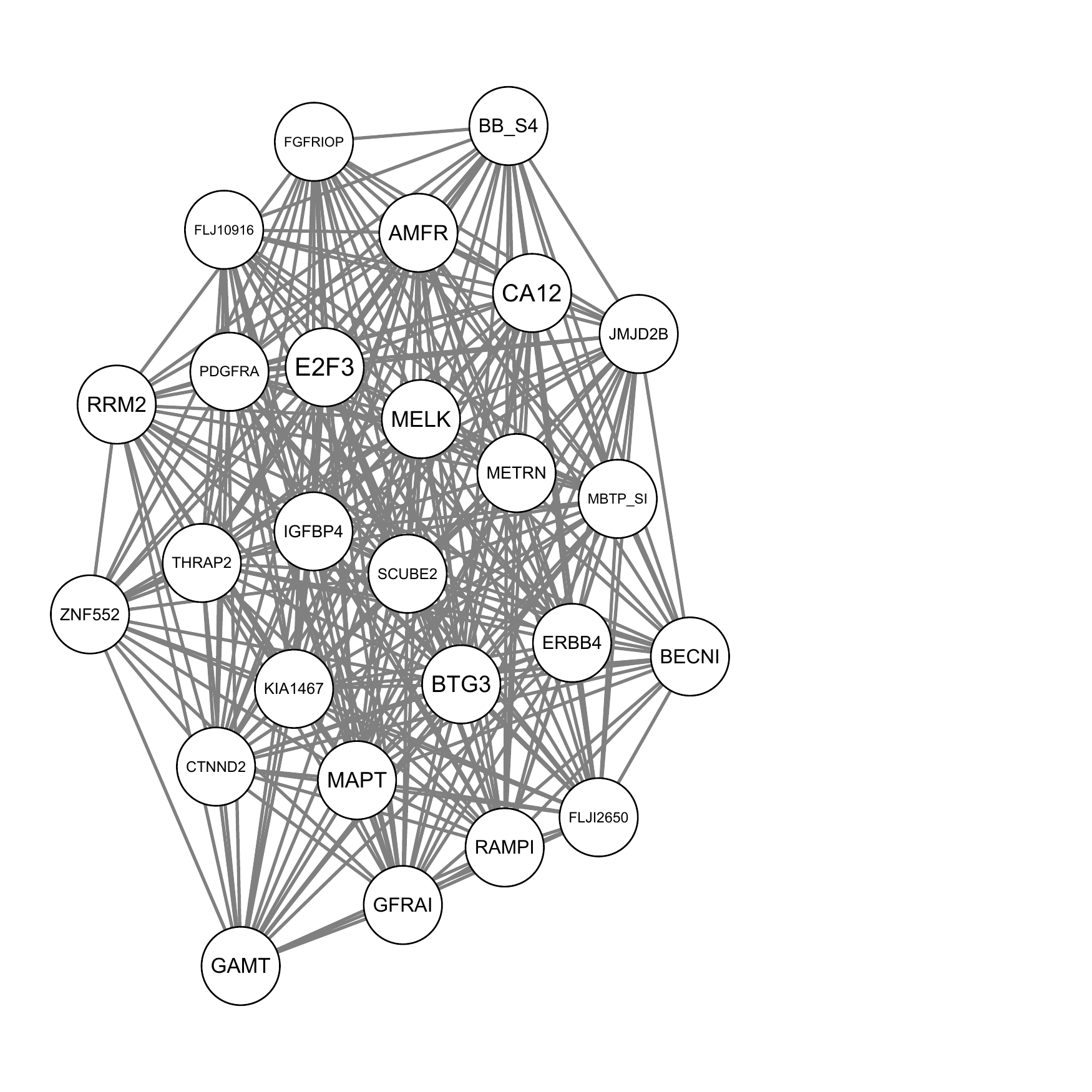}}
   \\  \hline \hline

 \end{tabular}
\end{figure}

\begin{sidewaysfigure}
\caption{Results of proposed WGLasso in gene
expressions data under three levels of penalty values} \label{res-app-wgl}
 \centering
\begin{tabular}{V| C |C |C}\hline \hline
  &low level ($\rho = 0.1$) & middle level ($\rho = 0.6$) & high level ($\rho = 0.9$) \\ \hline
  pCR class &\subfloat{\includegraphics[width=0.28\textwidth]{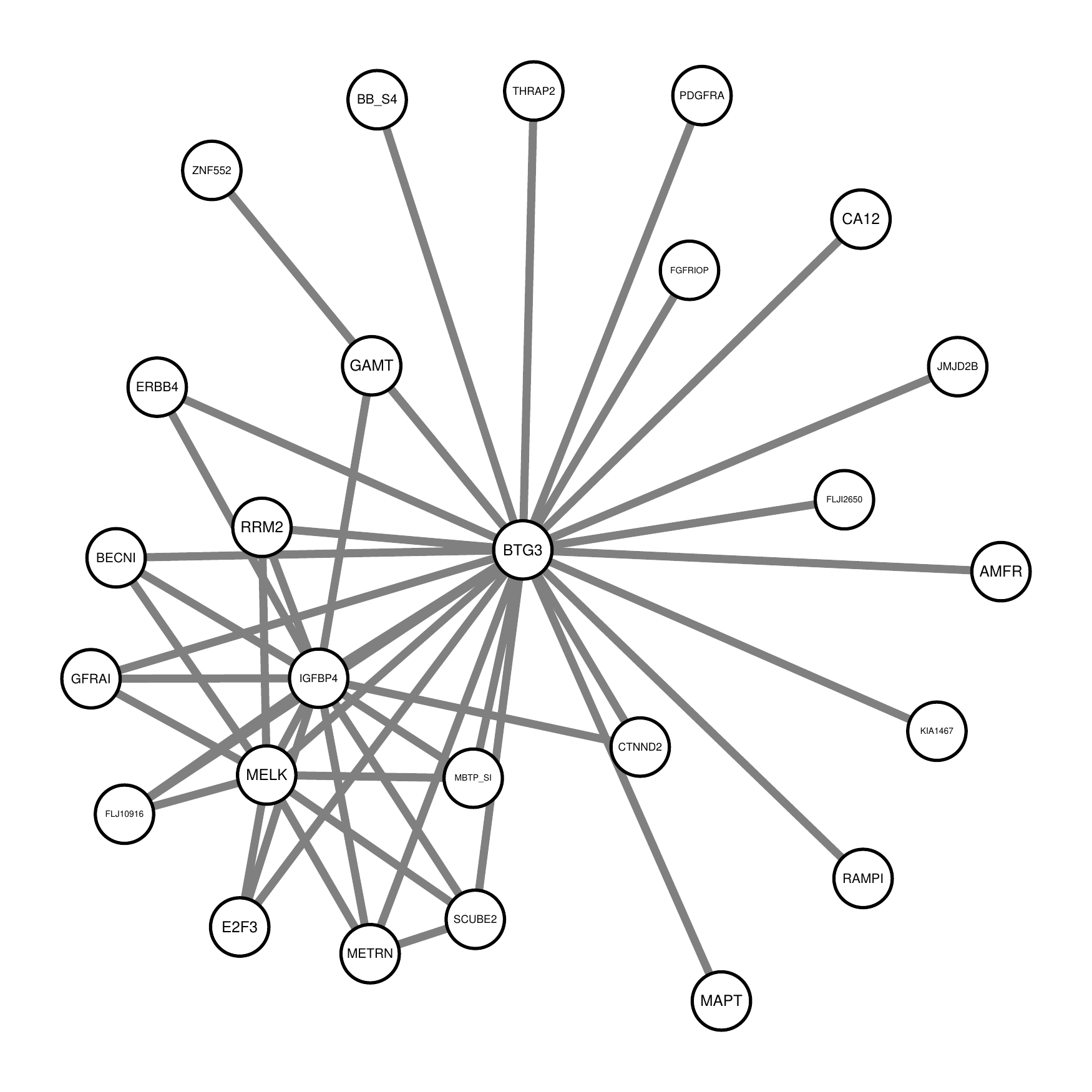}}
   &
    \subfloat{\includegraphics[width=0.28\textwidth]{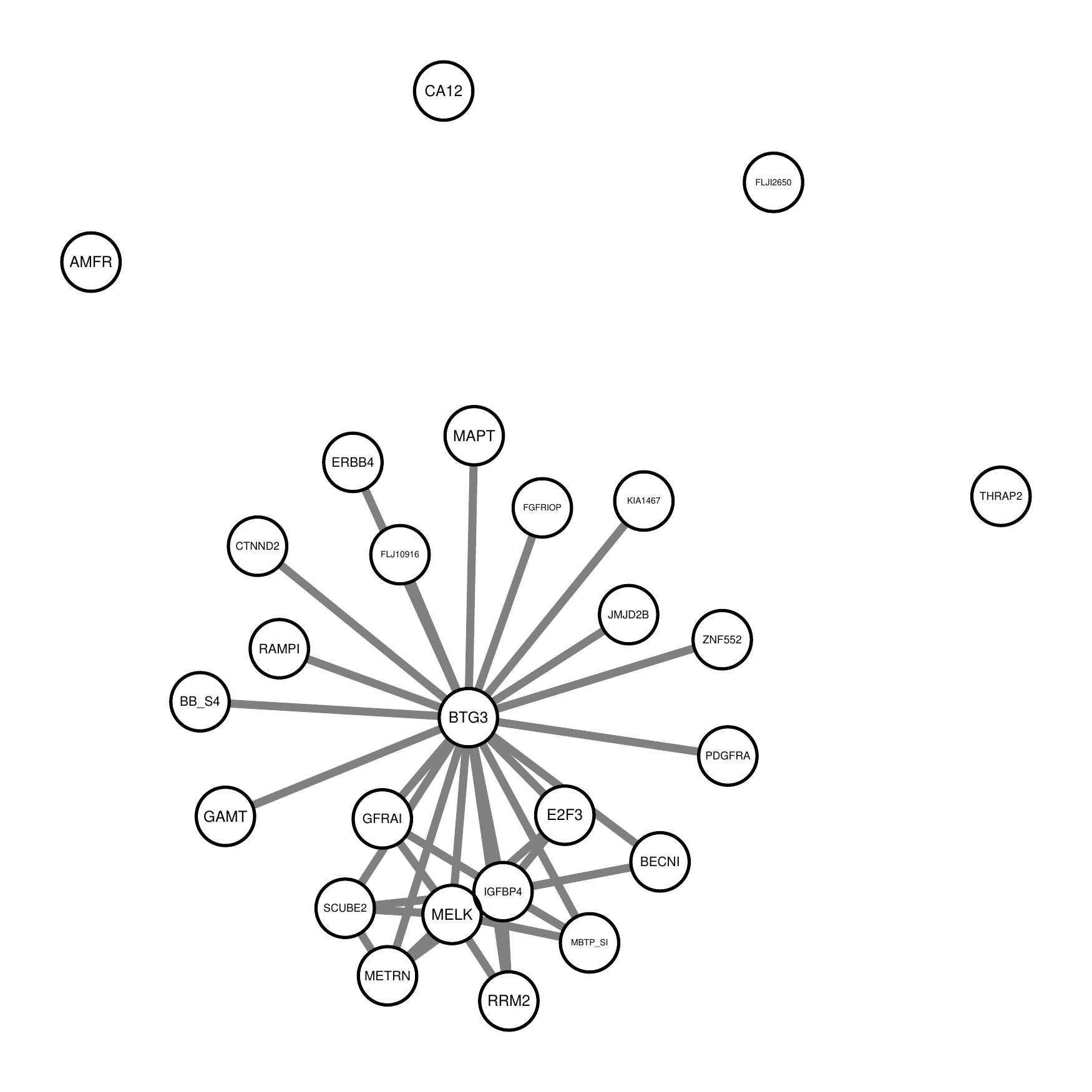}}
    &  \subfloat{\includegraphics[width=0.28\textwidth]{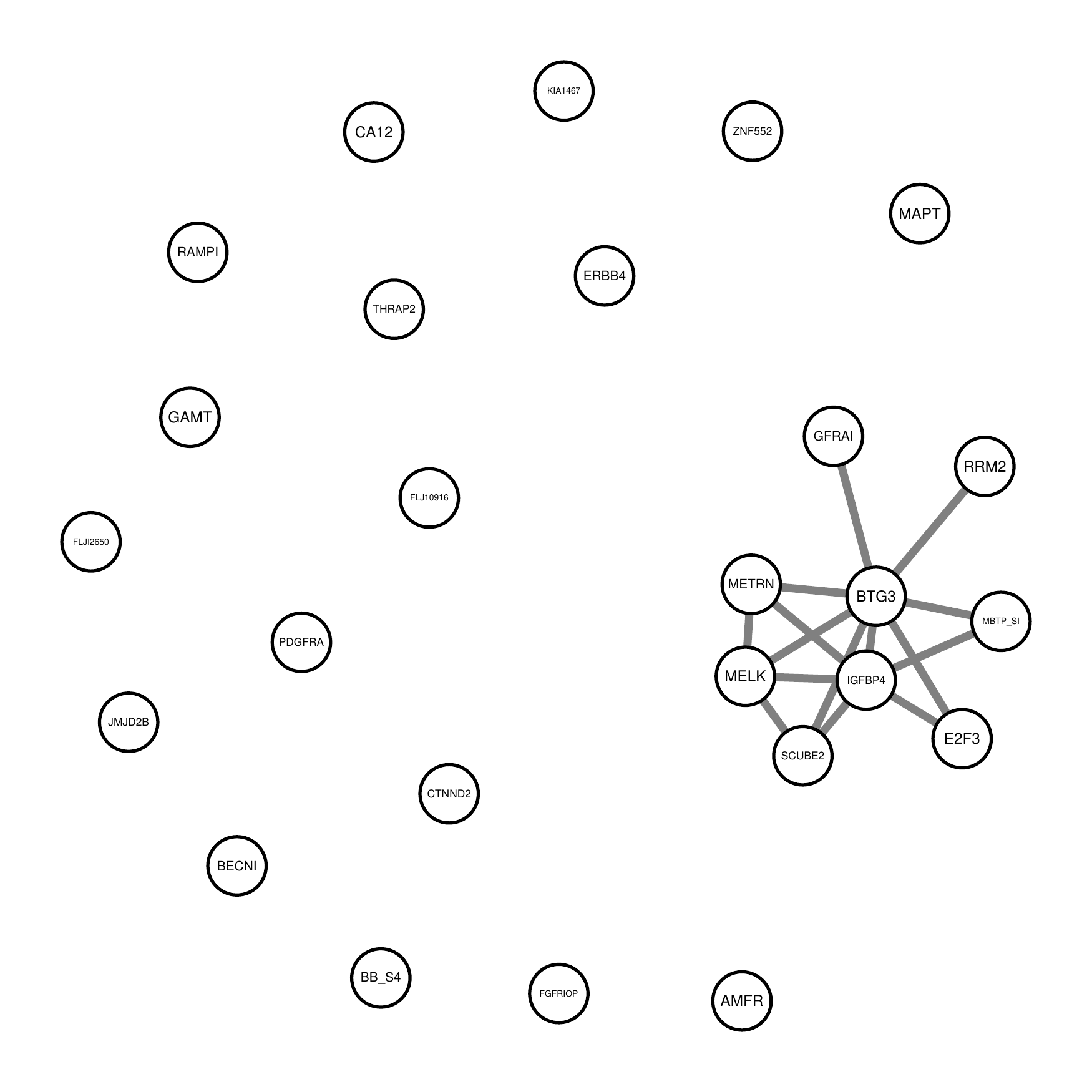}}
  \\ \hline
  not-pCR class  & \subfloat{\includegraphics[width=0.28\textwidth]{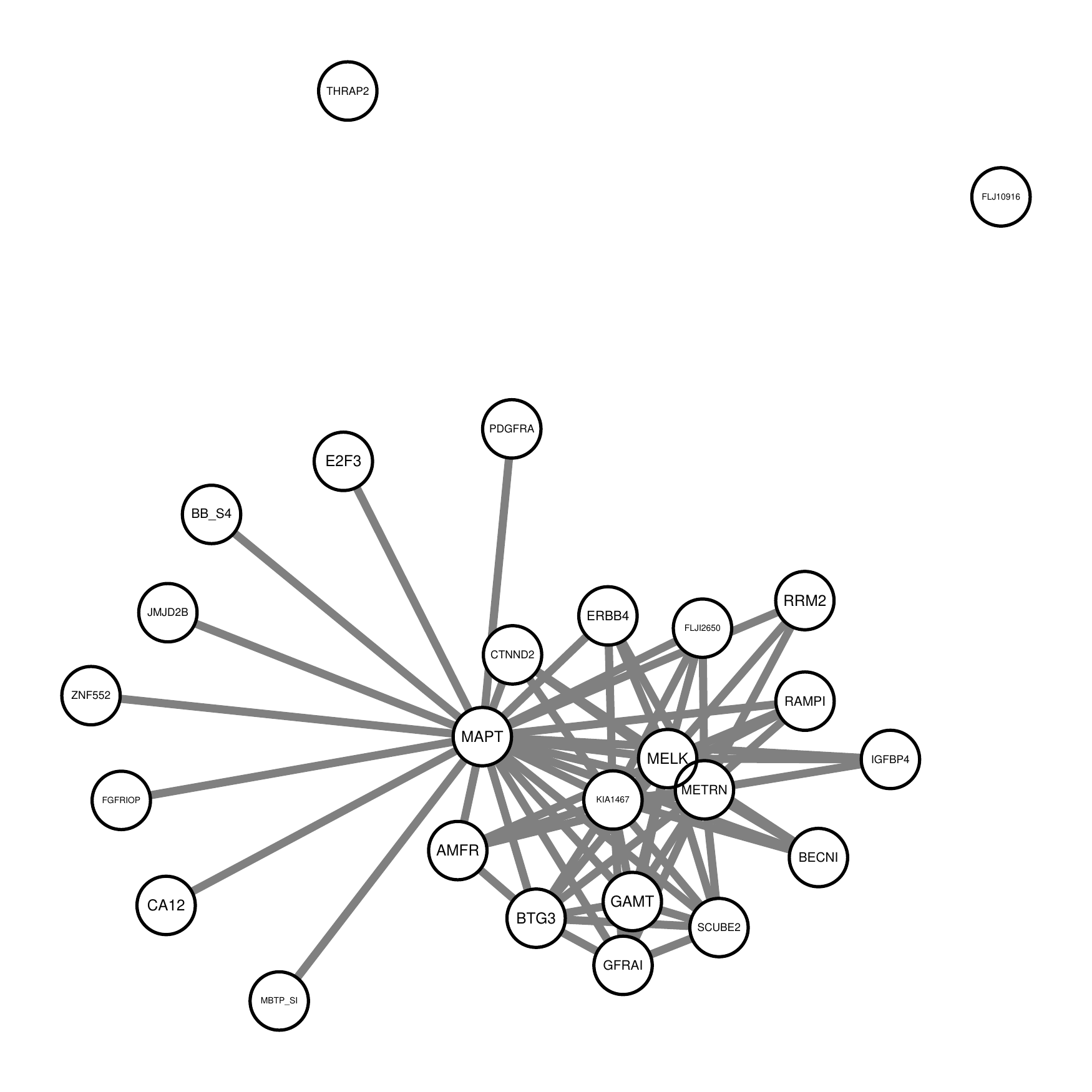}}
      &
       \subfloat{\includegraphics[width=0.28\textwidth]{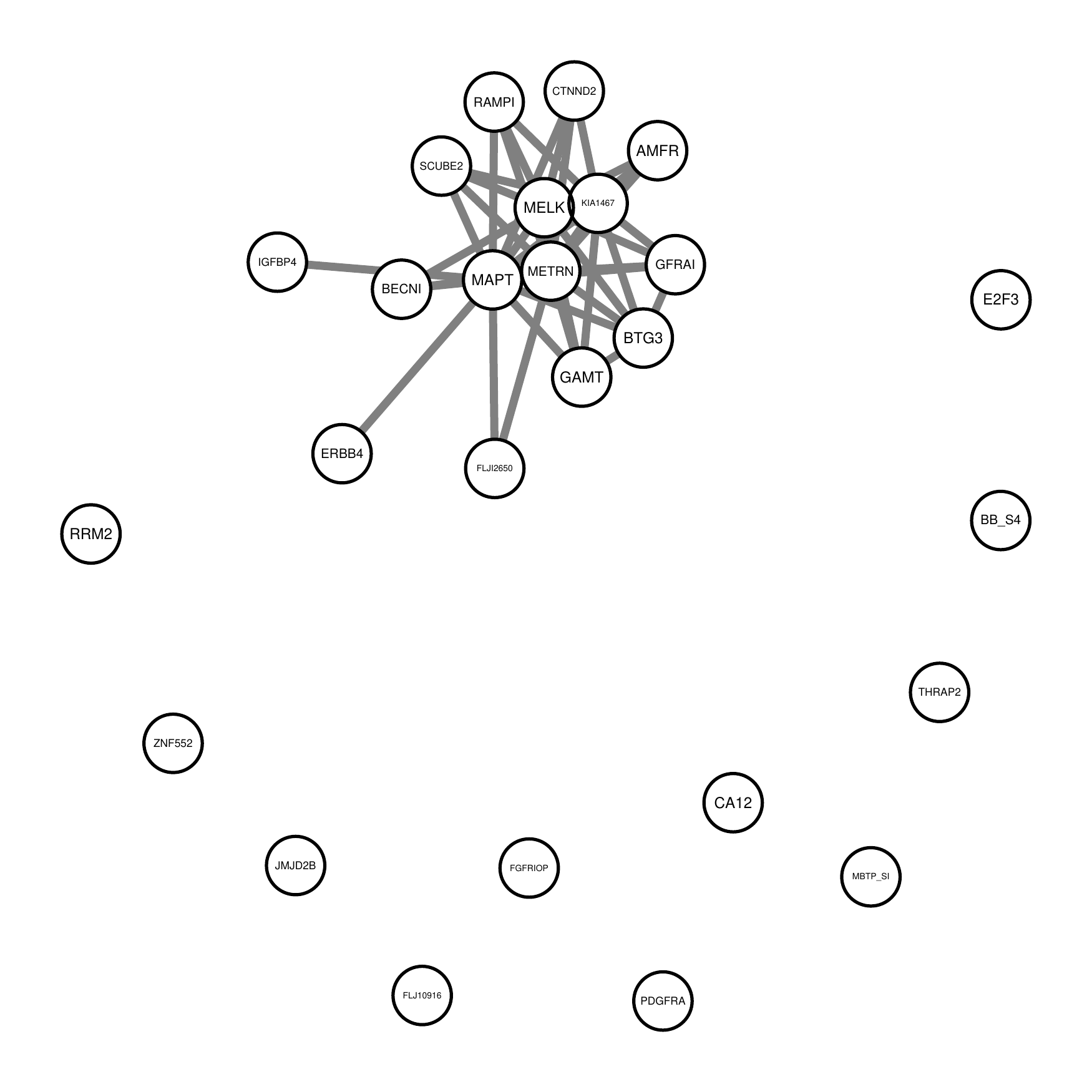}}
    &
      \subfloat{\includegraphics[width=0.28\textwidth]{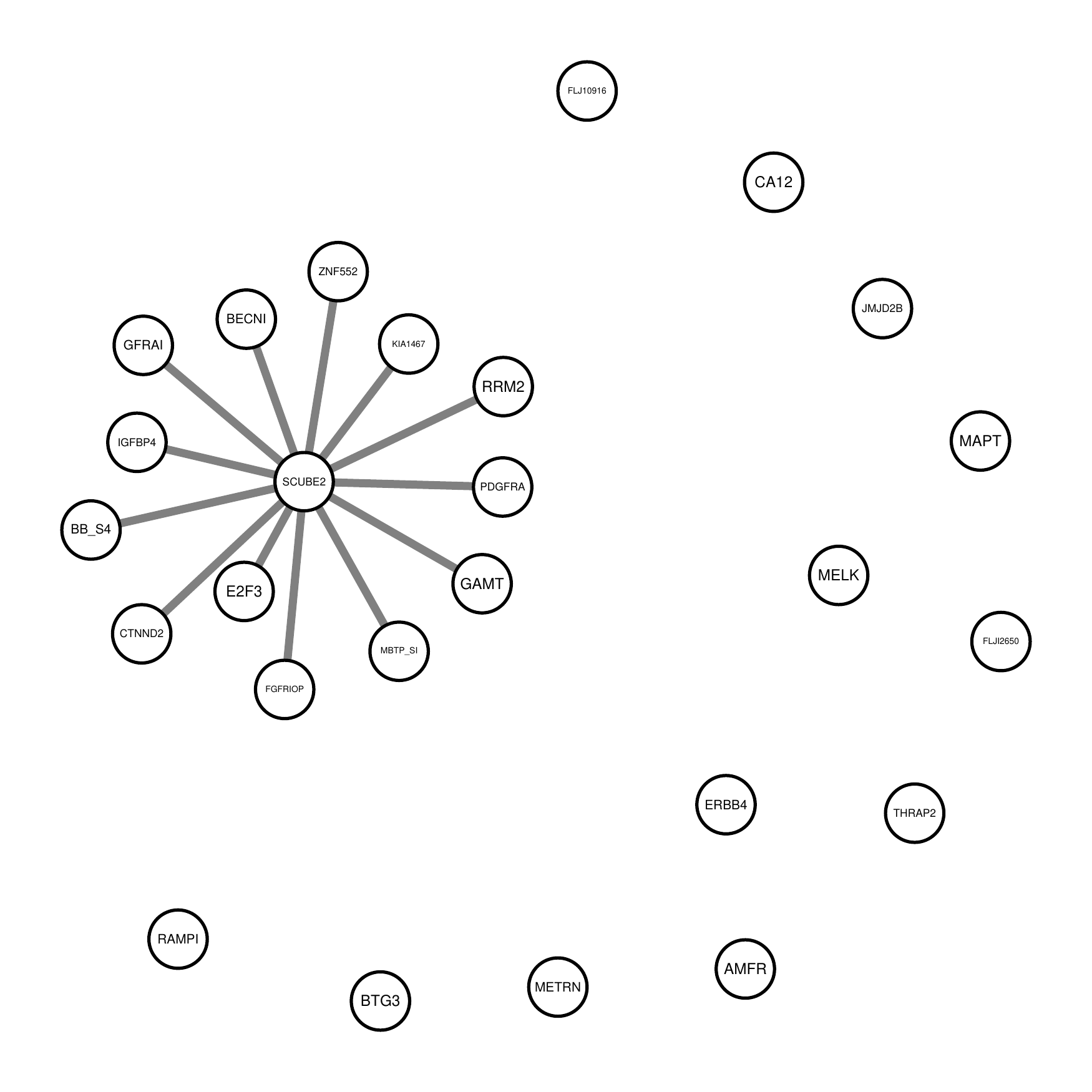}}
 \\  \hline \hline
\end{tabular}
\end{sidewaysfigure}

Figure \ref{res-app} shows the inferred networks of the three methods for studying pCR and not-pCR classes.
The penalty values are chosen by the revised cross-validation discussed in Section \ref{sec:selection}.
The results of the proposed WGLasso method indicate that the structure of the not-pCR network is more centered than that of the pCR-network.
All the edges in the network from not-pCR connect with SCUBE$2$, which implies its significant role in the study.
The network for the pCR case involves more genes and the pairwise relations are discovered between BTG$3$, METRN and MELK, also among BTG$3$, SCUBE$2$ and IGFBP$4$, to name a few.
Our findings confirm the results in Ambroise et al.\,(2009), where SCUBE$2$ is also identified as a hub gene for the not-pCR class and the cluster structure is discovered for the pCR class.
In contrast, results from the glasso and the tlasso methods tend to give less sparse structures, where more gene connections are identified in either the pCR or not-pCR networks.
Moreover, in this case study, we have an example of a mixed data set with non-pCR class as the majority portion (more than 75\%) of the data. To verify whether our proposed method can be well suited to identify the largest portion of the data that `matches' the model, we tried our method on the complete data (both pCP and non-pCR classes) as well. As expected, it successfully recovered the same findings as we applied our method to non-pCR class only, i.e., SCUBE$2$ was identified as the hub gene.

To further elaborate the usefulness of the proposed method in exploring the gene network structure, we report in Figure \ref{res-app-wgl} the estimated gene networks under \nx{low ($\rho = 0.1$), middle ($\rho = 0.6$) and high-level ($\rho = 0.9$)} values of the tuning parameter $\rho$ for both pCR and non-pCR.
For the pCR cluster, gene BTG$3$ consistently shows its importance to the network for all levels of penalty values.
This further validates the significance of BTG$3$, as indicated in  Ambroise et al.\,(2009).
For the non-pCR cluster, we observe that genes MAPT, METRN, and MELK tend to be more significant in the low and middle penalized networks, and gene SCUBE$2$ is identified as the key one in the high level network. This may provide additional information on potentially important genes when further investigating the network structure of the two classes.

\section{Discussion} \label{sec:concl}
In this work, we proposed robust estimation for sparse precision matrix by maximizing an $l_1$ regularized objective function with weighted sample covariance matrix. The robustness of our proposed objective function can be justified from a nonparametric perspective of the ISE criterion.
Adopting a majorization-minimization technique, we solved a relaxed optimization problem and developed an iteratively weighted graphical lasso estimation procedure.
Under such a formulation, the weights can be updated automatically in each iteration, which effectively reduces the influence of potential abnormal observations. This opens up a new possibility to better infer graphical models. Note that we adopt the $l_1$ penalty on the objective function to pursue the sparse structure of the graphical model.
Other penalties can also be considered into the proposed method, such as the elastic net (Zou and Hastie, 2005), which is a mixture of $l_{1}$ and $l_{2}$ penalties.

We show that the proposed estimator is asymptotically consistent, which is confirmed in the numerical experiments.
Additionally, an extensive simulation study is conducted to show that the proposed estimate is robust to abnormal observations in comparison with conventional methods.
In terms of computational efficiency, our simulation experience is that the computation time of the proposed algorithm is similar to that of the glasso method, but much faster than the tlasso method.
Note that the tlasso method appears to require more computational time because it uses the EM algorithm.

Finally, the current theoretical results are not very surprising since the approximated objective function can be considered as likelihood based weighted graphical lasso. However, the original optimization problem is not a likelihood based function as it considers the summation of probabilities. The theoretical investigation will be much more challenging. A possible direction is to connect the summation of probabilities to the density power divergence (Basu et al., 1998).
It will be also interesting to investigate asymptotic results on the direction of characterizing the break-down behavior of the estimator as in \"{O}llerer and Croux (2015).
However, in our method of weighted graphical lasso, weights keep changing over iterations, which requires some new tools for investigating the break-down properties.
Loh and Tan (2018) has investigated the break-down behavior of a one-step robust graphical lasso.
Their theoretical techniques could be helpful for further investigating theoretical properties of our proposed method.

\appendix
\section{The Proof of Theorem \ref{thm:consistency}}\label{sec:app1}
To prove Theorem \ref{thm:consistency}, we first establish the asymptotic property of the solution of the objective function in \eqref{eq:obj-wgl-3rd} as in Lemma \ref{thm:consistency2} below.
\begin{lemma}\label{thm:consistency2}
	Let $\tilde{\bm \Omega}$ denote the solution of the objective function in \eqref{eq:obj-wgl-3rd} with an initial estimator ${\bm \Omega}^{(0)}$ which is a consistent estimator of $\bm \Omega$.
	If $\sqrt{n}\rho \rightarrow \rho_0 \geq 0 $ as $n \rightarrow \infty$, \begin{equation*}
	\sqrt{n}(\tilde{\boldsymbol{\Omega}}-\boldsymbol{\Omega}) \rightarrow_d \argmin_{\bm{U}=\bm{U}^T} ~ V(\bm U),
	\end{equation*}
	where $V(\bm U)=\mbox{tr}(\textbf{U}\boldsymbol{\Sigma} \textbf{U} \boldsymbol{\Sigma})+\mbox{tr}(\textbf{U}\textbf{Q})+\rho_0\sum_{i,j}\{ u_{ij}\mbox{sign}(c_{ij})I(c_{ij}\neq 0)+|u_{ij}|I(c_{ij}=0) \}$, where $\textbf{Q}$ is a random symmetric $p \times p$ matrix such that $vec(\textbf{Q})\sim N(0, \boldsymbol{\Lambda})$, and $\boldsymbol{\Lambda}$ is such that $cov(q_{ij}, q_{i'j'})=\left(\frac{2}{\sqrt{3}}\right)^p cov(X^{(i)}X^{(j)}, X^{(i')}X^{(j')})$.
\end{lemma}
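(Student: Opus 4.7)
The plan is to use the convex argmin convergence technique of Hjort and Pollard (1993) / Knight and Fu (2000). Since the WGLasso objective $Q_n(\bm \Omega) = -\log|\bm \Omega| + \tr(\bm \Omega \bm S^*) + \rho\|\bm \Omega\|_1$ is convex in $\bm \Omega$ for the fixed weighted covariance $\bm S^*(\bm \Omega^{(0)})$, the rescaled and re-centered objective
\begin{equation*}
F_n(\bm U) \;=\; n\bigl[Q_n(\bm \Omega + n^{-1/2}\bm U) - Q_n(\bm \Omega)\bigr]
\end{equation*}
is convex in the symmetric local parameter $\bm U$, and its minimizer is exactly $\sqrt n(\tilde{\bm \Omega} - \bm \Omega)$. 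If $F_n \to_d V$ pointwise in $\bm U$ and $V$ has a unique minimizer, convexity upgrades pointwise convergence to distributional convergence of argmins, giving the claim.

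Next I would carry out a term-by-term expansion. A second-order Taylor expansion of the log-determinant around $\bm \Omega$ yields
\begin{equation*}
n\bigl[-\log|\bm \Omega + n^{-1/2}\bm U| + \log|\bm \Omega|\bigr]
  \;=\; -\sqrt n\,\tr(\bm \Sigma \bm U) + \tfrac{1}{2}\tr(\bm \Sigma\bm U\bm \Sigma\bm U) + o(1),
\end{equation*}
the trace term contributes exactly $\sqrt n\,\tr(\bm S^*\bm U)$, and the $\ell_1$ piece $n\rho\bigl[\|\bm \Omega + n^{-1/2}\bm U\|_1 - \|\bm \Omega\|_1\bigr]$ converges, via $\sqrt n \rho \to \rho_0$ and the elementwise right-derivative of $|\cdot|$, to $\rho_0\sum_{i,j}\{u_{ij}\mbox{sign}(c_{ij})I(c_{ij}\neq 0) + |u_{ij}|I(c_{ij}=0)\}$, exactly as in Knight and Fu (2000). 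Combining these pieces yields $F_n(\bm U) = \tfrac{1}{2}\tr(\bm \Sigma\bm U\bm \Sigma\bm U) + \tr(\bm U\bm W_n) + \rho_0(\cdots) + o_p(1)$, with $\bm W_n := \sqrt n\,(\bm S^* - \bm \Sigma)$; a positive scaling of the limit leaves the argmin unchanged, matching the form of $V$.

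The crucial step is to establish $\bm W_n \to_d \bm Q$ with the stated covariance. I would decompose
\begin{equation*}
\bm W_n \;=\; \sqrt n\bigl[\bm S^*(\bm \Omega^{(0)}) - \bm S^*(\bm \Omega)\bigr] + \sqrt n\bigl[\bm S^*(\bm \Omega) - \bm \Sigma\bigr],
\end{equation*}
apply the delta method with the consistency of $\bm \Omega^{(0)}$ and smoothness of $\bm \Omega^{(0)}\mapsto \bm S^*(\bm \Omega^{(0)})$ to control the first term, and apply a multivariate central limit theorem to the centered iid pieces $w_i \bm x_i \bm x_i^T$ in the second. The factor $(2/\sqrt 3)^p$ falls out of the second moment of the self-normalized weight: a direct Gaussian integral gives $E[w_i^2] \to 2^p\,E[\exp(-\bm X^T\bm \Omega\bm X)] = 2^p\cdot 3^{-p/2} = (2/\sqrt 3)^p$, which then scales the Gaussian fourth moments $\mathrm{cov}(X^{(i)}X^{(j)}, X^{(i')}X^{(j')})$ to produce the claimed $\bm \Lambda$.

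The main obstacle is this stochastic analysis of the self-normalized weighted sample covariance: the weights involve a ratio whose denominator is itself a sample average and whose numerator depends on the plug-in $\bm \Omega^{(0)}$, so the linearization must simultaneously control the plug-in error at the $\sqrt n$ rate, decouple the randomness in the numerator and denominator, and verify that neither source inflates the asymptotic covariance beyond the stated form. Once $\bm W_n \to_d \bm Q$ is in hand, the quadratic form $\tr(\bm U \bm \Sigma\bm U\bm \Sigma)$ is strictly convex in $\bm U$ (since $\bm \Sigma$ is positive definite), so $V$ has an a.s.\ unique minimizer, and the argmin theorem for convex processes delivers $\sqrt n(\tilde{\bm \Omega} - \bm \Omega) \to_d \argmin_{\bm U = \bm U^T} V(\bm U)$.
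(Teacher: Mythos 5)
Your proposal follows essentially the same route as the paper's own proof: the local reparametrization $\bm U=\sqrt n(\bm\Omega'-\bm\Omega)$, the Yuan--Lin expansion of the log-determinant, the exact linear expansion of the trace term, the Knight--Fu limit of the $\ell_1$ term under $\sqrt n\rho\to\rho_0$, a CLT for $\sqrt n(\bm S^*-\bm\Sigma)$ whose covariance is inflated by the limiting second moment of the weights, and the convex-argmin argument; your Gaussian-integral computation $E[w_i^2]\to 2^p\,3^{-p/2}=(2/\sqrt3)^p$ is exactly the paper's Lemma \ref{lem:w}. Within this shared architecture, however, two points need repair. First, your Taylor expansion carries the (correct) factor $\tfrac12$, so your limiting functional is $\tfrac12\tr(\bm\Sigma\bm U\bm\Sigma\bm U)+\tr(\bm U\bm Q)+\rho_0(\cdots)$, which is \emph{not} a positive multiple of the stated $V(\bm U)=\tr(\bm U\bm\Sigma\bm U\bm\Sigma)+\tr(\bm U\bm Q)+\rho_0(\cdots)$: a scalar rescaling would multiply all three terms, whereas here only the quadratic term differs, and the two functionals have different minimizers. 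The remark that ``a positive scaling of the limit leaves the argmin unchanged'' therefore does not reconcile your expansion with the lemma's $V$; you must either keep the $\tfrac12$ throughout (and obtain a correspondingly modified $V$) or absorb the constant consistently, rather than wave the mismatch away.

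Second, the step you yourself call crucial, $\bm W_n=\sqrt n(\bm S^*-\bm\Sigma)\to_d\bm Q$, is left as a plan rather than proved, and the plan as stated has a hole: controlling the plug-in term $\sqrt n\bigl[\bm S^*(\bm\Omega^{(0)})-\bm S^*(\bm\Omega)\bigr]$ by ``the delta method with the consistency of $\bm\Omega^{(0)}$'' cannot work from consistency alone --- one needs a $\sqrt n$-rate for $\bm\Omega^{(0)}-\bm\Omega$ or an argument that the relevant derivative vanishes at the truth --- and the self-normalization of the weights (random denominator) must also be linearized and shown not to contribute at the $\sqrt n$ scale. For comparison, the paper does not carry out this analysis either: it asserts $E(\bm S^*)=\bm\Sigma$, computes $cov(q_{n,ij},q_{n,i'j'})$ as if the weights were exogenous, and then invokes Lemma \ref{lem:w} and the CLT. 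So your sketch is no less complete than the published argument at this point, but it leaves the same gap open, and your more explicit decomposition makes visible that the lemma's hypothesis (consistency of $\bm\Omega^{(0)}$ only) is not obviously sufficient to close it.
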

\begin{proof}
	Define $V_n(U)$ as
	\begin{eqnarray*}
		V_n(U)&=&-\log\left|\boldsymbol{\Omega}+\frac{\textbf{U}}{\sqrt{n}}\right|+\mbox{tr}\left\{\left(\boldsymbol{\Omega}+\frac{\textbf{U}}{\sqrt{n}}\right)\textbf{S}^*\right\}+\rho\left\|\boldsymbol{\Omega}+\frac{\textbf{U}}{\sqrt{n}}\right\|_1\\
		&& + \log|\boldsymbol{\Omega}|-\mbox{tr}(\boldsymbol{\Omega}\textbf{S}^*)-\rho\|\boldsymbol{\Omega}\|_1.
	\end{eqnarray*}
	Note that $\mbox{argmin }V_n(\bf U)=\sqrt{n}(\tilde{\boldsymbol{\Omega}}-\boldsymbol{\Omega})$.
	Using similar arguments as in the proof of Theorem 1 in Yuan and Lin (2007), it follows that
	\begin{eqnarray}\label{eq1}
	\log\left|\boldsymbol{\Omega}+\frac{\textbf{U}}{\sqrt{n}}\right|-\log|\boldsymbol{\Omega}| = \frac{\mbox{tr}(\textbf{U}\boldsymbol{\Sigma})}{\sqrt{n}}-\frac{\mbox{tr}(\textbf{U}\boldsymbol{\Sigma}\textbf{U}\boldsymbol{\Sigma})}{n}+o\left(\frac{1}{n}\right),
	\end{eqnarray}
	\begin{eqnarray}\label{eq2}
	\mbox{tr}\left\{\left(\boldsymbol{\Omega}+\frac{\textbf{U}}{\sqrt{n}}\right)\textbf{S}^*\right\}-\mbox{tr}(\boldsymbol{\Omega}\textbf{S}^*)=\mbox{tr}\left(\frac{\textbf{U}\textbf{S}^*}{\sqrt{n}}\right)=
	\frac{\mbox{tr}(\textbf{U}\boldsymbol{\Sigma})}{\sqrt{n}}+\frac{\mbox{tr}\{\textbf{U}(\textbf{S}^*-\boldsymbol{\Sigma})\}}{\sqrt{n}},
	\end{eqnarray}
	and
	\begin{eqnarray}\label{eq3}
	\rho\left\|\boldsymbol{\Omega}+\frac{\textbf{U}}{\sqrt{n}}\right\|_1 - \rho\|\boldsymbol{\Omega}\|_1 &=& \rho \sum_{i,j} \left(\left|c_{ij}+\frac{u_{ij}}{\sqrt{n}}\right|-|c_{ij}|\right) \nonumber \\
	&&=\frac{\rho}{\sqrt{n}}\sum_{i,j}\{ u_{ij}\mbox{sign}(c_{ij})I(c_{ij}\neq 0)+|u_{ij}|I(c_{ij}=0) \},
	\end{eqnarray}
	where $c_{ij}$ is the ($i$, $j$)th entry of $\boldsymbol{\Omega}$. Note that equation \eqref{eq3} holds only when $n$ is sufficiently large. Then, combining \eqref{eq1}, \eqref{eq2}, and \eqref{eq3}, we have
	\begin{equation*}
	nV_n(\textbf{U})=\mbox{tr}(\textbf{U}\boldsymbol{\Sigma}\textbf{U}\boldsymbol{\Sigma})+\mbox{tr}(\textbf{U}\textbf{Q}_n)+\sqrt{n}\rho\sum_{i,j}\{ u_{ij}\mbox{sign}(c_{ij})I(c_{ij}\neq 0)+|u_{ij}|I(c_{ij}=0) \}+o(1),
	\end{equation*}
	where $\textbf{Q}_n=\sqrt{n}(\textbf{S}^*-\boldsymbol{\Sigma})$. Let $q_{n,ij}$ denote the element of the matrix $\textbf{Q}_n$. Then,
	\begin{equation*}
	cov(q_{n,ij}, q_{n,i'j'})= \frac{\sum_{k=1}^n w_k^2}{n} cov(X^{(i)}X^{(j)}, X^{(i')}X^{(j')}),
	\end{equation*}
	where $\frac{\sum_{k=1}^n w_k^2}{n} \rightarrow \left(\frac{2}{\sqrt{3}}\right)^p$ as $n \rightarrow \infty$ due to Lemma \ref{lem:w} below. Also note that $E(\textbf{S}^*)=\boldsymbol{\Sigma}$. Therefore, $\textbf{Q}_n \rightarrow_d N(0, \boldsymbol{\Lambda})$ by the central limit theorem, and thus $nV_n(\textbf{U}) \rightarrow_d V(\bf{U})$. Since $nV_n(\textbf{U})$ and $V(\bf U)$ are both convex and $V(\bf U)$ has a unique minimum, $\mbox{argmin }nV_n(\textbf{U}) \rightarrow_d \mbox{argmin }V(\textbf{U})$. Since $\mbox{argmin }V_n(\textbf U)=\sqrt{n}(\tilde{\boldsymbol{\Omega}}-\boldsymbol{\Omega})$, it follows that $\sqrt{n}(\hat{\boldsymbol{\Omega}}-\boldsymbol{\Omega}) \rightarrow_d \mbox{argmin } V(\textbf{U})$.
\end{proof}

A careful examination of the proof of Lemma \ref{thm:consistency2} shows that the required condition to generalize the result for solving a single WGLasso problem in \eqref{eq:obj-wgl-3rd} to a sequence of WGLasso problems in Algorithm \ref{wgl} is the consistency of $\hat{\bf \Omega}$ in \textbf{Step 3}. The proof of the consistency of $\hat{\bf \Omega}$ is straightforward from the result of Lemma \ref{thm:consistency2}.

\begin{lemma}\label{lem:w}
	With an initial estimator ${\bf \Omega}^{(0)}$ which is a consistent estimator of $\boldsymbol{\Omega}$, we have
	\begin{equation*}
	\frac{1}{n}\sum_{i=1}^n w_i^2 \rightarrow \left(\frac{2}{\sqrt{3}}\right)^p \mbox{ as } n \rightarrow \infty,
	\end{equation*}
	where $w_{i} = f({\bf x}_{i}) / [\frac{1}{n} \sum_{j=1}^{n} f({\bf x}_{j})]$, where $f({\bf x}_{i})$ is the pdf of $N_p(\bm 0, {{\bf \Omega}^{(0)}}^{-1})$ when ${\bf x}={\bf x}_i$.
\end{lemma}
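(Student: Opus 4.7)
The plan is to observe that the normalizing factor $(2\pi)^{-p/2}|\boldsymbol{\Omega}^{(0)}|^{1/2}$ cancels in the weights, so that
\begin{equation*}
\frac{1}{n}\sum_{i=1}^n w_i^2 \;=\; \frac{n^{-1}\sum_{i=1}^n f({\bf x}_i)^2}{\bigl[n^{-1}\sum_{i=1}^n f({\bf x}_i)\bigr]^2}.
\end{equation*}
Hence it suffices to identify the limit of each sample average and then to evaluate a pair of Gaussian integrals. I would divide the argument into three steps.

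First, I would show that for $k=1,2$,
\begin{equation*}
\frac{1}{n}\sum_{i=1}^n f_{\boldsymbol{\Omega}^{(0)}}({\bf x}_i)^{\,k} \;\longrightarrow\; E_{{\bf X}\sim N_p({\bf 0},\boldsymbol{\Omega}^{-1})}\bigl[f_{\boldsymbol{\Omega}}({\bf X})^{\,k}\bigr]
\end{equation*}
in probability. The route I would take is a uniform law of large numbers over a small compact neighborhood $K$ of the true $\boldsymbol{\Omega}$: on $K$ the maps ${\bf x}\mapsto f_{\boldsymbol{\Omega}'}({\bf x})^k$ are uniformly bounded (each density is bounded by $(2\pi)^{-p/2}\sup_{\boldsymbol{\Omega}'\in K}|\boldsymbol{\Omega}'|^{k/2}$) and jointly continuous in $\boldsymbol{\Omega}'$, and the map $\boldsymbol{\Omega}'\mapsto E[f_{\boldsymbol{\Omega}'}({\bf X})^k]$ is continuous at $\boldsymbol{\Omega}$. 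Combining the Glivenko--Cantelli-style uniform convergence on $K$ with $P(\boldsymbol{\Omega}^{(0)}\in K)\to 1$ (from consistency of $\boldsymbol{\Omega}^{(0)}$) delivers the stated limits.

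Next, I would evaluate the two target expectations in closed form. Using the elementary identity $\int \exp(-\tfrac{a}{2}{\bf x}^T\boldsymbol{\Omega}{\bf x})\,d{\bf x}=(2\pi/a)^{p/2}|\boldsymbol{\Omega}|^{-1/2}$ for $a>0$, with $a=2$ and $a=3$ respectively, yields
\begin{equation*}
E[f({\bf X})]=\!\!\int\!\phi_{\boldsymbol{\Omega}}({\bf x})^2\,d{\bf x}=(4\pi)^{-p/2}|\boldsymbol{\Omega}|^{1/2},\qquad E[f({\bf X})^2]=\!\!\int\!\phi_{\boldsymbol{\Omega}}({\bf x})^3\,d{\bf x}=(2\pi)^{-p}3^{-p/2}|\boldsymbol{\Omega}|.
\end{equation*}
Taking the ratio, the $|\boldsymbol{\Omega}|$ factors cancel and the $\pi$-powers collapse, leaving $2^p/3^{p/2}=(2/\sqrt{3})^p$, as claimed. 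Finally, invoking the continuous mapping theorem on the ratio completes the proof.

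The main obstacle is the data dependence of $\boldsymbol{\Omega}^{(0)}$ on ${\bf x}_1,\dots,{\bf x}_n$, which prevents a direct application of the classical LLN to $\sum_i f_{\boldsymbol{\Omega}^{(0)}}({\bf x}_i)^k$. The uniform LLN step is the technical heart of the argument; once the sample averages can be legitimately replaced by their population counterparts evaluated at the true $\boldsymbol{\Omega}$, the remainder is a short Gaussian moment calculation. A mild simplification that I might use, if convenient, is to truncate to ${\bf x}_i$ in a large ball (which contains the data with probability tending to one by tightness of $N_p({\bf 0},\boldsymbol{\Omega}^{-1})$), so that the family $\{f_{\boldsymbol{\Omega}'}^{\,k}\}_{\boldsymbol{\Omega}'\in K}$ becomes equicontinuous and the uniform LLN follows from standard arguments.
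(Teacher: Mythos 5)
Your proposal is correct and follows essentially the same route as the paper's proof: write $\frac{1}{n}\sum_i w_i^2$ as the ratio of the sample averages of $f(\mathbf{x}_i)^2$ and $f(\mathbf{x}_i)$, pass to the population limits $\int f^2 = 2^{-p}\pi^{-p/2}|\boldsymbol{\Sigma}|^{-1/2}$ and $\int f^3 = 2^{-p}3^{-p/2}\pi^{-p}|\boldsymbol{\Sigma}|^{-1}$, and take the ratio to get $(2/\sqrt{3})^p$. The only difference is that you justify replacing the data-dependent $\boldsymbol{\Omega}^{(0)}$ by $\boldsymbol{\Omega}$ via a uniform law of large numbers over a compact neighborhood, which is a more careful treatment of a step the paper handles simply by invoking the law of large numbers together with consistency of $\boldsymbol{\Omega}^{(0)}$.
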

\begin{proof}
	\begin{equation*}\label{eq:w1}
	\frac{1}{n}\sum_{i=1}^n w_i^2=\frac{1}{n}\sum_{i=1}^n  \frac{(f(\textbf{x}_i))^2}{[\frac{1}{n}\sum_{j=1}^n f(\textbf{x}_j)]^2}
	=\frac{1}{[\frac{1}{n}\sum_{j=1}^n f(\textbf{x}_j)]^2}\left(\frac{1}{n}\sum_{i=1}^n (f(\textbf{x}_i))^2\right).
	\end{equation*}
	By the law of large numbers and ${\bf \Omega}^{(0)} \rightarrow {\bf \Omega}$,
	\begin{equation*}\label{eq:w2}
	\frac{1}{n}\sum_{j=1}^n f(\textbf{x}_j)
	\rightarrow \mathbb{E}f(\textbf{x}|{\bf \Omega}) =\int_{\mathbb{R}^p} [f(\textbf{x})]^2 d\textbf{x}=2^{-p}\pi^{-p/2}|\boldsymbol{\Sigma}|^{-1/2}
	\end{equation*}
	and
	\begin{equation*}\label{eq:w3}
	\frac{1}{n}\sum_{i=1}^n (f(\textbf{x}_i))^2
	\rightarrow \mathbb{E}(f(\textbf{x})|{\bf \Omega})^2 =\int_{\mathbb{R}^p} [f(\textbf{x})]^3 d\textbf{x}=2^{-p}~ 3^{-p/2}\pi^{-p}|\boldsymbol{\Sigma}|^{-1}.
	\end{equation*}
	Therefore,
	\begin{equation*}
	\frac{1}{n}\sum_{i=1}^n w_i^2 \rightarrow (2^{-p}\pi^{-p/2}|\boldsymbol{\Sigma}|^{-1/2})^{-2}~2^{-p}~ 3^{-p/2}\pi^{-p}|\boldsymbol{\Sigma}|^{-1}=2^p~ 3^{-p/2}
	=\left(\frac{2}{\sqrt{3}}\right)^p.
	\end{equation*}
\end{proof}

\clearpage

\section{The 26 Key Genes used in Section 4.1}\label{sec:app2}
\begin{table}[htbp]
\centering 
\label{tab:genes-desp}
\footnotesize
\begin{tabular}{rl} \hline
 Gene Symbol &  Gene Name \\ \hline
AMFR &  Autocrine motility factor receptor \\
BBS4 &  Bardet-Biedl syndrome 4 \\
BECN1 &  Beclin 1 (coiled-coil, myosin-like BCL2 interacting protein) \\
BTG3 &  BTG family, member 3 \\
CA12 &  Carbonic anhydrase XII \\
CTNND2 &  Catenin, delta 2 \\
E2F3 &  E2F transcription factor 3 \\
ERBB4 &  Verba erythroblastic leukemia viral oncogene homolog 4(avian) \\
FGFR1OP &  FGFR1 oncogene partner \\
FLJ10916 &  Hypothetical protein FLJ10916 \\
FLJ12650 &  Hypothetical protein FLJ12650 \\
GAMT &  Guanidinoacetate N-methyl transferase \\
GFRA1 &  GDNF family receptor 1 \\
IGFBP4 &  Insulin-like growth factor binding protein 4 \\
JMJD2B &  Jumonji domain containing 2B \\
KIAA1467 &  KIAA1467 protein \\
MAPT &  Microtubule-associated protein \\
MBTP-S1 &  Hypothetical protein \\
      MELK &  Maternal embryonic leucine zipper kinase \\
     MTRN &  Meteorin, glial cell differentiation regulator \\
    PDGFRA &  Human clone 23, 948 mRNA sequence \\
     RAMP1 &  Receptor (calcitonin) activity modifying protein 1 \\
      RRM2 &  Ribonucleotide reductase M2 polypeptide \\
    SCUBE2 &  Signal peptide, CUB domain EGF-like 2 \\
    THRAP2 &  Thyroid hormone receptor associated protein 2 \\
    ZNF552 &  Zinc finger protein 552 \\ \hline
\end{tabular}
\end{table}

\clearpage

\end{document}